\ifdefined\ReaderVersion
\documentclass[onecolumn,11pt,tightenlines,notitlepage,amsmath,amssymb,aps,prx,
 floatfix,superscriptaddress]{revtex4-2}
\else
\documentclass[reprint,amsmath,amssymb,aps,prx,floatfix,
 superscriptaddress]{revtex4-2}
\fi

\usepackage{graphicx}
\usepackage{amsthm}
\usepackage{braket}
\usepackage{tikz}
\usetikzlibrary{positioning}
\usepackage{xcolor}
\usepackage{booktabs,capt-of}
\usepackage[hidelinks]{hyperref}
\usepackage{cleveref}
\crefname{figure}{Fig.}{Figs.}
\Crefname{figure}{Figure}{Figures}
\hypersetup{pdftitle={Exponentially Compressed and Garbage-Free Alias Sampling for Polynomial States},pdfauthor={Diyi Liu et al.}}

\providecommand{\texorpdfstring}[2]{#1}

\theoremstyle{plain}
\newtheorem{theorem}{Theorem}
\newtheorem{proposition}[theorem]{Proposition}
\newtheorem{lemma}[theorem]{Lemma}
\newtheorem{corollary}[theorem]{Corollary}
\theoremstyle{definition}
\newtheorem{definition}[theorem]{Definition}

\newtheorem{example}{Example}
\theoremstyle{remark}
\newtheorem{remark}[theorem]{Remark}

\newcommand{\PIA}{O_{\mathrm{PIA}}}
\newcommand{\Ot}{\widetilde O}
\newcommand{\Mcost}{\mathsf{M}}
\newcommand{\Had}{\mathsf{H}}
\newcommand{\keep}{\text{keep}}
\newcommand{\alias}{\text{alias}}
\usepackage{xcolor}

\usepackage[normalem]{ulem}

\newcommand{\figrev}[1]{#1}
\begin{document}

\title{Exponentially Compressed and Garbage-Free 
  Alias Sampling\\ for Polynomial State Preparation}

\author{Diyi Liu}
\email{diyiliu@lbl.gov}
\affiliation{Lawrence Berkeley National Laboratory, Berkeley, California
94720, USA}
\author{Hanyu Wang}
\affiliation{Department of Computer Science, University of California,
Los Angeles, California 90095, USA}

\author{Shuchen Zhu}
\affiliation{Microsoft Quantum, Redmond, Washington 98052, USA}
\affiliation{Department of Mathematics, Duke University, Durham,
North Carolina 27708, USA}
\affiliation{Duke Quantum Center, Duke University, Durham,
North Carolina 27701, USA}

\author{Jason Cong}
\affiliation{Department of Computer Science, University of California,
Los Angeles, California 90095, USA}
\author{Wibe Albert de Jong}
\affiliation{Lawrence Berkeley National Laboratory, Berkeley, California
94720, USA}
\author{David Williams-Young}
\affiliation{Microsoft Quantum, Redmond, Washington 98052, USA}
\author{Chao Yang}
\affiliation{Lawrence Berkeley National Laboratory, Berkeley, California
94720, USA}

\date{\today}

\begin{abstract}

Coherent alias sampling implement an important subroutine for block encoding and qubitization based quantum simulation using quantum lookup table.
Related constructions exploit data lookup and matrix structure to implement
block encodings
for use in quantum singular value transformation. For a state on $n$ qubits,
constructing and storing the alias table requires both classical and quantum resources exponential in $n$.
We show that when the amplitudes are samples of a polynomial of degree at
most $d$, the table can be represented and evaluated
coherently with cost polynomial in $n$, $d$, exponentially reduced from $\Theta(2^n)$ classical and quantum cost. We construct a quantum circuit for coherently computing the alias table for polynomial states using reversible integer arithmetic. By employing fast arithmetic, we construct a deterministic circuit for preparing polynomial states over Clifford+$T$ with a complete gate count of $\tilde{O}(ndB+B^2)$, where arithmetic width $B=\tilde{O}(dn+d\log(d+1))$. In the high precision limit for state fidelity $\epsilon$ with fixed $n$ and $d$, it is sufficient have $B=\Theta(\log(1/\epsilon))$ and the complete gate count becomes $O(\log^3(1/\epsilon)\log\log(1/\epsilon))$. Additionally, the effective representation of the alias table can also be used for classical sampling algorithms. Furthermore, we develop a novel technique for uncomputing the auxilary register generated by coherent alias sampling. This technique opens a research
direction in sampling based initial state preparation.

\end{abstract}

\maketitle

\section{Introduction}
\label{sec:introduction}

Recent advances in quantum hardware~\cite{krinner2022repeated,google2023surface,bluvstein2024logical,google2025qec,bluvstein2026faulttolerant}, together with progress in quantum error correction~\cite{shor1995qec,steane1996qec,dennis2002topological,bravyi2024memory,xu2024constant,zhou2025low,xu2025batched,cervia2025magic,cain2026shor}, have made fault-tolerant simulation of physical systems on quantum computers an increasingly realistic prospect.
Applications range from quantum many-body problems in condensed matter physics~\cite{liu2025lowgate,babbush2018,bassmanoftelie2022free,ding2024singleancilla,li2025dissipative,ding2025gibbs,hwang2026gibbs,ding2026thermalground}, high-energy physics~\cite{jordan2012qft,lamm2019general,ji2020gluon,ji2023gluon,gustafson2024primitive,gustafson2025synthesis,lamm2024block}, and nuclear physics~\cite{liu2025pairing,dumitrescu2018nucleus,lu2019subatomic,roggero2019response,roggero2020neutrino} to classical fluid dynamics~\cite{li2025fluid,gaitan2020fluid,liu2021dissipative,childs2021pde,meng2026turbulence}.
Many algorithms for these problems use block encodings~\cite{liu2025lowgate,liu2025pairing,camps2024explicit,chakraborty2019power,simon2025ladder,nibbi2024block,kane2025block,sunderhauf2024block,li2025binary,nguyen2022block,kharazi2025explicit,guseynov2024efficient,della2025efficient,camps2022fable,WanBlock,li2023efficient,camps2020approximate,chan2023simulating,zhang2026lowancilla,lemieux2024quantum,clader2022quantum,vasconcelos2025methods,dumitrescu2026matrix,schillo2026tare,yang2025dictionary,nibbi2026practical}, which embed a suitably normalized Hamiltonian or another matrix of interest as a block of a larger unitary.
A linear combination of unitaries (LCU) provides one way to construct such an encoding~\cite{berry2015taylor}.
Qubitization~\cite{low2019qubitization}, quantum signal processing (QSP), and quantum singular value transformation (QSVT)~\cite{gilyen2019qsvt,martyn2021grand,low2017qsp,dong2021phase,wang2022landscape,dong2024infinite,dong2024robust,ni2026fastphase,motlagh2024generalized,sunderhauf2023generalized,dong2026minimax,ni2025inverse,alexis2024nonlinear} provide tools for transforming encoded operators, including approximating time evolution.
A related formulation implements QSVT through Hamiltonian simulation using a single ancilla qubit~\cite{chakraborty2025qsvtwithoutblockencoding}.
For Hamiltonian simulation, algorithms based on these techniques achieve optimal or near-optimal query complexity in the corresponding oracle models~\cite{berry2015taylor,low2017qsp,low2019qubitization}.

A central subroutine in qubitization based method is quantum state preparation~\cite{groverrudolph,mottonen2005preparation,shende2006synthesis,plesch2011preparation,mozafari2022decision,gleinig2021sparse,babbush2018,lowkliuchnikov2024,gosset2026optimalt,liu2026resources,zhang2022optimaldepth,liluo2025sparsesize,mcardle2026quantum,sanders2019black,yuan2023optimal,gui2024spacetime,araujo2021divide,wang2024quantum,malz2024preparation,smith2024constant,ozols2013rejection,lemieux2024quantum,alonsolinaje2026compilation,berry2018improved,cottrell2019build,wu2019variational,zhang2024circuit,della2025efficient,fang2025highorder}, the task of implementing a unitary that maps the all zero state to a target state with classically specified amplitudes,
\begin{equation}
\label{equ:state_prep}
 \ket{0^n} \to \sum_{i=0}^{2^n-1} \alpha_i \ket{i},
 \qquad \sum_{i=0}^{2^n-1} |\alpha_i|^2 = 1.
\end{equation}
A recent compilation study found that alias sampling can reduce both the
$T$ count and total gate count when constructing a \textsc{prepare} oracle
on the instances studied~\cite{liu2026resources}. Classical alias sampling
starts with a uniformly chosen index and either keeps it or returns an
alternative called its \textit{alias}~\cite{walker1977alias,vose1991alias}.
A precomputed \textit{alias table} specifies the alternative and the
\textit{keep} threshold that determines the probability of retaining each
index. The alias table redistributes uniform probability to produce the desired
distribution to the prescribed accuracy. Once the alias table is constructed, generating a sample requires only a single uniform random draw, one table lookup, and one numerical comparison.
Babbush et al.~\cite{babbush2018} implemented alias sampling coherently
by replacing uniform random sampling with uniform superpositions and the aliasing with comparison and swap operations, after loading alias table with quantum lookup table techniques~\cite{lowkliuchnikov2024,zhu2025unified}. Although the method produces
a quantum state entangled with an additional register, rather than the clean
state in \cref{equ:state_prep}, it is compatible with block encodings for
qubitization-based methods~\cite{babbush2018}. However, the additional entangled register means that coherent alias sampling
does not directly prepare a clean initial state.

We are interested in polynomial
states, which can encode initial data for wave propagation and differential
equations. Let $N=2^n$. We consider states on $n$ qubits whose amplitudes are
proportional to the values of a real polynomial $p$ of degree at most $d$:
\begin{equation}
\label{eq:polystate}
 \sum_{i=0}^{N-1}\alpha_i\ket i,
 \qquad
 \alpha_i=\frac{p(i/N)}{\sqrt{\sum_{j=0}^{N-1}p(j/N)^2}}.
\end{equation}
The polynomial has only
$d+1$ coefficients, but the state has $2^n$ amplitudes. Without taking advantage of low dimensional description  and properties of finite degree polynomial, a quantum state preparation algorithm will result in exponential cost~\cite{gosset2026optimalt}, even proved optimal for general quantum states. This motivates the question:

\vspace{1.9mm}

 \textit{Can a alias sampling based method prepare a polynomial state on a $2^n$-point grid at $\operatorname{poly}(n)$ cost while leaving no auxiliary register entangled with the output?}
\vspace{1.9mm}

Efficient polynomial state preparation by alias sampling requires more
than returning the auxiliary registers to zero.
The usual alias table contains $N=2^n$ entries, even when the
polynomial has a short description. Constructing and storing the table,
as well as implementing a direct coherent lookup, requires resources
exponential in $n$. Quantum lookup techniques offer resource
tradeoffs~\cite{lowkliuchnikov2024,zhu2025unified}, but still rely on
an explicit list of entries.

Our first contribution is a compact description of an alias table whose
entries can be computed without constructing alias table of exponential size. A
distribution admits many valid alias tables.
We choose one that
redistributes probability in increasing index order. This ordering lets
us calculate a requested table entry from cumulative probability.
We obtain this description directly from
the polynomial coefficients, without first listing the $N$ entries.

Polynomial structure makes these cumulative probability efficient to compute. The indices
whose desired probabilities lie below, above, or at the uniform value
form only $O(d)$ groups of consecutive indices. Moreover, the sum of the
squared polynomial values over any interval can be evaluated from a
polynomial formula, without visiting its individual indices. These
properties let us replace explicit table lookup by reversible arithmetic.
Classical preprocessing, storage, and coherent evaluation of an entry all
have costs polynomial in $n$, $d$, and the coefficient bit length.

\begin{theorem}[Compact alias table, informal version of
\Cref{thm:implicit-representation}]
\label[theorem]{thm:intro-implicit-oracle}
Consider a polynomial state on $n$ qubits with degree bound $d$, where
$n,d\ge1$, and coefficients supplied in finite binary precision.
The ordered alias table has a classical description using $O(d(n+B))$
bits, computable in time polynomial in $n,d,B$. Here $B$ is a sufficient
bit width for the exact integer arithmetic. It can be chosen polynomial
in $n$, $d$, and the number of bits used to specify each coefficient.

Using this description, a quantum circuit can compute any table entry,
including when the input index is in superposition, using $\Ot(ndB)$
gates with fast arithmetic. Here $\Ot$ suppresses logarithmic factors.
No explicit list of the $2^n$ entries is constructed or stored.
\end{theorem}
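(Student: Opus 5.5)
We will prove the statement by making the ``ordered'' alias table explicit through cumulative sums and then showing that every ingredient can be evaluated by exact integer arithmetic. We first scale the coefficients to integers, so $q(i)=2^{c}p(i/N)$ is an integer polynomial in $i$ of degree $d$, and we work with the integer weights $w_i=q(i)^2$ and total $W=\sum_{j<N}w_j$. In these units the uniform level is $W/N$, and we scale once more by $N$ so that comparisons involve $Nw_i$ against $W$.

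\textbf{Step 1: construct the ordered alias table.} Call index $i$ \emph{light} if $Nw_i<W$ and \emph{heavy} otherwise. The ordered table assigns the deficits of the light indices, in increasing index order, to the surpluses of the heavy indices, also in increasing order. This is a ``two-pointer'' construction. The resulting rule is as follows. For a light $i$, let
\[
D(i)=\sum_{j<i,\ j\ \text{light}}(W-Nw_j)
\]
be the cumulative deficit before $i$, and let
\[
S(h)=\sum_{j\le h,\ j\ \text{heavy}}(Nw_j-W)
\]
be the cumulative surplus. The alias of $i$ is the smallest heavy $h$ with $S(h)>D(i)$, and the keep value is $Nw_i$ (numerator over $W$). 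We must verify that this is a valid alias table: each heavy index must retain exactly $w_h$ of the mass, including the portion it receives via its own keep. If the standard Vose ordering instead lets a heavy index become light once it is exhausted, we would adopt the variant in which heavy remainders are also spliced into the order. That variant still yields an alias given by a first crossing of a monotone cumulative function. We commit to the simplest rule that is valid, namely a telescoping merge of the two cumulative sequences.

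\textbf{Step 2: compact description.} The polynomial $N q(x)^2-W$ has degree $2d$, so it has at most $2d$ real roots. Consequently the light and heavy indices form $O(d)$ maximal intervals. Classically we locate the boundaries by integer root isolation: Sturm sequences or bisection on sign changes, each using $O(n)$ evaluations. Within an interval, $\sum_{j=a}^{b}q(j)^2$ is a polynomial of degree $2d+1$ in $b$, given by Faulhaber's formula. Hence $D$ and $S$ restricted to an interval are explicit polynomials plus a precomputed prefix constant. The description stores the $O(d)$ breakpoints ($n$ bits each) and the per-interval prefix constants ($B$ bits each), giving $O(d(n+B))$ bits in total. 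The coefficients themselves are known and need not be stored separately. The bit width $B$ is bounded by $\log_2$ of the largest intermediate value, which is $O(d(n+c))$ plus the $\log d$ terms that come from the Faulhaber denominators; these can be cleared by multiplying through by $(2d+2)!$.

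\textbf{Step 3: quantum evaluation.} On input $\ket{i}$, we first compare $i$ against the $O(d)$ breakpoints to select the interval coherently. We then evaluate $D(i)$ by Horner's rule, using $O(d)$ multiplications of width $B$; with fast multiplication this costs $\Ot(dB)$ per stage. We also compute $Nw_i$ for the keep value. The alias requires inverting $S$: we find the heavy interval whose prefix range contains $D(i)$ by $O(d)$ comparisons against the stored constants. Inside that interval we solve $S(h)>D(i)$ by binary search over $h$ using $n$ rounds, each consisting of an $O(d)$-multiplication Horner evaluation. This gives $\Ot(ndB)$ gates in total. All intermediate values are then uncomputed by Bennett's trick, which at most doubles the cost.

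The main obstacle is Step 1, specifically proving that the monotone merge rule gives a correct alias table whose entries depend only on cumulative sums. The reason is that the classical Vose process reclassifies heavy indices as their surplus is consumed. We would attempt a direct mass-balance proof: for each $x$, the total probability returned equals $w_x/W$. We would then check that the resulting first-crossing characterization stays monotone, because binary search relies on that monotonicity.
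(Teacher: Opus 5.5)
Your light-index rule is the paper's: $\keep(\ell_r)=v_{\ell_r}$, with alias the first heavy index whose cumulative surplus exceeds the deficit accumulated before $\ell_r$. The gap is that the proposal never specifies $\keep$ and $\alias$ for the heavy indices, and without them there is no valid table. Each column has a single alias, so a light column sends its whole deficit $\delta_r$ to one heavy index even when the block $[D(r-1),D(r))$ straddles a boundary $E(s)$. In the paper's running example ($N=8$, $P(i)=i$, $Z=140$), index $0$ sends all $140$ of its fragments to index $5$, which needs only $e_1=60$. If heavy columns kept everything, the fiber of $5$ would have $280\neq200$ fragments.

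The missing idea is a compensating transfer out of the heavy column itself. With $\rho(s)=\min\{r\ge1:D(r)\ge E(s)\}$, set
\[
 \keep(h_s)=Z+E(s)-D(\rho(s)),\qquad
 \alias(h_s)=h_{s+1}\quad(s<m_{\mathcal H}),
\]
so $h_s$ forwards the overshoot $\tau_s=D(\rho(s))-E(s)$ to the next heavy index. Conservation gives $\tau_{m_{\mathcal H}}=0$, so the last heavy index needs no alias. One then shows that the light ranks aliased to $h_s$ are exactly $\rho(s-1)<r\le\rho(s)$. Validity is then the telescoping identity $\tau_{s-1}+[D(\rho(s))-D(\rho(s-1))]+\keep(h_s)=Z+e_s$. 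The bracket $D(\rho(s)-1)<E(s)\le D(\rho(s))$ gives $0\le\tau_s<\delta_{\rho(s)}\le Z$, so the threshold is admissible. This is the splice-the-remainder variant you allude to. The identity $\tau_{s-1}+D(r)-D(\rho(s-1))=D(r)-E(s-1)$ shows that splicing never shifts the alignment, so everything can still be read off the unspliced $D$ and $E$. You flagged this as the main obstacle but left it unresolved, and it is the core of the statement.

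The omission carries into Step~3, which only treats light inputs. For a heavy input the alias is cheap: a rank-to-index conversion for $h_{s+1}$, possibly in the next heavy group. The keep value, however, requires computing $E(s)$ and then a second reversible monotone search, of $D$ against the threshold $E(s)$ with the non-strict inequality. This costs another $O(nd\,\Mcost(B))$, so your $\Ot(ndB)$ bound survives once it is included.

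One more pitfall: you count balanced indices as heavy. A zero-surplus index preceding every positive-surplus one then has $E(s)=0$, hence $\rho(s)=1$ and a spurious transfer of $\delta_1$. Either separate the balanced indices, as the paper does, or allow $r=0$ in the definition of $\rho$.

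The rest of your plan is sound and close to the paper's route. That includes the $O(d)$ sign groups, and closed-form interval sums with a cleared denominator in place of the paper's integer binomial coefficients $\kappa_k$. It also includes locating the group from stored prefix constants before a binary search inside it.
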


\Cref{thm:intro-implicit-oracle} removes the need for an explicit table.
Our second contribution returns the auxiliary registers to zero. Because
the sampler starts with equal amplitudes, the auxiliary state for each
output is uniform over all sampling inputs producing that output. This
set is called its fiber. We assign consecutive integer ranks to these
inputs. The output and rank uniquely determine the original input, so
inverse ranking can reconstruct and clear the original input registers.
The remaining rank state is uniform over a consecutive range whose length
can be computed from the output. Inverse uniform state preparation then
clears the rank register without measurement or changing the output
amplitudes.

For a generic alias table, computing these ranks and their inverses may
require another large data structure. Our ordering avoids this problem:
each fiber has at most three blocks, whose boundaries are determined by
the cumulative totals already used to evaluate table entries. The same
compact description therefore supports efficient ranking and inverse
ranking. Sanders et al.~\cite{sanders2019black} used inverse uniform
preparation to remove a comparison record. Here, ranking puts the alias
sampler's auxiliary state into the uniform form needed for this erasure.
\Cref{thm:intro-clean-preparation} gives the complete preparation
cost, including these cleanup operations.

\begin{theorem}[Complete polynomial state preparation, informal version of
\Cref{thm:fast-clean-preparation,cor:precision-regimes}]
\label[theorem]{thm:intro-clean-preparation}
Using the compact description from \Cref{thm:intro-implicit-oracle},
we can prepare the target polynomial state $\ket\psi$ with a deterministic
Clifford+$T$ circuit, without postselection. For any $0<\epsilon<1$, its
joint output is within Euclidean distance $\epsilon$ of
\begin{equation}
 \ket\psi\otimes\ket{0}_{\mathrm{aux}},
 \label{eq:intro-clean-map}
\end{equation}
where $\ket{0}_{\mathrm{aux}}$ denotes all auxiliary registers in the
zero state. With fast arithmetic, the complete gate count is
\[
 \Ot\!\left(ndB+B\bigl(B+\log(1/\epsilon)\bigr)\log(B/\epsilon)\right).
\]
This includes uniform state preparation, coherent angle computation,
rotation synthesis, and uncomputation. Here $\Ot$ hides factors
polylogarithmic in $B$ and $\log(1/\epsilon)$. For inverse polynomial
accuracy in $B$, the bound simplifies to $\Ot(ndB+B^2)$.
\end{theorem}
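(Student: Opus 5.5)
The construction will be assembled from three pieces: the compact alias oracle of \Cref{thm:intro-implicit-oracle}, a coherent alias sampler built on top of it, and a ranking-based cleanup. The goal is to pin the error down to a single source, rotation synthesis.

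\textbf{Step 1: prepare the sampling state.} Prepare a uniform superposition over the index $i\in[0,N)$ and over a threshold register $u\in[0,2^B)$. Because $N=2^n$, the index part needs only Hadamards. The threshold width is chosen so that every keep probability is an integer multiple of $2^{-B}$, which the exact integer description allows. Then apply the oracle of \Cref{thm:intro-implicit-oracle} to write $(\keep_i,\alias_i)$. A comparison $u<\keep_i$ followed by a controlled swap produces an output register $x$ with
\[
\Pr[x]=p(x/N)^2\Big/\sum_j p(j/N)^2 .
\]
All of this is exact integer arithmetic costing $\Ot(ndB)$. We now hold $\sum_x \sqrt{\Pr[x]}\,\ket{x}\otimes\ket{\phi_x}$, where $\ket{\phi_x}$ is the uniform superposition over the fiber of $x$.

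\textbf{Step 2: remove the garbage.} Using the ordered structure, each fiber consists of at most three consecutive blocks. Their boundaries follow from the cumulative sums already evaluated by the oracle. Compute a rank $r$ of $(i,u)$ within the fiber of $x$ and uncompute $(i,u)$ by inverse ranking, which is a reversible map of comparable arithmetic cost. The remaining register is uniform over $[0,m_x)$, where $m_x$ is computable from $x$. Clear it with an inverse uniform state preparation controlled on $m_x$, in the manner of Sanders et al.

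\textbf{Step 3: handle the sign.} Alias sampling yields only $|\alpha_x|$, so the sign of $p(x/N)$ is attached by a phase flip after coherently evaluating $p$. This costs $\Ot(dB)$ per evaluation using Horner's rule.

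\textbf{Main obstacle: exact uniform preparation for arbitrary $m_x$.} When $m_x$ is not a power of two, uniform preparation needs an amplitude-amplification step with a rotation angle $\theta(m_x)$. That angle must be computed coherently, for example by arcsine via Newton or CORDIC at precision $O(B+\log(1/\epsilon))$, and then implemented through phase-gradient rotations synthesized to accuracy $\epsilon$. I expect this to dominate the $\epsilon$-dependence. Computing the angle costs $B(B+\log(1/\epsilon))$ bit operations per iteration, and bit-serial rotation synthesis contributes the $\log(B/\epsilon)$ factor, giving the second term of the bound.

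\textbf{Error accounting.} The angle and rotation errors are bounded by $\epsilon$ through the triangle inequality, since every other component is exact reversible arithmetic. The result is distance at most $\epsilon$ from $\ket\psi\otimes\ket0_{\mathrm{aux}}$. Summing the gate costs gives the stated bound. When $\log(1/\epsilon)=O(\log B)$, the second term becomes $\Ot(B^2)$.
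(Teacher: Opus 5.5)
\textbf{Gap in Step 1.} Steps 2 and 3 follow the paper's route. However, Step 1 makes a false claim, and your error accounting depends on it. Your sampler maps $N2^B=2^{n+B}$ equally weighted inputs $(i,u)$ deterministically to outputs, so every output probability is an integer multiple of $2^{-(n+B)}$. The target $P(x)^2/Z$ has denominator $Z$, which is generally not a power of two: in the running example $Z=140$ and $P(1)^2/Z=1/140$. Thus no alias table with a dyadic threshold register reproduces the distribution exactly. The fiber sizes $N2^B\Pr[x]$ are not integers, and your assertion that everything except the uniform-state rotations is exact fails.

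\textbf{The missing idea.} The paper changes the capacity. Each source index gets exactly $Z$ positions, and the position register is prepared in $U_Z\ket{0}$, a uniform superposition over the non-dyadic range $[Z]$. Then the fiber of $y$ contains exactly $v_y=NP(y)^2$ fragments, and the ordered table, ranking, and volume oracle are built for these integer weights. This costs a second non-dyadic uniform-state preparation. Its angles are classical constants, but its rotations are synthesized, so its error must be budgeted together with that of the controlled inverse $\mathcal U^\dagger$; the paper allots $\epsilon/2$ to each. Keeping a dyadic register instead would require three things your proposal does not provide:
\begin{enumerate}
\item rounding the weights consistently, for example via cumulative floors;
\item rebuilding the classification groups, table, ranking, and volume oracle for the rounded weights;
\item charging the resulting amplitude error, of order $2^{-B/2}$, to the budget.
\end{enumerate}

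\textbf{Cost accounting.} Your derivation of the second cost term also does not hold together. Angle-computation cost and rotation-synthesis cost add rather than multiply. With fast arithmetic, one multiplication at width $B+\log(1/\epsilon)$ costs $\Ot(B+\log(1/\epsilon))$, not $B(B+\log(1/\epsilon))$. In the paper, the product $B(B+\log(1/\epsilon))\log(B/\epsilon)$ comes from a $B$-level Grover--Rudolph tree. Every level coherently computes an arctangent using $q=O(\log(B/\epsilon))$ Horner steps at width $L=\Theta(B+q)$, then applies $O(q)$ controlled rotations. Your alternative, a single coherently computed angle with one round of amplitude amplification using $m$-controlled Hadamards and a flag qubit, is legitimate for variable $m$. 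It would also be cheaper than the tree, so the stated upper bound would still follow, but only after Step 1 is repaired.
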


With the polynomial and $n,d$ fixed, the gate count grows nearly cubically
in the number of accuracy bits. A total preparation error of at most
$\epsilon$ can be achieved using
$O(\log^3(1/\epsilon)\log\log(1/\epsilon))$ gates. This error includes both
rounding the real coefficients and approximating the circuit's rotations.

Our construction makes alias sampling an efficient route to clean
polynomial states. Related approaches to state preparation use conditional
rotations based on interval probabilities~\cite{groverrudolph},
matrix product states~\cite{gonzalezconde2024polynomial}, or quantum
singular value
transformation~\cite{gonzalezconde2024polynomial,mcardle2026quantum}.
Our exponential reduction is relative to an explicit alias table when
the degree and coefficient bit length are polynomially bounded in $n$.
The same compact description also supports classical alias sampling
without constructing the full table.

The remainder of the paper is organized as follows.
Section~\ref{sec:polynomial-input} introduces the integer input and the
sampling record. Section~\ref{sec:ordered-table} constructs the ordered
table, and Section~\ref{sec:implicit-oracle} gives its compact description
and coherent evaluation. Section~\ref{sec:clean-preparation} develops
fiber ranking and proves the complete state preparation bounds. The
appendices provide a worked example, circuit implementation details,
and bounds on input accuracy and arithmetic width.

\section{Polynomial states and the exact fragment model}
\label{sec:polynomial-input}


We begin with real polynomial amplitudes on a uniform grid. A common scaling
turns their dyadic approximations into integers, and the squares of these
integers become multiplicities. The normalized amplitudes themselves need not
be dyadic rational numbers. Throughout, $n\ge1$, $N=2^n$, and $d\ge1$; the
zero-qubit case is trivial.

\subsection{The integer polynomial model}

Let $[N]=\{0,\ldots,N-1\}$, where $N=2^n$. The exact input consists of
$n$, a degree bound $d$, and integers $a_0,\ldots,a_d$. Define
\begin{equation}
 P(i)=\sum_{k=0}^d a_kN^{d-k}i^k,\qquad i\in[N].
 \label{eq:integer-P}
\end{equation}
We assume that $P$ is nonzero on at least one grid point. Our target is
\begin{equation}
 \ket\psi=\frac{1}{\sqrt Z}\sum_{i\in[N]}P(i)\ket i,
 \qquad Z=\sum_{i\in[N]}P(i)^2>0.
 \label{eq:psi-P}
\end{equation}
The three quantities used in the counting construction are
\begin{equation}
 Q(i)=P(i)^2,\qquad Z=\sum_i Q(i),\qquad v_i=NQ(i).
 \label{eq:intro-weights}
\end{equation}
Here $Q(i)$ is an integer weight. The \textit{capacity} $Z$ is the number of
equally weighted fragment positions attached to each source index, and $v_i$
is the number of fragments that must be assigned to output $i$.
More generally, we write $[M]=\{0,\ldots,M-1\}$ for a nonnegative integer
$M$, with $[0]=\varnothing$. The coefficient bit length is
\begin{equation}
 \beta=\max_{0\le k\le d}\left\lceil\log_2(|a_k|+1)\right\rceil.
 \label{eq:coefficient-width}
\end{equation}

This integer input model can approximate the normalized amplitudes obtained
from polynomials with arbitrary real coefficients.
For $f(x)=\sum_{k=0}^d c_kx^k$ with nonzero sampled vector, set
$C=\max_k|c_k|>0$ and round $2^bc_k/C$ to an integer $a_k$. Then
\begin{equation}
 p(x)=\sum_{k=0}^d\frac{a_k}{2^b}x^k,
 \qquad P(i)=2^bN^d p(i/N).
 \label{eq:dyadic-p}
\end{equation}
Here $p$ approximates $f/C$, and the common positive factor $2^bN^d$ relating
$P(i)$ to $p(i/N)$ does not change the normalized state.
Appendix~\ref{app:input-accuracy} proves the resulting approximation error bound and
specifies how large $b$ must be to obtain $\epsilon$ precision. The construction below is exact for the
\emph{fixed integer polynomial} $P$; coefficient approximation and later
rotation synthesis contribute to separate errors.

\subsection{Modified capacity for alias sampling}

Alias sampling generates a discrete probability distribution from uniform
choices using a precomputed \textit{alias table}. For a positive integer
capacity $M$, the table assigns each source index $i\in[N]$ a threshold
$\keep(i)\in\{0,\ldots,M\}$ and an alternate index $\alias(i)\in[N]$.
The sampler chooses $i\in[N]$ and a position $z\in\{0,\ldots,M-1\}$
independently and uniformly, and returns $i$ if $z<\keep(i)$ and
$\alias(i)$ otherwise. The table is constructed so that these output
probabilities approximate the target distribution.

A common choice of capacity is
$M=2^\mu$ for some positive integer $\mu$. The resulting output probabilities
are integer multiples of $1/(N2^\mu)$, and their approximation error decreases
exponentially with $\mu$. The quantum implementation evaluates the sampling
rule coherently, retaining the source index $i$ and comparison position $z$
as garbage registers.
For a detailed introduction to coherent alias sampling and its logical
resource costs, see Refs.~\cite{babbush2018,liu2026resources}.

For the polynomial state considered here, we instead assign capacity $Z$ to each source index. We use the unitary
\begin{equation}
 U_Z\ket{0}=\frac{1}{\sqrt Z}\sum_{z=0}^{Z-1}\ket z.
 \label{eq:UZ-first}
\end{equation}
The desired output $y$ is assigned exactly $v_y=NQ(y)$ fragments, since
\begin{equation}
 \frac{v_y}{NZ}=\frac{Q(y)}{Z}=\frac{P(y)^2}{Z}.
 \label{equ:aliastory}
\end{equation}
Thus the replacement of the dyadic capacity by $Z$ is what makes the
combinatorial sampling distribution exact. It does not by itself remove
garbage, and it does not imply that $U_Z$ is exactly synthesizable over a
finite gate set. We keep these two issues separate below.

The normalization $Z=S(N)$, where $S(t)=\sum_{i=0}^{t-1}Q(i)$, is computed
once classically from the polynomial,
not summed over the $N$ samples during quantum execution. The implementation
of $U_Z$ uses the binary representation of this single integer; it is not a
lookup table containing $Z$ amplitudes. Section~\ref{subsec:uniform-implementation}
gives its circuit and gate count.

\subsection{Alias tables, fragments, and the source of garbage}

 \begin{definition}[Valid alias table]
An alias table consists of integers $\keep(i)\in\{0,\ldots,Z\}$ and
indices $\alias(i)\in[N]$, one pair for each $i\in[N]$.
It is valid if 
\begin{equation}
 \keep(y)+\sum_{\substack{i\in[N]\\\alias(i)=y}}\bigl[Z-\keep(i)\bigr]=v_y
\label{eq:balance-condition}
\end{equation}
for every $y\in [N]$.

 \end{definition}
 \begin{definition}[Fragment and shuffling map]
\label[definition]{def:fragment}
A \emph{fragment} is a pair $(i,z)$ with $i\in[N]$ and $z\in[Z]$.
An alias table induces the map
\begin{equation}
 \pi(i,z)=
 \begin{cases}
 i,         & z<\keep(i),\\
 \alias(i), & z\ge \keep(i).
 \end{cases}
\label{eq:fragment-map}
\end{equation}
\end{definition}
The \emph{fiber} $\pi^{-1}(y)$ is the set of all source fragments that produce
output $y$. Equivalently, validity means
\begin{equation}
	|\pi^{-1}(y)|=v_y,
	\qquad y\in[N].
\label{eq:fiber-size}
\end{equation}
We call $v_y$ the \emph{volume} of the fiber $\pi^{-1}(y)$.

There are $NZ$ equally weighted input fragments. Preparing their uniform
superposition and computing $\pi$ into a new output register gives
\begin{equation}
\begin{aligned}
 &\frac{1}{\sqrt{NZ}}\sum_{i,z}\ket{\pi(i,z)}\ket{i}\ket{z}\\
&=\sum_{y:v_y>0}\sqrt{\frac{v_y}{NZ}}
 |y\rangle|\mathrm{temp}_y\rangle\\
 &=\sum_{y:v_y>0}\frac{|P(y)|}{\sqrt Z}
 |y\rangle|\mathrm{temp}_y\rangle.
\end{aligned}
\label{eq:fragment-state}
\end{equation}
The state $|\mathrm{temp}_y\rangle$ is the uniform superposition of all source
pairs $(i,z)$ in the fiber $\pi^{-1}(y)$. It is a garbage state because it
retains the source information used to produce $y$ and generally depends on
$y$.
It cannot simply be discarded when a standalone pure polynomial state is
required. In fact, different fibers are disjoint, so the states
$\ket{\mathrm{temp}_y}$ are orthogonal. Tracing out the registers holding the
source index and fragment position gives
\begin{equation}
 \sum_y\frac{P(y)^2}{Z}\ket y\!\bra y,
 \label{eq:alias-reduced-state}
\end{equation}
not a coherent superposition with these square-root amplitudes.

The remedy is an invertible change of coordinates. Replace each source
fragment $(i,z)$ by its output $y=\pi(i,z)$ and an integer $t$ representing the ordering of the fragment within this fibre and referred to as \textit{rank}. We pause the discussion of details for the alias sampling algorithm for polynomial state but give an outline of the complete algorithm below.

We use $i\in[N]$ to denote the source index and $y=\pi(i,z)\in[N]$
to denote the output index. The four registers below hold, in order, the
output index $y$, source index $i$, source position $z$, and fiber rank $t$. Their widths are
$n,n,B,B$, where $B$ is chosen so that
$Z,v_y<2^{B}$ for every $y$. Only the ancilla qubits used for arithmetic
are omitted.

\begin{equation}
 \begin{gathered}
 \frac{1}{\sqrt{NZ}}\sum_{\substack{i\in[N]\\z\in[Z]}}
 \ket{0^n}\ket{i}\ket{z}\ket{0^{B}}\\
 \xrightarrow{\text{compute }\pi}
 \frac{1}{\sqrt{NZ}}\sum_{\substack{i\in[N]\\z\in[Z]}}
 \ket{\pi(i,z)}\ket{i}\ket{z}\ket{0^{B}}\\
 \xrightarrow[\text{uncompute } i,z]{\text{rank}}
 \frac{1}{\sqrt{NZ}}\sum_{\substack{y:v_y>0\\0\le t<v_y}}
 \ket{y}\ket{0^n}\ket{0^{B}}\ket{t}\\
 =\sum_{y:v_y>0}\frac{|P(y)|}{\sqrt Z}
 \ket y\ket{0^n}\ket{0^{B}}\,U_{v_y}\ket{0^{B}}\\
 \xrightarrow{\ \ket{v_y} \bra{v_y}\otimes U_{v_y}^{\dagger}}
 \sum_y\frac{|P(y)|}{\sqrt Z}
 \ket y\ket{0^n}\ket{0^{B}}\ket{0^{B}}\\
 \xrightarrow{\text{sign of }P(y)}
 \ket\psi\ket{0^n}\ket{0^{B}}\ket{0^{B}}.
 \end{gathered}
 \label{eq:technical-roadmap}
\end{equation}
Here $U_m\ket{0^{B}}=m^{-1/2}\sum_{t<m}\ket t$
for $0<m<2^{B}$.
The first arrow writes $y=\pi(i,z)$ into the initially zero $n$-qubit output register;
this same register holds $y$ throughout the remaining steps. The next arrow
computes the rank $t$ and uses unranking to reconstruct $(i,z)$ from $(y,t)$,
returning the source registers to $\ket{0^n}$ and $\ket{0^{B}}$;
both remain explicit in every subsequent line. These first two arrows separate the
logical steps for clarity; the implementation computes $y$ and $t$ together
with the forward ranking map, so no separate call to alias sampling is needed.
The circuit then computes $v_y$ into an ancilla register, uses it to
control the application of $U_{v_y}^{\dagger}$ to the position register,
and uncomputes $v_y$. This returns the position register to $\ket{0^B}$.

\figrev{Figure~\ref{fig:workflow} summarizes the preparation procedure,
separating the fixed classical data from the coherent operations and
making the two erasure steps explicit. First, the inverse ranking reconstructs the source pair from
$(y,t)$ and XORs it into the retained source registers, clearing $i$ and
$z$ while preserving $y$ and $t$. Second, the inverse uniform-state
preparation clears $t$ using the volume $v_y$ computed from $y$.
The classical data specify both steps without storing a separate record
for each of the $N$ indices. Figure~\ref{fig:opia} shows the coherent
evaluation of the implicit alias oracle called within the ranking steps.
The numerical illustrations in Figs.~\ref{fig:transport},
\ref{fig:weight-shuffle}, and \ref{fig:fiber-garbage} follow the same
eight-index example from weight redistribution to garbage removal.}

\begin{figure*}[!t]
\centering
\includegraphics[width=0.6\textwidth]{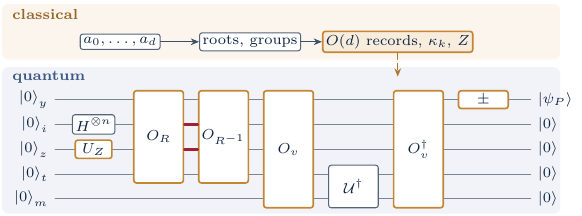}
\caption{Classical preprocessing and coherent preparation of a clean polynomial
state with the alias sampling method. The coefficients determine the classification groups, finite-difference
coefficients $\kappa_k$, and capacity $Z$; amber outlines identify operations
that use these fixed data. The quantum circuit prepares uniform source
registers with $H^{\otimes n}$ and $U_Z$. The ranking oracle $O_R$ writes
the output $y$ and fiber position $t$, and the inverse-ranking oracle
$O_{R^{-1}}$ reconstructs and XOR-clears the source registers $i,z$.
Red wire segments mark the retained sampling record between these operations.
The volume oracle $O_v$ computes $m=v_y=NP(y)^2$, the selected inverse
$\mathcal U^\dagger$ clears $t$, and $O_v^\dagger$ clears $m$.
The final sign phase, denoted $\pm$, converts $|P(y)|$ to $P(y)$.
Arithmetic scratch registers are omitted. The exact output shown assumes
ideal rotations; finite-precision guarantees are given in
Sec.~\ref{sec:clean-preparation}.}
\label{fig:workflow}
\end{figure*}

The next two sections construct the ordered table and evaluate it without
an $N$-entry lookup. Section~\ref{sec:clean-preparation} proves the steps for
ranking and removing garbage and accounts for the circuits preparing uniform states. Thus the central
questions can be read separately: what is the table, how is an entry
computed, and how is the garbage state in the registers holding the source fragment erased?
The main complexity result, \Cref{thm:fast-clean-preparation}, gives the
complete gate count using fast arithmetic, including angle computation and
rotation synthesis.

\par\medskip\noindent
\begin{minipage}{\columnwidth}
\captionof{table}{Notation used throughout. Ranks in the index lists and fiber positions use
different starting conventions.}
\label{tab:notation}
\small
\begin{tabular}{@{}p{0.15\columnwidth}p{0.78\columnwidth}@{}}
\toprule
Symbol & Meaning \\
\midrule
$N=2^n$ & Number of grid points. \\
$d,\beta$ & Polynomial degree bound and coefficient bit length. \\
$Z$ & Normalization $\sum_iP(i)^2$; also the capacity of each source index. \\
$v_y$ & Fiber volume $NP(y)^2$, which may exceed $Z$. \\
$r,s$ & Light/heavy list ranks, starting at $1$. \\
$t$ & Position in a fiber, starting at $0$. \\
$B$ & Sufficient magnitude width for exact arithmetic. \\
$\mu$ & Parameter specifying a dyadic capacity $M=2^\mu$. \\
\bottomrule
\end{tabular}
\end{minipage}\par\medskip

\section{The ordered alias table}
\label{sec:ordered-table}
We now construct a valid alias table by matching missing and extra weights in
increasing index order. Ordering is essential for the later ranking: the
light indices assigned to a given heavy index will form one interval of light
\emph{ranks}, even when their grid indices lie in several separated groups.
The construction in this section only uses nonnegative integer weights; the
polynomial structure is used in the next section to evaluate it efficiently.

\subsection{Aligning cumulative deficits and surpluses}
We first partition the output indices $y\in[N]$ according to whether their
target weights $v_y$ are smaller than, larger than, or equal to $Z$:
\begin{equation}
 \begin{aligned}
 \mathcal{I}_{\mathcal{L}}&:=\{y\in[N]:v_y<Z\},\\
 \mathcal{I}_{\mathcal{H}}&:=\{y\in[N]:v_y>Z\},\\
 \mathcal{I}_{\mathcal{B}}&:=\{y\in[N]:v_y=Z\}.
 \end{aligned}
\label{eq:LHB}
\end{equation}
We refer to these output indices as light, heavy, and balanced, respectively. Let
$m_{\mathcal{L}}:=|\mathcal{I}_{\mathcal{L}}|$ and
$m_{\mathcal{H}}:=|\mathcal{I}_{\mathcal{H}}|$. A
balanced output index $y$ requires exactly the $Z$ fragments supplied by the
corresponding source index $i=y$, so we set $\keep(y)=Z$ and
$\alias(y)=y$ for every $y\in\mathcal{I}_{\mathcal{B}}$.

Since $\sum_{y\in[N]}v_y=NZ$, the set $\mathcal{I}_{\mathcal{L}}$ is empty if
and only if $\mathcal{I}_{\mathcal{H}}$ is empty. In this case every output index is
balanced and the construction is complete.
Otherwise, $m_{\mathcal{L}},m_{\mathcal{H}}\ge 1$. We write the light and heavy output indices in increasing
order as
\begin{equation}
 \ell_1<\cdots<\ell_{m_{\mathcal{L}}},
 \qquad
 h_1<\cdots<h_{m_{\mathcal{H}}}.
\label{eq:ordered-lists}
\end{equation}
The rank of an index is its position in the corresponding ordered list. More
precisely,
\begin{equation}
 \operatorname{rank}_{\mathcal{I}_{\mathcal{L}}}(\ell_r):=r,
 \qquad
 \operatorname{rank}_{\mathcal{I}_{\mathcal{H}}}(h_s):=s.
\label{eq:index-ranks}
\end{equation}
These list ranks start from $1$. A light output needs fewer than the $Z$
fragments (each fragment has probability $1/(NZ)$ in the uniform input) initially assigned to every index, so that index can donate fragments.
A heavy output needs more than $Z$, so it must receive fragments.
The terms ``missing weight'' and ``extra weight'' below compare the target
weight with the capacity $Z$ of each source index; their transfer directions are opposite.
For each light index $\ell_r$, we define its missing weight $\delta_r$, and for
each heavy index $h_s$, we define its extra weight $e_s$:
\begin{equation}
 \delta_r:=Z-v_{\ell_r}>0,
 \qquad
 e_s:=v_{h_s}-Z>0.
\label{eq:missing-extra}
\end{equation}
The corresponding cumulative missing and extra weights are
\begin{equation}
 \begin{aligned}
 D(r)&:=\sum_{u=1}^{r}\delta_u,&D(0)&=0,\\
 E(s)&:=\sum_{u=1}^{s}e_u,&E(0)&=0.
 \end{aligned}
\label{eq:DE}
\end{equation}
Weight conservation gives
\begin{equation}
 D(m_{\mathcal{L}})=E(m_{\mathcal{H}}).
\label{eq:DE-conservation}
\end{equation}
Let $T:=D(m_{\mathcal{L}})=E(m_{\mathcal{H}})$. The intervals
$[D(r-1),D(r))$ of missing weight partition $[0,T)$, and the intervals
$[E(s-1),E(s))$ of extra weight also partition $[0,T)$. We call them blocks of
missing weight and extra weight, respectively.
\figrev{Figure~\ref{fig:transport} illustrates their alignment for
$N=8$, $P(i)=i$, and $Z=140$. The light indices $0,\ldots,4$ together
donate $460$ fragments, exactly the amount needed by the heavy indices
$5,6,7$. Assigning each entire light block according to its left endpoint
can oversupply a heavy output. The ordered construction compensates by
transferring fragments from that heavy index's own source column to the
next heavy output, as formalized below.}

\subsection{The table entries and their balance}

For each $1\le r\le m_{\mathcal{L}}$, let $s(r)$ denote the index of the
block of extra weight that contains the left endpoint of the $r$-th block of missing weight:
\begin{equation}
 s(r):=\min\{s\in\{1,\ldots,m_{\mathcal{H}}\}:E(s)>D(r-1)\}.
\label{eq:s-of-r}
\end{equation}
Similarly, for each $1\le s\le m_{\mathcal{H}}$, let $\rho(s)$ denote the index
of the first block of missing weight whose right endpoint reaches or passes the right endpoint
of the $s$-th block of extra weight:
\begin{equation}
 \rho(s):=\min\{r\in\{1,\ldots,m_{\mathcal{L}}\}:D(r)\ge E(s)\}.
\label{eq:rho-of-s}
\end{equation}
Both minima exist by \cref{eq:DE-conservation}. The strict inequality in
$s(r)$ and the non-strict inequality in $\rho(s)$ are intentional: they
specify which half-open block owns a shared endpoint. They must also be used
with these same conventions in the reversible searches.
We then define the ordered alias
table by
\begin{subequations}
\label{eq:ordered-table}
\begin{align}
 \keep(\ell_r)&=v_{\ell_r},
 &\alias(\ell_r)&=h_{s(r)},
 \qquad 1\le r\le m_{\mathcal{L}},
\label{eq:ordered-light}
\end{align}
\begin{equation}
 \keep(h_s)=Z+E(s)-D(\rho(s)),
 \qquad 1\le s\le m_{\mathcal{H}},
\label{eq:ordered-heavy-keep}
\end{equation}
\begin{equation}
 \alias(h_s)=
 \begin{cases}
 h_{s+1},&1\le s<m_{\mathcal{H}},\\
 h_s,&s=m_{\mathcal{H}},
 \end{cases}
\label{eq:ordered-heavy}
\end{equation}
\begin{align}
 \keep(i)&=Z,
 &\alias(i)&=i,
 \qquad i\in\mathcal{I}_{\mathcal{B}}.
\label{eq:ordered-balanced}
\end{align}
\end{subequations}

\begin{figure}[!tbp]
\centering
\includegraphics[width=\columnwidth]{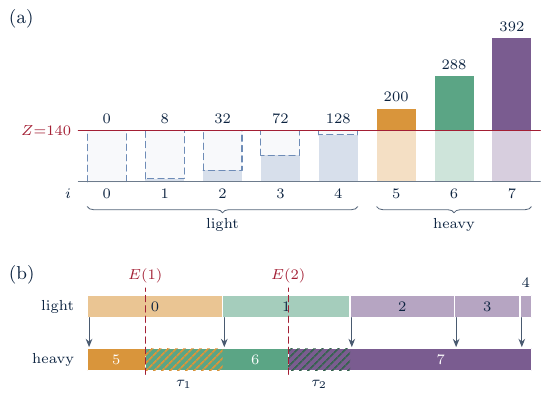}
\caption{Ordered weight redistribution for $N=8$ and $P(i)=i$.
(a) Target weights $v_i=8i^2$ relative to the common source capacity
$Z=140$. Dashed regions show the light indices' deficits below capacity;
the heavy indices require the excess above capacity.
(b) The missing-weight blocks of lengths $\delta_r$ and extra-weight
blocks of lengths $e_s$ are aligned in increasing index order; their
labels are grid indices. Each light block is assigned to the heavy block
containing its left endpoint. The boundaries $E(1)=60$ and $E(2)=208$
cut through light blocks, producing the hatched compensating transfers
$\tau_1=80$ and $\tau_2=64$ from heavy source indices $5$ and $6$.
The resulting table is shown in \cref{fig:weight-shuffle}.}
\label{fig:transport}
\end{figure}

The light index $\ell_r$ keeps $v_{\ell_r}$ fragments and assigns its other
$\delta_r$ fragments to $h_{s(r)}$. Define the number of fragments donated by the heavy index $h_s$ as
\begin{equation}
 \tau_s=Z-\keep(h_s)=D(\rho(s))-E(s).
 \label{eq:tau}
\end{equation}
The quantity $\tau_s$ measures how far the endpoint $D(\rho(s))$ of the
block of missing weight with index $\rho(s)$ extends past $E(s)$. The heavy source index $h_s$
transfers that many of its own fragments to $h_{s+1}$ when $s<m_{\mathcal H}$.
Since
\begin{equation}
 D(\rho(s)-1)<E(s)\le D(\rho(s)),
\end{equation}
we have $0\le\tau_s<\delta_{\rho(s)}\le Z$, and hence
$1\le\keep(h_s)\le Z$. The last heavy index has $\tau_{m_{\mathcal H}}=0$
by total weight conservation; its self-alias is never taken.

These are \emph{one-step} assignments. A fragment donated by a light index to
$h_s$ stays in the fiber of $h_s$. The compensation $\tau_s$ is donated by
source index $h_s$, not obtained by forwarding that arriving fragment.
The balance proof makes this distinction precise.

\begin{theorem}[Validity of the ordered table]
\label[theorem]{thm:ordered-valid}
The entries in \cref{eq:ordered-table} satisfy the balance condition in
\cref{eq:balance-condition} and therefore define a valid alias table for the
weights $v_i=NP(i)^2$.
\end{theorem}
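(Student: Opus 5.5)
We verify the balance condition \cref{eq:balance-condition} separately for balanced, light, and heavy outputs $y$, using only the definitions \cref{eq:ordered-table}, the identity \cref{eq:DE-conservation}, and the bounds $0\le\tau_s<\delta_{\rho(s)}\le Z$ already established. The first step is to identify, for each $y$, which source indices $i\neq y$ have $\alias(i)=y$ with $\keep(i)<Z$; only these contribute to the sum in \cref{eq:balance-condition}. Balanced sources contribute $Z-\keep(i)=0$ and can be ignored. The self-alias of $h_{m_{\mathcal H}}$ contributes $Z-\keep(h_{m_{\mathcal H}})=\tau_{m_{\mathcal H}}$. We must show this is $0$: since $E(m_{\mathcal H})=T=D(m_{\mathcal L})$ and $D$ is strictly increasing, $\rho(m_{\mathcal H})=m_{\mathcal L}$, so $\tau_{m_{\mathcal H}}=0$.

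For a balanced $y$, no light or heavy index aliases to $y$, because aliases of light and heavy indices are always heavy. So the left side is $\keep(y)=Z=v_y$. For a light $y=\ell_r$, no index aliases to $\ell_r$ for the same reason, so the left side is $\keep(\ell_r)=v_{\ell_r}$. The only substantive case is a heavy output $y=h_s$. Its contributors are: its own keep $Z+E(s)-D(\rho(s))$; the light indices $\ell_r$ with $s(r)=s$, each contributing $\delta_r$; and, if $s\ge2$, the heavy index $h_{s-1}$, contributing $\tau_{s-1}=D(\rho(s-1))-E(s-1)$. With the convention $\tau_0:=0$, $\rho(0):=0$ (consistent with $E(0)=D(0)=0$), we need
\[
 Z+E(s)-D(\rho(s))+\sum_{r:\,s(r)=s}\delta_r+D(\rho(s-1))-E(s-1)=Z+e_s .
\]
Since $E(s)-E(s-1)=e_s$, this reduces to showing $\sum_{r:\,s(r)=s}\delta_r=D(\rho(s))-D(\rho(s-1))$.

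The key lemma, and the main obstacle, is that $\{r:s(r)=s\}$ is exactly the integer interval $\rho(s-1)<r\le\rho(s)$. Then the sum telescopes to the required difference. To prove the lemma we unwind the definitions. By \cref{eq:s-of-r}, $s(r)=s$ iff $E(s-1)\le D(r-1)<E(s)$. By \cref{eq:rho-of-s} and monotonicity of $D$, $r\le\rho(s)$ iff $D(r-1)<E(s)$, and $r>\rho(s-1)$ iff $D(r-1)\ge E(s-1)$, the latter trivially true when $s=1$. The care needed is in matching the strict versus non-strict inequalities at shared endpoints, including the case where $D(r-1)=E(s-1)$ exactly. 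There $\tau_{s-1}=0$ and the light block starting at that point belongs to $h_s$, which is consistent with the conventions the paper explicitly chose. We must also check $\rho(s-1)\le\rho(s)$, which follows from $E$ increasing.

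Combining the three cases gives $|\pi^{-1}(y)|=v_y$ for all $y$, so the table is valid. The trivial case in which $\mathcal I_{\mathcal L}$ and $\mathcal I_{\mathcal H}$ are both empty is immediate.
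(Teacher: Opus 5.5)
Your proposal is correct and follows essentially the same argument as the paper. The paper's proof also identifies $\{r:s(r)=a\}$ with the rank interval $\rho(a-1)<r\le\rho(a)$ via $E(a-1)\le D(r-1)<E(a)$, then telescopes the received weight to $Z+e_a=v_{h_a}$. Your explicit check that the self-alias of $h_{m_{\mathcal H}}$ contributes $\tau_{m_{\mathcal H}}=0$ (using $\rho(m_{\mathcal H})=m_{\mathcal L}$) is a detail the paper settles in the text before the theorem rather than inside the proof.
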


\begin{proof}
Set $\rho(0):=0$ and $\tau_0:=0$, and fix
$a\in\{1,\ldots,m_{\mathcal{H}}\}$. From the definitions of $s(r)$ and
$\rho(a)$, we have
\begin{equation}
 \begin{aligned}
 s(r)=a
 &\quad\Longleftrightarrow\quad E(a-1)\le D(r-1)<E(a)\\
 &\quad\Longleftrightarrow\quad \rho(a-1)<r\le \rho(a).
 \end{aligned}
\label{eq:light-block-for-heavy}
\end{equation}
Thus the light indices assigned to $h_a$ are exactly
$\{\ell_r:\rho(a-1)<r\le \rho(a)\}$, which may be empty. The total weight
received by $h_a$ consists of $\tau_{a-1}$ fragments transferred from the
preceding heavy index when $a>1$, the missing weights transferred by these
light indices, and its own retained weight. For $a=1$, the first term is
$\tau_0=0$. Using \cref{eq:tau} for the first term, \cref{eq:DE} for the sum
of the missing weights, and \cref{eq:ordered-heavy-keep} for the retained weight,
we obtain
 \begin{align}
 &\tau_{a-1}
 +\sum_{\rho(a-1)<r\le \rho(a)}\delta_r
 +\keep(h_a)\notag\\
 &=\bigl[D(\rho(a-1))-E(a-1)\bigr]\notag\\
 &\quad+\bigl[D(\rho(a))-D(\rho(a-1))\bigr]\notag\\
 &\quad+\bigl[Z+E(a)-D(\rho(a))\bigr]\notag\\
 &=Z+e_a=v_{h_a}.
\label{eq:validity-cancellation}
\end{align}
No alias points to a light index, so $\ell_r$ receives
$\keep(\ell_r)=v_{\ell_r}$. Each balanced index receives all $Z$ of its
fragments.
Therefore, the balance condition holds for every index in $[N]$.
\end{proof}

\begin{example}[Computing the volume of a fiber]
For $N=8$ and $P(i)=i$, one has $Z=140$. The only source indices
contributing to output index $6$ are $1$, $5$, and $6$. The table gives
$\keep(1)=8$, $\alias(1)=6$, $\keep(5)=60$, $\alias(5)=6$, and
$\keep(6)=76$. The fiber of output index $6$ therefore contains
\begin{equation}
 (140-60)+(140-8)+76=288=8\cdot6^2
 \label{eq:main-heavy-example}
\end{equation}
fragments: $80$ from index $5$, $132$ from index $1$, and $76$ from index $6$.
\figrev{Figure~\ref{fig:weight-shuffle} displays these contributions as
the regions labeled $6$ in source columns $1$, $5$, and $6$.
Every column has height $Z$, and each fragment is assigned directly to
its final output by one keep-or-alias decision. No fragment is routed
through an intermediate output.
Appendix~\ref{app:worked-example} gives the complete table and its ranking.}
\end{example}

\begin{figure}[!t]
\centering
\includegraphics[width=\columnwidth]{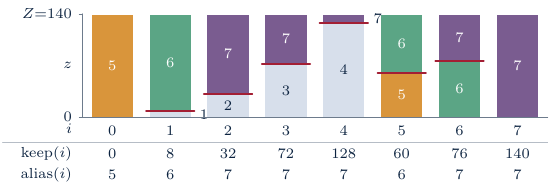}
\caption{The ordered alias table obtained from \cref{fig:transport}.
Each source column $i$ contains $Z=140$ equally likely positions $z$.
The red threshold separates the retained region $z<\keep(i)$ from
the region sent to $\alias(i)$; the two rows below list the table entries.
Labels inside the regions identify their output indices, with small
regions labeled alongside the columns. Amber, green, and purple identify
outputs $5$, $6$, and $7$; pale blue regions retain their own light index.
Summing the lengths of all regions with output label $y$ gives
$v_y=8y^2$, hence probability $v_y/(8\cdot140)=y^2/140$.
The table specifies one direct sampling decision, including for heavy
source columns.}
\label{fig:weight-shuffle}
\end{figure}

\section{An implicit alias oracle}
\label{sec:implicit-oracle}

The construction in Sec.~\ref{sec:ordered-table} reduces the computation of
each entry of the alias table to evaluating $D$ and $E$ and finding the ranks
$s(r)$ and $\rho(s)$. The resulting representation and coherent evaluation have costs
polynomial in $n,d,\beta$ and avoid constructing an explicit list of all
$N$ entries. We now
explain why these operations are efficient for polynomial weights. Instead of
storing $N$ individual entries,
we describe the light, heavy, and balanced indices using $O(d)$ groups of
consecutive indices and evaluate their cumulative weights in closed form.
There are two separate reasons for efficient compression: the stored description is
short while it is loaded on a quantum computer globally, and a quantum circuit can coherently evaluate the alias table from that description.  Here $S(t)$ is the cumulative sum of $Q$ over
$0,\ldots,t-1$, and $\PIA$ is the oracle that computes
$(\alias(i),\keep(i))$ from $i$. 

The computation diagram separates classical preprocessing from coherent
quantum computation:
\begin{equation}
 \begin{gathered}
 \text{Classical preprocessing}\\
 Q=P^2\ \longrightarrow\ \kappa_k
 \ \longrightarrow\ \text{formula for }S,\\
 Z=S(N),\qquad F=NP^2-Z,\\
 (F,S)\ \longrightarrow\ \text{group data}.\\[4pt]
 \text{Coherent quantum computation}\\
 \text{evaluate }S(t)\ \longrightarrow\ (D(r),E(s))\\
 \xrightarrow{\text{monotone searches}}\ (s(r),\rho(s))
 \ \longrightarrow\ \PIA.
 \end{gathered}
 \label{eq:oracle-dependency}
\end{equation}
Classical preprocessing computes the coefficients $\kappa_k$, the
normalization $Z$, and the group data, including cumulative weights at group
boundaries. These fixed data are compiled as constants into the quantum
circuit. During quantum execution, the circuit evaluates $S$, $D$, and $E$
as needed within the searches for $s(r)$ and $\rho(s)$, coherently on
arguments in superposition.

\subsection{Classification by a polynomial of low degree}

An index $i\in[N]$ is light, heavy, or balanced according to the sign of
\begin{equation}
 F(i):=v_i-Z=NP(i)^2-Z,
\label{eq:classification-polynomial}
\end{equation}
where $F$ is a polynomial of degree at most $2d$.

\begin{lemma}[Classification groups]
\label[lemma]{lem:classification-groups}
If $F$ is not the zero polynomial, $[N]$ can be divided into at most $4d+1$
groups of consecutive indices. All indices in each group have the same sign of
$F(i)$, while two neighboring groups have different signs. Moreover, the light
indices occur in at most $2d+1$ such groups, and the same bound holds for the
heavy indices. Finally,
$\mathcal{I}_{\mathcal{B}}$ contains at most $2d$ indices.
\end{lemma}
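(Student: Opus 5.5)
The plan is to count the sign changes of $F$ along the grid. Since $F$ is a nonzero polynomial of degree at most $2d$, it has at most $2d$ real roots. Consequently it vanishes at at most $2d$ grid points, which already gives the claim $|\mathcal{I}_{\mathcal{B}}|\le 2d$.

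For the grouping, partition $[N]$ into maximal runs of consecutive indices on which $\operatorname{sign}F(i)\in\{-1,0,+1\}$ is constant. By maximality, neighboring runs have different signs. It remains to bound the number of runs. Each boundary between two runs with signs $\sigma\ne\sigma'$ at adjacent integers $j,j+1$ falls into one of two types. If $\sigma$ and $\sigma'$ are both nonzero, $F$ changes strict sign between $j$ and $j+1$. By the intermediate value theorem, $F$ then has a root in the open interval $(j,j+1)$. Otherwise, one of $j,j+1$ is a zero of $F$.

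Assign to each boundary a real root of $F$ as follows. For a strict sign change, use a root in $(j,j+1)$. For a boundary involving a zero run, note that the zero runs consist of distinct integer roots. The delicate point is that a single integer root $i_0$ forming a one-element zero group can touch two boundaries, one on each side. Since the open intervals $(j,j+1)$ are disjoint from the integers, distinct strict-change boundaries receive distinct non-integer roots. Each integer root accounts for at most two boundaries, so the total number of boundaries is at most $(\#\text{non-integer roots})+2(\#\text{integer roots})\le 2\cdot 2d=4d$. This gives at most $4d+1$ groups.

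For the light and heavy bounds, I expect the main obstacle to be avoiding the naive count of about half of $4d+1$. The plan is to argue that consecutive light groups are separated by at least one root of $F$ that is not adjacent to any other light group. To do this, contract the zero groups. Two consecutive light groups, meaning nonzero groups of sign $-1$ with no other light group between them, must be separated by some stretch containing heavy or balanced indices. Along the real line, $F$ goes from negative to nonpositive-or-positive and back to negative, which forces at least one root, counted with multiplicity, in the gap. Since the gaps between consecutive light groups are disjoint, there are at most $2d$ gaps, so there are at most $2d+1$ light groups. The same argument with signs reversed handles the heavy groups.

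One subtlety needs checking: a gap containing only a single balanced integer root $i_0$ with $F<0$ on both sides is a double root. This is fine, because counting roots with multiplicity still gives at most $2d$ in total, and gaps are disjoint open intervals between the groups.
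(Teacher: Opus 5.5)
Your proof is correct. It rests on the same two facts as the paper's: a nonzero $F$ of degree $r_F\le 2d$ has at most $r_F$ distinct real roots, and the intermediate value theorem applies between grid points. The difference is in the bookkeeping.

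The paper counts runs directly. The distinct roots cut $\mathbb R$ into at most $r_F+1$ open intervals, and the grid points in each interval share one nonzero sign, so they lie in a single run. That gives at most $r_F+1$ nonzero-sign runs, and the integer roots add at most $r_F$ balanced runs. You instead count transitions between maximal runs and charge each one to a root: non-integer roots once, integer roots at most twice. With $k_1$ non-integer and $k_2$ integer roots, both routes give at most $k_1+2k_2+1$ runs.

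The paper's version yields the light and heavy bounds for free. In fact it gives the stronger statement that light and heavy runs together number at most $r_F+1$, whereas your transition count needs a separate gap argument. That argument is sound: each gap between consecutive light runs contains an index with $F\ge0$, hence a root, and the gaps are disjoint. But the obstacle you anticipated does not exist. Neighboring runs have different signs, so no two light runs are adjacent, and at most $\lceil(4d+1)/2\rceil=2d+1$ of the at most $4d+1$ runs can be light.

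There are also a few slips, none of which affects the proof:
\begin{enumerate}
\item ``nonpositive-or-positive'' should read ``nonnegative''.
\item A lone balanced $i_0$ with $F<0$ on both sides need not be a double root; for example, $F(x)=-(x-i_0)(x-i_0-\tfrac12)$ has a simple root there. This does not matter, because $i_0$ itself is the one root that gap needs, so multiplicities are never required.
\item The step ``contract the zero groups'' is never used.
\end{enumerate}
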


\begin{proof}
Let $r_F:=\deg F\le2d$. The polynomial $F$ has at most $r_F$ distinct real
roots. The intervals between these roots have constant nonzero sign and induce
at most $r_F+1$ groups of consecutive indices with nonzero sign on $[N]$. In
addition, the integer roots produce at most $r_F$ balanced groups. Thus $[N]$
contains at most $2r_F+1\le4d+1$ groups of constant sign. In particular, the light
indices occur in at most $r_F+1\le2d+1$ groups, and the same is true for the
heavy indices. The number of balanced indices is at most $r_F\le2d$.
\end{proof}

The additional groups in the $4d+1$ bound account for balanced integer
roots. We only use the fact that the number of groups grows linearly with
$d$. If $F$ is the zero
polynomial, every index is balanced and \cref{eq:ordered-balanced} gives the
complete alias table. We use the same trivial construction whenever all
indices in $[N]$ are balanced.

Let
\begin{equation}
 0=\gamma_0<\gamma_1<\cdots<\gamma_G=N,
 \qquad G\le4d+1,
\label{eq:group-boundaries}
\end{equation}
be the endpoints of the classification groups, so the $j$-th group consists of
the indices $i$ satisfying $\gamma_{j-1}\le i<\gamma_j$. For each group, we
record its type
$\theta_j\in\{\mathcal{L},\mathcal{H},\mathcal{B}\}$. We also record how many
light and heavy indices occur before the group:
\begin{equation}
 \begin{aligned}
 \lambda_j&:=|\{i\in\mathcal{I}_{\mathcal{L}}:i<\gamma_{j-1}\}|,\\
 \eta_j&:=|\{i\in\mathcal{I}_{\mathcal{H}}:i<\gamma_{j-1}\}|.
 \end{aligned}
\label{eq:group-earlier-counts}
\end{equation}
Thus $\lambda_j$ is the number of light indices smaller than
$\gamma_{j-1}$, and $\eta_j$ is the corresponding number of heavy indices.
We also record $D(\lambda_j)$, $E(\eta_j)$, and the total numbers of light and heavy indices
$m_{\mathcal{L}}$ and $m_{\mathcal{H}}$,
which are the upper bounds in the searches for $s(r)$ and $\rho(s)$.

To compute the group boundaries, we use \emph{root isolation}: finding
disjoint intervals with rational endpoints that contain exactly one distinct
real root each and together contain every real root. Each such interval is
called an \emph{isolating interval}. For this step only, we apply a certified
classical algorithm~\cite{sagraloff2011} to the
\emph{square-free part} of $F$, obtained by removing repeated copies of its
factors. This polynomial has the same distinct roots as $F$, but each root
occurs once. Classification signs are always evaluated using the original
$F$, since removing repeated factors can change those signs.

We refine the isolating intervals enough to determine which neighboring
integers can lie across a root, checking a candidate integer root by exact
evaluation. The signs of the original $F$ at these integers determine the
group boundaries without examining all indices in $[N]$. Appendix~\ref{app:root-to-grid} gives an elementary separation bound that
justifies the required refinement and handles exact integer roots separately.
Thus the preprocessing cost is polynomial in $d$, $\beta$, and $n$.
No root finding is performed in superposition.

Once the group boundaries are known, we can classify an index and convert
between an index and its rank using comparisons with stored constants. Let
$j\in\{1,\ldots,G\}$ satisfy $\theta_j=\mathcal{L}$. For any
$i\in\mathcal{I}_{\mathcal{L}}$ satisfying
$\gamma_{j-1}\le i<\gamma_j$, its rank is
\begin{equation}
 \operatorname{rank}_{\mathcal{I}_{\mathcal{L}}}(i)=
 \lambda_j+i-\gamma_{j-1}+1.
\label{eq:light-rank}
\end{equation}
If $j\in\{1,\ldots,G\}$ satisfies $\theta_j=\mathcal{H}$ and
$i\in\mathcal{I}_{\mathcal{H}}$ satisfies
$\gamma_{j-1}\le i<\gamma_j$, the same expression applies with $\lambda_j$
replaced by $\eta_j$.
Conversely, given a rank, we identify its group from the stored
values of $\lambda_j$ and $\eta_j$ and recover the corresponding index by
addition. Since there are $O(d)$ groups, these operations require $O(d)$
comparisons and controlled additions.

\subsection{Exact cumulative functions}

For every nonnegative integer $t$, define the cumulative value of $Q=P^2$ by
\begin{equation}
 S(t):=\sum_{i=0}^{t-1}Q(i),
 \qquad S(0)=0.
\label{eq:S-def}
\end{equation}
We will only evaluate $S(t)$ for $0\le t\le N$. Although the sum in
\cref{eq:S-def} may contain $N$ terms, its value has an exact representation
using the $2d+1$ integers introduced below.
For every nonnegative integer $a$ and every integer $k$, we use the convention
$\binom{a}{k}=0$ when $k<0$ or $k>a$.

\begin{proposition}[Exact cumulative sum]
\label[proposition]{prop:exact-cumulative}
For a function $g$ on the nonnegative integers, define its forward difference
for nonnegative integers $u$ and $k$ by
\begin{equation}
 \begin{aligned}
 (\Delta g)(u)&:=g(u+1)-g(u),
 &\Delta^0 g&:=g,\\
 \Delta^{k+1}g&:=\Delta(\Delta^k g).
 \end{aligned}
\label{eq:forward-difference}
\end{equation}
For $k\in\{0,\ldots,2d\}$, set
$\kappa_k:=(\Delta^k Q)(0)$. Then, for all nonnegative integers $i$ and $t$,
\begin{equation}
 Q(i)=\sum_{k=0}^{2d}\kappa_k\binom{i}{k},
 \qquad
 S(t)=\sum_{k=0}^{2d}\kappa_k\binom{t}{k+1}.
\label{eq:S-binomial}
\end{equation}
All coefficients $\kappa_k$ are integers.
\end{proposition}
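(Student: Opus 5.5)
The plan is to use Newton's forward-difference formula for $Q$ and then sum it with the hockey-stick identity. It has three steps, taken in order.

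\emph{Step 1: $Q$ has degree at most $2d$.} By \cref{eq:integer-P}, $P$ is a polynomial in $i$ with integer coefficients $a_kN^{d-k}$ and degree at most $d$. Hence $Q=P^2$ is a polynomial in $i$ of degree at most $2d$ with integer coefficients.

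\emph{Step 2: the Newton expansion of $Q$.} I first show $\Delta^k g(0)$ is an integer combination of values of $g$. Induction on $k$ gives
\[
(\Delta^k g)(u)=\sum_{j=0}^k(-1)^{k-j}\binom{k}{j}g(u+j).
\]
Since $Q$ takes integer values on nonnegative integers, each $\kappa_k$ is an integer.

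Next I prove that for any polynomial $g$ of degree at most $m$,
\[
g(i)=\sum_{k=0}^{m}(\Delta^k g)(0)\binom{i}{k}.
\]
The argument uses the shift operator $E=1+\Delta$:
\[
g(i)=(E^i g)(0)=\sum_k\binom{i}{k}(\Delta^k g)(0).
\]
This holds for every nonnegative integer $i$ by the binomial theorem, because $\Delta$ and $1$ commute. Each application of $\Delta$ lowers the degree by one, so $\Delta^k g\equiv 0$ for $k>m$ and the sum truncates at $m$. Taking $g=Q$ and $m=2d$ gives the first identity in \cref{eq:S-binomial}.

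\emph{Step 3: the formula for $S(t)$.} Summing the Newton expansion over $i=0,\ldots,t-1$ and exchanging the two finite sums gives
\[
S(t)=\sum_{k=0}^{2d}\kappa_k\sum_{i=0}^{t-1}\binom{i}{k}.
\]
The inner sum equals $\binom{t}{k+1}$ by the hockey-stick identity. I will prove this by induction on $t$ from Pascal's rule $\binom{t+1}{k+1}=\binom{t}{k+1}+\binom{t}{k}$. The base case is $t=0$, where the empty sum and $\binom{0}{k+1}$ are both zero. The convention $\binom{a}{k}=0$ for $k>a$ handles small $t$ uniformly.

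The only step requiring care is Step 2: justifying the Newton expansion for all nonnegative integers $i$, including $i<k$, and the truncation at $2d$. The operator identity argument covers both, with the zero convention for binomials. Everything else is routine.
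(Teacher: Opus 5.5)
Your proposal is correct and follows essentially the same route as the paper: the Newton forward-difference expansion of $Q$, summation with the hockey-stick identity $\sum_{i<t}\binom{i}{k}=\binom{t}{k+1}$, and integrality of $\kappa_k$ from the alternating-sum formula. The only cosmetic difference is how you get the Newton expansion: you use the operator identity $E^i=(1+\Delta)^i$ for your shift operator $E$, whereas the paper uses direct induction on $i$ with $g(u+1)=g(u)+(\Delta g)(u)$ and Pascal's rule.
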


\begin{proof}
Since $Q$ has degree at most $2d$, its $(2d+1)$st forward difference is zero.
Repeatedly applying $g(u+1)=g(u)+(\Delta g)(u)$ and the binomial identity
$\binom{i+1}{k}=\binom{i}{k}+\binom{i}{k-1}$ proves the first equality by
induction on $i$.
Summing it from $i=0$ to $t-1$ and using
$\sum_{i=0}^{t-1}\binom{i}{k}=\binom{t}{k+1}$ gives the second equality.
Moreover, $Q(u)$ is an integer for every integer $u$, so repeated subtraction
in \cref{eq:forward-difference} shows that every $\kappa_k$ is an integer.
Equivalently,
\begin{equation}
 \kappa_k=\sum_{u=0}^{k}(-1)^{k-u}\binom{k}{u}Q(u)\in\mathbb Z.
\label{eq:finite-difference-coefficients}
\end{equation}
\end{proof}
The coefficients $\kappa_k$ are computed classically during preprocessing.
When evaluating $S(t)$ coherently, we generate the binomial coefficients
needed in \cref{eq:S-binomial} reversibly from the register holding $t$,
which may be in superposition. We generate them in increasing order using
the following recurrence for $k\in\{0,\ldots,2d\}$:
\begin{equation}
 \binom{t}{0}=1,
 \qquad
 \binom{t}{k+1}=
 \begin{cases}
  \displaystyle\binom{t}{k}\frac{t-k}{k+1},&k<t,\\[4pt]
 0,&k\ge t.
 \end{cases}
\label{eq:binomial-recurrence}
\end{equation}
If $k<t$, the quotient in
\cref{eq:binomial-recurrence} is an integer because
$(k+1)\binom{t}{k+1}=(t-k)\binom{t}{k}$. If
$k\ge t$, the next coefficient and all subsequent coefficients are zero.
The division is applied to the product
$\binom{t}{k}(t-k)$, not to $t-k$ alone. Since $k+1$ is a known positive
integer at each step, every such division has an integer result whenever
$k<t$. The branch $k\ge t$ is handled directly and does not attempt
to evaluate a negative factor or a quotient outside its specified domain.
The coefficients may be generated in $O(d)$ ancilla registers. After the sum
$S(t)$ has been copied to its output register, the calculation is reversed so
that each ancilla qubit returns to $\ket0$. Thus $S(t)$ can be evaluated with
$O(d)$
multiplications, divisions by known positive integers, additions, and
comparisons.

Appendix~\ref{subsec:constant-division} implements these exact divisions
reversibly using shifts and multiplication by precomputed modular inverses
of the odd parts of the divisors. This construction justifies the division
steps in the recurrence.

We now express $D$ and $E$ using $S$. Let
$j\in\{1,\ldots,G\}$ satisfy $\theta_j=\mathcal{L}$, and let
$r\in\{1,\ldots,m_{\mathcal{L}}\}$ satisfy
$\gamma_{j-1}\le\ell_r<\gamma_j$. Equivalently,
\begin{equation}
 \lambda_j<r\le\lambda_j+\gamma_j-\gamma_{j-1}.
\label{eq:light-rank-range}
\end{equation}
The number of light indices in this group up to and including $\ell_r$ is
$t:=r-\lambda_j$, so $1\le t\le\gamma_j-\gamma_{j-1}$. Using the stored value
$D(\lambda_j)$, we have
\begin{equation}
 D(r)=D(\lambda_j)+tZ
 -N\bigl[S(\gamma_{j-1}+t)-S(\gamma_{j-1})\bigr].
\label{eq:D-closed}
\end{equation}
Similarly, let $j\in\{1,\ldots,G\}$ satisfy $\theta_j=\mathcal{H}$, and let
$s\in\{1,\ldots,m_{\mathcal{H}}\}$ satisfy
$\gamma_{j-1}\le h_s<\gamma_j$. Equivalently,
\begin{equation}
 \eta_j<s\le\eta_j+\gamma_j-\gamma_{j-1}.
\label{eq:heavy-rank-range}
\end{equation}
Setting $t:=s-\eta_j$, so that
$1\le t\le\gamma_j-\gamma_{j-1}$, and using the stored value $E(\eta_j)$, we
obtain
\begin{equation}
 E(s)=E(\eta_j)
 +N\bigl[S(\gamma_{j-1}+t)-S(\gamma_{j-1})\bigr]-tZ.
\label{eq:E-closed}
\end{equation}
The values $D(0)=E(0)=0$ are evaluated directly. For
$1\le r\le m_{\mathcal{L}}$, the circuit selects the light group containing
$\ell_r$ using \cref{eq:light-rank-range}; for
$1\le s\le m_{\mathcal{H}}$, it selects the heavy group containing $h_s$
using \cref{eq:heavy-rank-range}.

The circuit checks these conditions against each of the $O(d)$ precomputed
group records. Each record contains the group's type, boundaries, counts,
and cumulative weights. These data are computed classically and incorporated
as fixed constants into the circuit. Comparisons identify the matching
group, and controlled additions select its data for use in
\cref{eq:D-closed,eq:E-closed}. This selection is coherent when the input
rank is in superposition. Consequently, evaluating $D(r)$ or $E(s)$ requires
$O(d)$ comparisons and controlled additions, including the selection of the
stored data, and two evaluations of $S$.
Appendix~\ref{subsec:group-selection} gives the explicit selection procedure. 

\subsection{Arithmetic width and gate conventions}

Bit complexity matters because the exact integers can be much wider than
an $n$-bit grid index. Choose a sufficient magnitude width
\begin{equation}
 B=O\bigl(\beta+dn+d\log(d+1)\bigr)
 \label{eq:arithmetic-width}
\end{equation}
for the absolute values of $P,Q,S,D,E$ and their intermediate arithmetic, with
$B\ge n+1$ and $NZ<2^{B}$.
A signed intermediate uses an additional sign bit. The binomial calculation
is exact: the width must bound its signed intermediate terms, not only the
final nonnegative sum. Appendix~\ref{app:arithmetic-width-proof} derives an
explicit sufficient choice in \cref{eq:explicit-arithmetic-width}.
The circuit for preparing a uniform state is allowed an extra endpoint bit for the constant
$2^{B}$.

We count elementary gates over Clifford+$T$, so reversible Boolean
arithmetic can be implemented exactly using NOT, CNOT, and Toffoli gates,
with a decomposition of each Toffoli into a constant number of gates. This is a total gate
count, not a concrete count of $T$ gates or an estimate of circuit depth.
Let $\Mcost(L)\ge L$ and $\mathsf W(L)$ bound, respectively,
the numbers of elementary gates and ancilla qubits in a chosen clean
reversible multiplication circuit on $L$-bit magnitudes.
The circuit preserves its two $L$-qubit input registers and XORs their
full product into a $2L$-qubit output register. The bound $\mathsf W(L)$
counts internal ancilla qubits beyond these input and output registers.
The gate count includes internal uncomputation. The ancilla qubits are
initialized to $\ket0$ and returned to $\ket0$. We take these resource bounds to be nondecreasing
in $L$. Additions and comparisons cost $O(L)$ gates and are therefore
bounded by $O(\Mcost(L))$. Exact division by a known constant
$h=2^\nu o$, with $o$ odd, reduces to a shift by $\nu$ bits followed by
multiplication by the precomputed inverse of $o$ modulo $2^L$.
Appendix~\ref{subsec:constant-division} gives the reversible implementation,
which uses $O(\Mcost(L))$ gates. 

For the quasi-linear estimates below, we use fast integer multiplication
together with reversible compute--copy--uncompute.
Harvey and van der Hoeven~\cite{harvey2021integer} establish the classical
$O(L\log L)$ multiplication bound. The reversible implementation described
in Appendices~A.1 and~A.4 of
Ragavan and Vaikuntanathan~\cite{ragavan2023space} gives
\begin{equation}
 \begin{aligned}
  \Mcost(L)&=O\!\left(L\log(2L)\right)=\Ot(L),\\
  \mathsf W(L)&=O\!\left(L\log(2L)\right)=\Ot(L).
 \end{aligned}
 \label{eq:quasilinear-multiplication}
\end{equation}
The circuit computes the product, XORs it into the output register,
and reverses the intermediate computation. It therefore acts coherently
on superpositions and returns all ancilla qubits to zero.
This implementation uses $O(L\log(2L))$ ancilla qubits.

We retain $\Mcost$ and $\mathsf W$ to display the dependence on the
arithmetic implementation. Schoolbook multiplication permits
$\Mcost(L)=O(L^2)$ and $\mathsf W(L)=O(L)$, while reversible Karatsuba
multiplication permits $\Mcost(L)=O(L^{\log_2 3})$ and
$\mathsf W(L)=O(L)$~\cite{gidney2019}.

Unless stated otherwise, the oracle for a function $g$ preserves its input
$x$, applies bitwise XOR of the binary encoding of $g(x)$ to a target
register holding an arbitrary bit string $b$, and returns each ancilla
qubit to $\ket0$:
\begin{equation}
 \begin{aligned}
 &O_g\ket{x}\ket{b}\ket{0^{a_g}}_{\rm ancilla}\\
 &\quad=\ket{x}\ket{b\mathbin{\oplus}g(x)}\ket{0^{a_g}}_{\rm ancilla}.
 \end{aligned}
 \label{eq:xor-oracle-convention}
\end{equation}
Here $b$ and the encoding of $g(x)$ have the same width, $\oplus$ denotes
bitwise XOR, and $a_g$ is the number of ancilla qubits used by the oracle.
We call this evaluation \emph{clean} because those ancilla qubits begin and
end in $\ket{0^{a_g}}$. It is implemented by computing $g(x)$ into an
ancilla register, XORing that value into the target register, and reversing
the computation. Its gate count includes these three steps. Values outside a stated
valid domain are assigned a deterministic default, with arithmetic operations
enabled only on inputs in their specified domains,
so the circuit is a unitary on all computational basis states. No claim
about those default values is needed on the valid input support.

Separately, let $\mathsf S_{\rm rot}(\eta)$ denote the \emph{number of gates in}
a compiled controlled rotation whose error in operator norm is at most $\eta$.
It is not the classical runtime of finding the gate sequence. We assume
$\mathsf S_{\rm rot}(\eta)=\operatorname{poly}(\log(1/\eta))$~\cite{dawsonnielsen2005}.
The explicit precision bounds in
\Cref{cor:precision-regimes} use
$\mathsf S_{\rm rot}(\eta)=O(\log(2/\eta))$, as established in
Appendix~\ref{app:controlled-rotation-synthesis}.
Here $\eta$ is the error tolerance for one rotation, whereas $\epsilon$
denotes the total error tolerance for state preparation.

Angles that depend on quantum registers are computed using reversible
fixed-point arithmetic. Here \emph{fixed-point} means that a fixed number
$p$ of binary digits represents the fractional part of a number, so the
binary point has a fixed position. A register storing an integer $A$ is
interpreted as the value $A/2^p$. For example, with $p=3$, the stored integer
$11$ represents $11/8=1.375$. Calculations operate on these integer
encodings, with results rounded to the prescribed precision. To implement
them reversibly, the circuit retains the information needed to undo each
operation. After the computed angle controls the rotations, uncomputation
returns each ancilla qubit to $\ket0$.

These angle computations and the synthesized
rotations have costs separate from $\Mcost(B)$.
Section~\ref{subsec:uniform-implementation} gives their gate bounds;
Appendices~\ref{app:controlled-synthesis} and~\ref{app:angle-computation}
detail the synthesis and angle computation, respectively. 
Both costs are included in the total preparation cost in
\cref{eq:clean-total-cost}.

Consequently,
\cref{eq:S-binomial,eq:D-closed,eq:E-closed} give the reversible maps
\begin{equation}
 \ket{r}\ket{0}\longmapsto\ket{r}\ket{D(r)},
 \qquad
 0\le r\le m_{\mathcal{L}},
 \label{eq:D-oracle}
\end{equation}
and
\begin{equation}
 \ket{s}\ket{0}\longmapsto\ket{s}\ket{E(s)},
 \qquad
 0\le s\le m_{\mathcal{H}}.
\label{eq:E-oracle}
\end{equation}
For either map, let $C_{\rm cum}$ bound the gate count for one evaluation
with a control qubit when needed, including the uncomputation that returns its
ancilla qubits to zero. The inverse circuit has the same gate count.
Selecting the group and evaluating $S$ twice require $O(d)$ arithmetic
operations, each costing at most $O(\Mcost(B))$ gates. Adding a control
changes the cost of each reversible Boolean gate by at most a constant
factor. Thus
\begin{equation}
 C_{\rm cum}=O\bigl(d\Mcost(B)\bigr).
 \label{eq:cumulative-cost}
\end{equation}

\subsection{Reversible monotone searches}

The minima in \cref{eq:s-of-r,eq:rho-of-s} can be found by a reversible binary
search. It has a fixed $n$-step schedule; its decisions are quantum control
bits, not measurements. For a fixed $r\in\{1,\ldots,m_{\mathcal{L}}\}$, we search over a heavy
rank $a\in\{1,\ldots,m_{\mathcal{H}}\}$ for the first $a$ satisfying
$E(a)>D(r-1)$; the result is $s(r)$. For a fixed
$s\in\{1,\ldots,m_{\mathcal{H}}\}$, we search over a light rank
$u\in\{1,\ldots,m_{\mathcal{L}}\}$ for the first $u$ satisfying
$D(u)\ge E(s)$; the result is $\rho(s)$. Because every $\delta_r$ and $e_s$ is
positive, $D$ and $E$ are strictly increasing, so each condition is false
before its minimum and true from its minimum onward. For the first search,
$E(0)\le D(r-1)$ and $E(m_{\mathcal{H}})>D(r-1)$. For the second search,
$D(0)<E(s)$ and $D(m_{\mathcal{L}})\ge E(s)$. These endpoint inequalities
follow from \cref{eq:DE-conservation}.

Both searches occur only when there is at least one light and one heavy index.
Hence $m_{\mathcal{L}},m_{\mathcal{H}}\le N-1$, so $n$ iterations suffice.
For either search, begin with $x=0$ and consider the $n$ increments
$2^{n-1},\ldots,2,1$ in decreasing order. At the iteration with exponent
$\nu\in\{n-1,\ldots,0\}$, set $x':=x+2^\nu$. For the search for $s(r)$,
replace $x$ by $x'$ precisely when $x'\le m_{\mathcal{H}}$ and
$E(x')\le D(r-1)$. For the search for $\rho(s)$, replace $x$ by $x'$ precisely
when $x'\le m_{\mathcal{L}}$ and $D(x')<E(s)$. After the $n$ iterations, $x$
is the largest rank at which the relevant condition is false, so the required
rank is $x+1$.

At each step, the circuit checks the condition for replacing $x$ by $x'$
and stores the result in an ancilla qubit: $1$ if the condition holds and
$0$ otherwise. It then uncomputes the temporary cumulative value and uses
this qubit to control the update of $x$.
Thus the circuit reuses the $O(dB+\mathsf W(B))$ ancilla qubits for
arithmetic at every step. It retains the comparison result from each
iteration in a separate ancilla qubit. It then copies $x+1$ to the output and applies the
iterations in reverse, returning $x$ and these $n$ qubits to zero. The
comparisons with $m_{\mathcal{H}}$ and
$m_{\mathcal{L}}$ control the evaluations of $E(x')$ and $D(x')$,
respectively, so neither function is evaluated outside the domains stated in
\cref{eq:D-oracle,eq:E-oracle}. On reversal, undo the update of $x$ \emph{before}
recomputing the comparison result to reset the corresponding ancilla qubit. This restores
the same inputs to the comparisons as in the forward step.

The cost below includes copying the search result and reversing the search
to reset its internal ancilla qubits to zero. The \emph{search threshold}
is the value against which the cumulative values are compared:
$D(r-1)$ when finding $s(r)$ in \cref{eq:s-of-r}, and $E(s)$ when finding
$\rho(s)$ in \cref{eq:rho-of-s}. For this cost bound, the threshold is
already available in an input register, which the search leaves unchanged.

We use the bound $C_{\rm cum}$ from \cref{eq:cumulative-cost}.
In each forward iteration,
one call to the cumulative oracle computes $D(x')$ or $E(x')$ into its
output register and resets its internal ancillas. After the comparison
result has been stored in its ancilla qubit, a second call, to the inverse
oracle, erases the cumulative value. These calls cost at most $2C_{\rm cum}$.
Reversing the iteration also requires two calls: recompute the cumulative
value to reset that ancilla qubit, then erase that value again. The $n$
forward iterations and $n$ reverse iterations therefore require at most
$4nC_{\rm cum}$ gates for cumulative evaluations.

The remaining comparisons, operations on the corresponding ancilla qubit, and updates
of the candidate rank cost $O(B)$ gates per iteration, since $B\ge n+1$.
Forming and copying the final rank $x+1$ cost $O(n)$ gates. Including these
operations in both directions gives $O(nB)$ additional gates. Consequently,
\begin{equation}
 C_{\rm search}\le4nC_{\rm cum}+O(nB)
 =O\bigl(nd\Mcost(B)\bigr).
\label{eq:search-cost-detail}
\end{equation}
The last equality substitutes \cref{eq:cumulative-cost} and uses
$d\ge1$ and $\Mcost(B)\ge B$ to absorb the $O(nB)$ term.
At most $n$ ancilla qubits are needed to retain the comparison results;
the ancilla qubits used for arithmetic are reused at every step.
Computing the threshold from the input rank and later erasing it
are accounted for by the calling oracle.

\subsection{The implicit alias oracle and its cost}

Combining classification, rank conversion, evaluations of $D$ and $E$, and the
two searches gives the following oracle.

\begin{definition}[Polynomial implicit alias oracle]
\label[definition]{def:implicit-oracle}
The polynomial implicit alias oracle $\PIA$ is a reversible implementation of
\begin{equation}
 \PIA\ket{i}\ket{0}\ket{0}
 =\ket{i}\ket{\alias(i)}\ket{\keep(i)},
 \qquad i\in[N],
\label{eq:PIA-action}
\end{equation}
where $\alias$ and $\keep$ are defined in \cref{eq:ordered-table}.
The register for $\alias(i)$ has $n$ qubits, and the register for $\keep(i)$
has $\lceil\log_2(Z+1)\rceil$ qubits.
\end{definition}

Given $i$, the oracle first compares it with the stored group boundaries and
writes the corresponding group number $j$ and type $\theta_j$ into temporary
registers. If $\theta_j=\mathcal{L}$, it computes the light rank $r$ using
\cref{eq:light-rank}, evaluates $s(r)$, and converts that heavy rank back to the
index $h_{s(r)}$. If $\theta_j=\mathcal{H}$, it computes the heavy rank $s$ and
evaluates $\rho(s)$. The entries in these two cases are then obtained from
\cref{eq:ordered-light,eq:ordered-heavy-keep,eq:ordered-heavy}. If
$\theta_j=\mathcal{B}$, the oracle uses $\alias(i)=i$ and $\keep(i)=Z$ directly.
After copying $\alias(i)$ and $\keep(i)$ to their output registers, the oracle
reverses every intermediate calculation. The gates for all three cases form a
single circuit controlled by $\theta_j$, so the same circuit applies when the
input register is in a superposition of indices.

\figrev{Figure~\ref{fig:opia} summarizes this evaluation as a
compute--copy--uncompute procedure. Its two branches use the same compact
group data and cumulative functions; the balanced case requires no
monotone search. The input index is preserved, and the alias and threshold
remain in their output registers after the temporary group labels, ranks,
and search data have been uncomputed.}

\begin{figure*}[!t]
\centering
\includegraphics[width=0.6\textwidth]{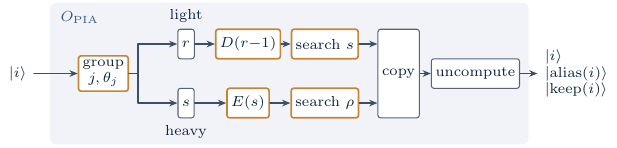}
\caption{Coherent evaluation of the polynomial implicit alias oracle
$\PIA$. Group selection determines the group $j$ and type $\theta_j$.
For a light index, its rank $r$ gives $D(r-1)$ and the search for $s(r)$;
for a heavy index, its rank $s$ gives $E(s)$ and the search for $\rho(s)$.
The resulting table entries are copied to the output registers before
all temporary computations are reversed. Amber outlines mark operations
using the fixed classical data. The branches are controlled coherently,
without measurement. The balanced case, which directly returns
$\alias(i)=i$ and $\keep(i)=Z$, is omitted from the schematic, as are
the final arithmetic and rank-to-index conversions.}
\label{fig:opia}
\end{figure*}

\begin{theorem}[Implicit representation]
\label[theorem]{thm:implicit-representation}
Let $P$ be the integer polynomial in \cref{eq:integer-P}, evaluated at the
indices in $[N]$, where $N=2^n$. Let $\beta$ be defined by
\cref{eq:coefficient-width}, and choose $B$ as described after
\cref{eq:arithmetic-width}. In addition to the fixed coefficients of $P$, the
ordered alias table has a classical representation consisting of $O(d)$
group records and the $2d+1$ coefficients $\kappa_k$. For
$j\in\{1,\ldots,G\}$, the $j$th group record contains the boundaries
$\gamma_{j-1}$ and $\gamma_j$ from \cref{eq:group-boundaries}; its light,
heavy, or balanced type $\theta_j$; the counts $\lambda_j$ and $\eta_j$
of light and heavy indices before $\gamma_{j-1}$, respectively, defined in
\cref{eq:group-earlier-counts}; and the corresponding cumulative missing
and extra weights $D(\lambda_j)$ and $E(\eta_j)$ from \cref{eq:DE}.
The representation also stores
$G$, $Z$, $m_{\mathcal{L}}$, and $m_{\mathcal{H}}$. These data can be computed
using a number of bit operations polynomial in $d$, $n$, and $B$, and
they occupy $O(d(n+B))$ bits. Here
$\Mcost(B)$ is the total gate count for multiplying two integers whose
absolute values can each be represented using at most $B$ bits.
By the arithmetic convention above, it also bounds the cost of each exact
arithmetic primitive used by the oracle up to a constant factor.
Writing $C_{\mathrm{gate}}(\PIA)$ for the total gate count of
one application of the oracle, we have
\begin{equation}
 C_{\mathrm{gate}}(\PIA)=O\bigl(nd\Mcost(B)\bigr)
\label{eq:PIA-cost}
\end{equation}
which specializes to $\Ot\bigl(ndB\bigr)$ for the
implementation in \cref{eq:quasilinear-multiplication}.
The construction uses
$O(dB+n+\mathsf W(B))$ ancilla qubits.
For the same implementation, this is
$\Ot(dB+n)$.
The $O(dB)$ term includes the $O(B)$ qubits used to hold the shifted
operand, fixed multiplier, and full product for exact division in
Appendix~\ref{subsec:constant-division}.
The ancilla qubits used for multiplication are reused between arithmetic operations.
The circuit does not store or query an explicit list of all $N$
pairs $(\alias(i),\keep(i))$.
\end{theorem}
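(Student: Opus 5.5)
The plan is to split the proof into three parts: correctness and size of the classical representation, its preprocessing cost, and the gate and ancilla counts of the coherent circuit.

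\emph{Representation and size.} By \Cref{lem:classification-groups} there are $G\le4d+1$ groups. Each record stores $\gamma_{j-1},\gamma_j,\lambda_j,\eta_j\le N$, which take $O(n)$ bits each, together with a two-bit type and $D(\lambda_j),E(\eta_j)\le NZ<2^B$. Each record therefore uses $O(n+B)$ bits. The $2d+1$ integers $\kappa_k$ come from \cref{eq:finite-difference-coefficients} and satisfy $|\kappa_k|\le 2^k\max_{u\le 2d}Q(u)$, so they fit in $O(B)$ bits by the choice of $B$. Adding $G,Z,m_{\mathcal L},m_{\mathcal H}$ gives $O(d(n+B))$ bits in total.

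I will argue that these data determine the table completely, as follows.
\begin{itemize}
\item Classification and rank conversion follow from \cref{eq:light-rank} and its heavy analogue.
\item $D$ and $E$ at any rank follow from \cref{eq:D-closed,eq:E-closed} combined with \Cref{prop:exact-cumulative}.
\item $s(r)$ and $\rho(s)$ are then fixed by \cref{eq:s-of-r,eq:rho-of-s}.
\item The entries follow from \cref{eq:ordered-table}, whose validity is \Cref{thm:ordered-valid}.
\end{itemize}

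\emph{Preprocessing cost.} The steps are these.
\begin{itemize}
\item Compute $\kappa_k$ from $2d+1$ exact evaluations of $Q$.
\item Compute $Z=S(N)$ from the binomial formula.
\item Form $F=NP^2-Z$ and its square-free part by exact gcd; both take time polynomial in $d$ and the bit sizes.
\item Run certified root isolation~\cite{sagraloff2011}, refining until neighboring integers are separated. Appendix~\ref{app:root-to-grid} supplies the required separation bound, and exact evaluation of $F$ decides the signs and the integer roots.
\item Compute $\lambda_j,\eta_j$ by prefix sums of group lengths.
\item Compute $D(\lambda_j),E(\eta_j)$ by telescoping \cref{eq:D-closed,eq:E-closed} across groups, using $O(d)$ evaluations of $S$.
\end{itemize}
Every step is polynomial in $d,n,B$.

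\emph{Circuit cost.} I follow the description of $\PIA$ step by step.
\begin{itemize}
\item Group selection compares $i$ against $O(d)$ constants and performs controlled loads, at cost $O(dB)$.
\item Rank computation is $O(d)$ controlled additions.
\item In the light branch, $D(r-1)$ is computed at cost $C_{\rm cum}$ and the search for $s(r)$ runs. The heavy rank is converted back to an index with $O(d)$ comparisons and additions.
\item In the heavy branch, $E(s)$ is computed, the search for $\rho(s)$ runs, $D(\rho(s))$ is computed at cost $C_{\rm cum}$, and $\keep(h_s)=Z+E(s)-D(\rho(s))$ is formed. Forming $\alias(h_s)=h_{s+1}$, or $h_s$ when $s=m_{\mathcal H}$, needs one rank conversion.
\end{itemize}
Both branches are controlled on $\theta_j$, which changes the cost by at most a constant factor. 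By \cref{eq:cumulative-cost,eq:search-cost-detail}, the forward pass costs $O(nd\Mcost(B))+O(d\Mcost(B))$. Copy-out and full uncomputation at most double this, which yields \cref{eq:PIA-cost}. Substituting \cref{eq:quasilinear-multiplication} gives $\Ot(ndB)$.

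For ancillas, the binomial evaluation keeps $O(d)$ registers of width $B$, and the search keeps $n$ comparison bits. The threshold, group data and ranks take $O(B+n)$ qubits, and the multiplier scratch $\mathsf W(B)$ is reused between operations. This gives $O(dB+n+\mathsf W(B))$.

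The main obstacle I expect is bookkeeping rather than any single step: every intermediate value, including signed binomial partial sums and the exact-division intermediates, must fit in the width $B$. I would defer this to Appendix~\ref{app:arithmetic-width-proof} and check only that every quantity in the records is bounded by $NZ$ or by the stated binomial bounds. A secondary care point is making the strict versus non-strict search conventions of \cref{eq:s-of-r,eq:rho-of-s} hold coherently on out-of-domain inputs, which the default-value convention handles.
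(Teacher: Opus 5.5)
Your proposal is correct and follows essentially the same route as the paper's proof. Both use the $O(d)$ records from \Cref{lem:classification-groups}, exact $\kappa_k$ and $Z=S(N)$, root isolation for the boundaries, and one accumulator scan for $D(\lambda_j),E(\eta_j)$, and both bound the gate count by a constant number of $n$-step monotone searches, each costing $O(nd\Mcost(B))$. You leave implicit the evaluation of $v_i=NQ(i)$ needed for $\keep(\ell_r)$ and the trivial balanced branch; the paper lists these explicitly, but they cost only $O(d\Mcost(B))$ and are absorbed in \cref{eq:PIA-cost}.
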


\begin{proof}
By \Cref{prop:exact-cumulative}, either map in
\cref{eq:D-oracle,eq:E-oracle} has gate complexity
$O(d\Mcost(B))$. Each binary search evaluates one of these maps in each of
$n$ forward steps and again while reversing those steps. Thus one search uses
$O(n)$ evaluations of $D$ or $E$. The controlled light and heavy cases require
only a constant number of such searches. Evaluating $v_i=NQ(i)$, forming a
value of $\keep(i)$, identifying a group, and converting between an index and
its rank require $O(d)$ additional arithmetic operations. Their costs are
absorbed by \cref{eq:PIA-cost}.

By \Cref{lem:classification-groups}, there are $O(d)$ group records. The
coefficients $\kappa_k$ are obtained by repeated subtraction, and
$Z=S(N)$ is then computed using \cref{eq:S-binomial}. If $F$ is nonzero, the
procedure for root isolation described above finds the group boundaries; if $F$ is
zero, there is one balanced group. The stored boundary values are then obtained
by one scan through the groups.
Initialize two accumulators $d_{\mathrm{cur}}=e_{\mathrm{cur}}=0$. Before
processing group $j$, store
$D(\lambda_j)=d_{\mathrm{cur}}$ and $E(\eta_j)=e_{\mathrm{cur}}$. For a light
group, update
\begin{equation}
 d_{\mathrm{cur}}\leftarrow d_{\mathrm{cur}}
 +(\gamma_j-\gamma_{j-1})Z
 -N\bigl[S(\gamma_j)-S(\gamma_{j-1})\bigr],
\label{eq:preprocess-D-update}
\end{equation}
and, for a heavy group, update
\begin{equation}
 e_{\mathrm{cur}}\leftarrow e_{\mathrm{cur}}
 +N\bigl[S(\gamma_j)-S(\gamma_{j-1})\bigr]
 -(\gamma_j-\gamma_{j-1})Z.
\label{eq:preprocess-E-update}
\end{equation}
A balanced group changes neither accumulator. Each of these calculations uses
a number of bit operations polynomial in $d$, $n$, and $B$. The stated storage
bound follows from the number and widths of the stored integers. If every index
is balanced, one balanced group describes the table and no binary search is
needed.
\end{proof}

The description and evaluation costs are polynomial in $n,d,\beta$.
Calling the description exponentially smaller than an explicit $N=2^n$ table
presumes, for example, $d,\beta=\operatorname{poly}(n)$; the theorem does
not claim compression for arbitrary exponentially large input polynomials.
It gives the cost of one application of the oracle and does not claim
that all $N$ entries of the alias table can be written down using fewer than $O(N)$
operations. When
the input is a superposition of indices, the circuit in \cref{eq:PIA-action}
evaluates the corresponding pair $(\alias(i),\keep(i))$ for every component of
the superposition.

\subsection{Classical preprocessing and coherent quantum operation}

The input polynomial is fixed before the quantum circuit is built.
The following preprocessing is classical and exact.
\begin{enumerate}
\item Form the monomial coefficients
\begin{equation}
 \alpha_k=a_kN^{d-k},\qquad P(i)=\sum_{k=0}^d\alpha_k i^k.
 \label{eq:monomial-coefficients}
\end{equation}
Evaluate $Q(u)$ at $u=0,\ldots,2d$, take finite differences to obtain the
$\kappa_k$, and compute
\begin{equation}
 Z=S(N)=\sum_{k=0}^{2d}\kappa_k\binom{N}{k+1}.
 \label{eq:normalization-from-differences}
\end{equation}
\item Form $F=NP^2-Z$, isolate its roots when it is nonzero, and construct the
groups of constant sign. A scan of those $O(d)$ groups gives their counts and
cumulative boundary values using
\cref{eq:preprocess-D-update,eq:preprocess-E-update}.
\item Compile these integers as constants in the arithmetic circuits. For a
chosen circuit error, precompute and synthesize the fixed rotations for
$U_Z$ and choose the precision of the circuit for a variable range.
\end{enumerate}
The first step uses $O(d)$ polynomial evaluations.
The compression is useful when $d$ and $\beta$ are small
relative to $N$; in particular, they may be polynomial in $n=\log_2N$.

During quantum execution, only the \emph{arguments} of the arithmetic are in
superposition. An input $i$ determines its group, its light or heavy rank,
and the cumulative values needed by a monotone search. Those calculations
produce $\keep(i)$ and $\alias(i)$; ranking and unranking reuse the same
operations. The output $y$ later determines the fiber volume $v_y=NP(y)^2$,
the sign of $P(y)$, and the rotation angles used to prepare and unprepare a
uniform state over the $v_y$ fiber positions.

The roots are used during classical preprocessing to compute the group
boundaries. The circuit uses these boundaries, the coefficients $\kappa_k$,
and $Z$ as fixed data. The gate bounds include the cost of selecting among
$O(d)$ stored group records. For fixed $P$, the requested circuit accuracy
determines the precision used to compute angles and synthesize rotations.
The pipeline is
\begin{equation}
 \begin{gathered}
 (n,a_0,\ldots,a_d)\xrightarrow{\rm classical}\text{fixed data}\\
 \xrightarrow{\rm compile}\text{quantum circuit}.
 \end{gathered}
 \label{eq:classical-quantum-pipeline}
\end{equation}

\section{State preparation and fiber ranking}
\label{sec:clean-preparation}

Recall from \Cref{def:fragment} that a fragment is a pair $(i,z)$:
$i\in[N]$ identifies an entry of the alias table, and $z\in[Z]$ labels one
of the $Z$ fragments initially assigned to that index. The shuffling map
$\pi$ in \cref{eq:fragment-map} sends this fragment to the output
$y=\pi(i,z)$, which is either $i$ or $\alias(i)$ according to whether
$z<\keep(i)$ or $z\ge\keep(i)$. Coherent evaluation writes $y$ into a new
register while retaining $i$ and $z$. On a uniform superposition of input
fragments, it gives
\begin{equation}
 \begin{gathered}
 \frac{1}{\sqrt{NZ}}\sum_{\substack{i\in[N]\\z\in[Z]}}
 \ket{0^n}\ket{i}\ket{z}\\
 \longmapsto
 \frac{1}{\sqrt{NZ}}\sum_{\substack{i\in[N]\\z\in[Z]}}
 \ket{\pi(i,z)}\ket{i}\ket{z}\\
 =\sum_{y:v_y>0}\frac{|P(y)|}{\sqrt Z}
 \ket{y}\ket{\mathrm{temp}_y}.
 \end{gathered}
 \label{eq:alias-register-overview}
\end{equation}
The three registers on the first two lines hold, in order, the output index
$y$, the source index $i$, and the position $z$ within the fragments from
source $i$. The last line is the regrouping in \cref{eq:fragment-state}:
$\ket{\mathrm{temp}_y}$ is the uniform superposition of source pairs
$(i,z)$ satisfying $\pi(i,z)=y$. By \cref{eq:fiber-size}, measuring the
output gives $y$ with probability $v_y/(NZ)=P(y)^2/Z$. Distinct outputs
have disjoint fibers, so the states $\ket{\mathrm{temp}_y}$ are orthogonal.
Thus the output register is entangled with the pair of retained registers
whenever at least two outputs have nonzero probability.

We combine fiber ranking with circuits for preparing uniform states to
remove this entanglement. Applying the phase determined by the sign of
$P(y)$ then gives the pure state in \cref{eq:psi-P}.
We first introduce the two circuits for preparing
uniform states, then construct the ranking procedure. Their implementations
and full costs follow the ranking construction.

\begin{table}[!htbp]
\caption{Gate complexity of coherent subroutines for exact arithmetic.
Group selection identifies the classification group for an index or a
specified light or heavy rank and retrieves its stored data. A rank is the
position of an index in the ordered list of light or heavy indices.
Conversion in either direction includes selecting the appropriate group.
Each bound includes returning every ancilla qubit to $\ket0$ by
uncomputation. Implementing a constant number of queries to an oracle and
its inverse preserves the corresponding asymptotic gate bound.}
\label{tab:arithmetic-ledger}
\small
\begin{tabular}{@{}p{0.62\columnwidth}p{0.31\columnwidth}@{}}
\toprule
Subroutine & Gate Complexity \\
\midrule
Selecting a classification group & $O(dB)$ \\
Converting between an index and its rank & $O(dB)$ \\
$P,Q,v$; $S$; either $D$ or $E$ & $O(d\Mcost(B))$ \\
One monotone search & $O(nd\Mcost(B))$ \\
$\PIA$, $R$, or $R^{-1}$ & $O(nd\Mcost(B))$ \\
\bottomrule
\end{tabular}
\end{table}

\subsection{Two different primitives for preparing uniform states}

For $1\le m<2^{B}$, choose an ideal unitary $U_m$ on a
$B$-qubit register such that
\begin{equation}
 U_m\ket{0^{B}}
 =\frac{1}{\sqrt m}\sum_{t=0}^{m-1}\ket t,
 \label{eq:Um}
\end{equation}
and define $U_0$ to be the identity. The construction below uses a binary tree
to specify a unitary extension to every basis state.

The two circuits for preparing uniform states act at different stages of
the preparation.
The circuit $U_Z$, with \emph{fixed capacity}, prepares the initial uniform
superposition of fragment positions, and $Z$ is a classical constant.
After ranking, the circuit $U_{v_y}^{\dagger}$, with \emph{variable fiber volume},
returns the register holding the fiber position to zero. Its range $v_y$ depends on the output
index $y$, so the circuit must select the appropriate inverse for every $y$
in superposition, without measuring $y$. For fixed capacity, the rotation
angles are determined classically; for variable fiber volume, the circuit
requires coherent arithmetic unless a different implementation for variable
ranges is supplied.

\subsection{Coherent alias sampling}

Apply $\Had^{\otimes n}$ to an index register and $U_Z$ to a register holding
the fragment position. This prepares the uniform superposition
\begin{equation}
 \frac{1}{\sqrt{NZ}}
  \sum_{i\in[N]}\sum_{z\in[Z]}\ket{i}\ket{z}.
\label{eq:uniform-fragments}
\end{equation}
Append an $n$-qubit output register initialized to zero. One call to $\PIA$
computes $\alias(i)$ and $\keep(i)$, and a comparison bit $\chi$ is set to $1$
if and only if $z\ge\keep(i)$. The circuit writes $i$ to the output register
when $\chi=0$ and writes $\alias(i)$ when $\chi=1$. Its value is therefore
$\pi(i,z)$. The comparison is reversed and one call to $\PIA^\dagger$ clears the two
outputs of the alias table. Thus this sampling subroutine uses \emph{one forward
and one inverse} call to the alias oracle. Its exact gate cost
is $2C_{\rm gate}(\PIA)+O(n+B)$. The resulting state is
\begin{equation}
 \frac{1}{\sqrt{NZ}}
 \sum_{i\in[N]}\sum_{z\in[Z]}
 \ket{\pi(i,z)}\ket{i}\ket{z}.
\label{eq:alias-with-garbage}
\end{equation}
Regrouping the fragments by their output index and using
\cref{eq:fiber-size} gives
\begin{equation}
 \sum_{\substack{y\in[N]\\v_y>0}}
 \frac{|P(y)|}{\sqrt Z}\ket{y}\ket{\mathrm{temp}_y},
\label{eq:garbage-state}
\end{equation}
where
\begin{equation}
 \ket{\mathrm{temp}_y}
 =\frac{1}{\sqrt{v_y}}
  \sum_{(i,z)\in\pi^{-1}(y)}\ket{i}\ket{z}.
\label{eq:temporary-state}
\end{equation}
The two registers holding the source index $i$ and fragment position $z$
therefore retain a garbage state $\ket{\mathrm{temp}_y}$ that depends on
the output $y$. If these registers are retained, a phase computed from the
sign of $P(y)$ may be applied at this point to restore the sign of each
amplitude. In the garbage-free construction below, this phase is postponed
until the registers encoding the fragment have been returned to zero.

The states $\ket{\mathrm{temp}_y}$ have disjoint supports in the computational basis and
are orthogonal for distinct output indices: each fragment $(i,z)$ produces
exactly one output $y$. Discarding the registers holding $(i,z)$ therefore
removes the coherence between distinct outputs and produces the diagonal
mixture in \cref{eq:alias-reduced-state}. The state
in \cref{eq:garbage-state} is therefore a preparation with retained garbage;
a standalone preparation of \cref{eq:psi-P} requires the registers holding the source
index and fragment position to be erased.

\subsection{Fiber ranking}

\begin{definition}[Efficient fiber ranking]
\label[definition]{def:fiber-ranking}
Let $f:X\to Y$ be a map of finite sets. For every $y\in f(X)$, define
$m_y:=|f^{-1}(y)|$. The preimage $f^{-1}(y)$ is called
the \emph{fiber} of $y$, and its volume is $m_y$. A
\emph{fiber ranking} is a bijection
\begin{equation}
 R_f:X\longrightarrow
 \{(y,t):y\in f(X),\ 0\le t<m_y\}
\label{eq:ranking-map}
\end{equation}
whose first component is $f$, so $R_f(x)=(f(x),t)$. The position $t$
distinguishes inputs that produce the same output $y$, so the pair $(y,t)$
uniquely determines the original input $x$. Positions within a fiber are
labeled $0,1,\ldots,m_y-1$, whereas the light and heavy ranks $r$ and $s$
start from $1$ by \cref{eq:index-ranks}. The
ranking is \textit{efficient} if $R_f$, $R_f^{-1}$, and the map $y\mapsto m_y$ have
reversible implementations that return each ancilla qubit to $\ket0$
and use a number of gates polynomial in $b_f$. Here $b_f$ is the sum of
the maximum numbers of bits needed to encode one element of $X$ and one
element of $Y$, and the number of bits in the fixed data specifying $f$.
\end{definition}

Every $y\in f(X)$ has a nonempty fiber, so $m_y\ge1$.

When $m_y<2^{B}$, apply $U_{m_y}^{\dagger}$ to the position register
under the control of a register holding the fiber volume $m_y$. The joint
unitary is
\begin{equation}
 \mathcal U^{\dagger}:=
 \sum_{m=0}^{2^{B}-1}\ket{m}\!\bra{m}\otimes U_m^{\dagger}.
\label{eq:selected-uniform-inverse}
\end{equation}
The first register stores $m_y$, and the second is the $B$-qubit position
register. For each $m$, this operation maps
$\ket m\ket\phi$ to $\ket m U_m^\dagger\ket\phi$:
it preserves the register holding $m$ and applies the corresponding inverse
to the position register, coherently for all values of $m$ in superposition.
In particular, when $m=m_y$, it maps the uniform position state
$U_{m_y}\ket{0^B}$ to $\ket{0^B}$.

\figrev{Figure~\ref{fig:fiber-garbage} illustrates why this erasure is
possible for the running example. The $288$ source fragments producing
$y=6$ occupy three separated regions of the alias table. Ranking assigns
them consecutive positions $0,\ldots,287$, so their conditional state
becomes exactly $U_{288}\ket0$ after the source registers are cleared
by inverse ranking. Applying $U_{288}^{\dagger}$ then removes the
position register's dependence on the output, while preserving the
output amplitude $\sqrt{288/(8\cdot140)}=6/\sqrt{140}$.}

\begin{figure*}[!t]
\centering
\includegraphics[width=0.8\textwidth]{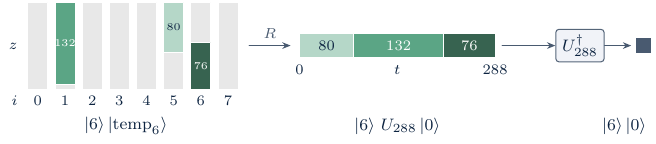}
\caption{Fiber ranking and garbage removal for output $y=6$ in the
eight-index example. The highlighted source regions are
$(5,z)$ with $60\le z<140$, $(1,z)$ with $8\le z<140$, and
$(6,z)$ with $0\le z<76$. Ranking places their $80$, $132$, and $76$
fragments in the consecutive intervals $[0,80)$, $[80,212)$, and
$[212,288)$, respectively. The arrow labeled $R$ denotes the reversible
change of coordinates implemented by computing the ranking and then
using inverse ranking to clear the source registers. Conditioned on
$y=6$, the remaining position state is $U_{288}\ket0$; its inverse
returns this register to zero. The displayed states are normalized
conditional states, with the common amplitude $6/\sqrt{140}$ suppressed.
The circuit applies this procedure coherently for all nonempty fibers;
exact erasure here assumes ideal rotations.}
\label{fig:fiber-garbage}
\end{figure*}

\begin{proposition}[Uncomputation by ranking]
\label[proposition]{prop:ranking-principle}
Let $f:X\to Y$, with $X\ne\varnothing$, admit an efficient fiber ranking.
Suppose $m_y<2^{B}$ and that a uniform state on $X$ has been prepared.
Let $n_X$ and $n_Y$ be the widths of the registers encoding $X$ and $Y$.
The ancilla register consists of the two $B$-qubit registers holding the
fiber position and volume, together with the ancilla qubits used for
arithmetic; denote its total width by $a_f$.
With ideal rotations for preparing uniform states, a unitary circuit maps
\begin{equation}
 \begin{gathered}
 \frac{1}{\sqrt{|X|}}\sum_{x\in X}\ket x_{\mathsf X}\ket{0^{n_Y}}_{\mathsf Y}
 \ket{0^{a_f}}_{\rm ancilla}\\
 \longmapsto\\
 \sum_{y\in f(X)}\sqrt{\frac{m_y}{|X|}}
 \ket{0^{n_X}}_{\mathsf X}\ket y_{\mathsf Y}\ket{0^{a_f}}_{\rm ancilla}.
 \end{gathered}
 \label{eq:ranking-principle}
\end{equation}
Using oracles that act as in \cref{eq:xor-oracle-convention}, the circuit
uses one query each to $O_{R_f}$ and $O_{R_f^{-1}}$, one query each to the
oracle $O_m$ for computing the fiber volume and its inverse $O_m^\dagger$, and one application
of $\mathcal U^\dagger$ from \cref{eq:selected-uniform-inverse}.
\end{proposition}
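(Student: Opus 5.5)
We trace the uniform input state through five steps, each a unitary acting on computational basis states, and track the registers $\mathsf X$, $\mathsf Y$, the position register $\mathsf T$, the volume register $\mathsf M$, and the arithmetic scratch.

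\emph{Step 1 (forward ranking).} We apply $O_{R_f}$, with input $\mathsf X$ and target $\mathsf Y\mathsf T$, both initially zero. By \cref{eq:xor-oracle-convention} this gives
\[
\frac{1}{\sqrt{|X|}}\sum_x\ket x\ket{f(x)}\ket{t(x)}\ket{0}\ket{0}.
\]

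\emph{Step 2 (inverse ranking).} We apply $O_{R_f^{-1}}$, with input $\mathsf Y\mathsf T$ and target $\mathsf X$. Since $R_f^{-1}(f(x),t(x))=x$, the XOR gives $x\oplus x=0$, so $\mathsf X$ is cleared. The state becomes
\[
\frac{1}{\sqrt{|X|}}\sum_x\ket{0^{n_X}}\ket{f(x)}\ket{t(x)}.
\]
Because $R_f$ is a bijection onto $\{(y,t):0\le t<m_y\}$, we reindex the sum over pairs $(y,t)$. This gives
\[
\sum_{y\in f(X)}\sqrt{m_y/|X|}\,\ket{0}\ket y\,U_{m_y}\ket{0^B},
\]
using \cref{eq:Um} and $m_y<2^B$. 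The reindexing is the key point, and it is immediate from the bijectivity in \Cref{def:fiber-ranking}.

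\emph{Step 3 (volume computation).} We apply $O_m$ to write $m_y$ into $\mathsf M$.

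\emph{Step 4 (clearing the position register).} We apply $\mathcal U^\dagger$ from \cref{eq:selected-uniform-inverse}. On the branch where $\mathsf M=m_y$, it maps $U_{m_y}\ket{0^B}$ to $\ket{0^B}$. Since $m_y\ge1$, the convention $U_0=I$ never arises.

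\emph{Step 5 (clearing the volume register).} We apply $O_m^\dagger$, which returns $\mathsf M$ to zero because $\mathsf Y$ still holds $y$.

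Each oracle returns its own scratch qubits to zero, so the whole ancilla block ends in $\ket{0^{a_f}}$. This yields \cref{eq:ranking-principle}, with exactly the stated query counts.

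The only delicate point is Step 2. The inverse-ranking oracle must be applied with $\mathsf X$ as its XOR target rather than writing into a fresh register. The proof also relies on $R_f^{-1}$ being correct on every pair in the support, and on no amplitude falling outside the valid domain. Both follow from the fact that $(y,t)$ always lies in the image of $R_f$.
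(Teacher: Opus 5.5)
Your proposal is correct and follows the paper's proof essentially step for step. The steps are forward ranking into $\mathsf Y\mathsf T$, inverse ranking XORed into $\mathsf X$ to clear it, reindexing by bijectivity to exhibit $U_{m_y}\ket{0^B}$ on $\mathsf T$, and then $O_m$, $\mathcal U^\dagger$, and $O_m^\dagger$, with the same query counts.
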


\begin{proof}
Write $R_f(x)=(f(x),t_f(x))$. In addition to $\mathsf X$, the register holding
the original input, allocate registers $\mathsf Y$, $\mathsf T$, and $\mathsf M$
for the output, fiber position, and fiber volume. For a fixed $x$, put
$(y,t)=R_f(x)$. The first two oracle queries act as
\begin{equation}
 \begin{gathered}
 \ket x_{\mathsf X}\ket{0^{n_Y}}_{\mathsf Y}\ket{0^B}_{\mathsf T}\\
 \xrightarrow{\ O_{R_f}\ }
 \ket x_{\mathsf X}\ket{y}_{\mathsf Y}\ket{t}_{\mathsf T}\\
 \xrightarrow{\ O_{R_f^{-1}}\ }
 \ket{x\mathbin{\oplus}R_f^{-1}(y,t)}_{\mathsf X}
 \ket{y}_{\mathsf Y}\ket{t}_{\mathsf T}\\
 =\ket{0^{n_X}}_{\mathsf X}\ket{y}_{\mathsf Y}\ket{t}_{\mathsf T}.
 \end{gathered}
 \label{eq:ranking-basis-replacement}
\end{equation}
The oracle $O_{R_f^{-1}}$ evaluates the inverse function $R_f^{-1}$ on
$(y,t)$ and targets the original register $\mathsf X$. It computes
$R_f^{-1}(y,t)=x$ into an ancilla register, XORs that value into $\mathsf X$,
and reverses the computation to return each arithmetic ancilla qubit to
$\ket0$. The ranking is bijective, so $(y,t)$ uniquely determines $x$.

By linearity and bijectivity, the remaining registers contain
\begin{equation}
 \begin{aligned}
 &\frac{1}{\sqrt{|X|}}\sum_{x\in X}\ket{f(x)}_{\mathsf Y}\ket{t_f(x)}_{\mathsf T}\\
 &=\frac{1}{\sqrt{|X|}}\sum_{y\in f(X)}\sum_{t=0}^{m_y-1}\ket{y}_{\mathsf Y}\ket{t}_{\mathsf T}\\
 &=\sum_{y\in f(X)}\sqrt{\frac{m_y}{|X|}}
    \ket y_{\mathsf Y}\,U_{m_y}\ket{0^B}_{\mathsf T}.
 \end{aligned}
 \label{eq:ranked-uniform-state}
\end{equation}
This step relabels equally weighted basis states. Query $O_m$ to compute
$m_y$ into $\mathsf M$, apply $\mathcal U^\dagger$ from
\cref{eq:selected-uniform-inverse} to $\mathsf M$ and $\mathsf T$, and query
$O_m^\dagger$ to erase $m_y$. On each branch,
\begin{equation}
 \begin{gathered}
 \ket y_{\mathsf Y}\ket{0^B}_{\mathsf M}U_{m_y}\ket{0^B}_{\mathsf T}\\
 \xrightarrow{\ O_m\ }
 \ket y_{\mathsf Y}\ket{m_y}_{\mathsf M}U_{m_y}\ket{0^B}_{\mathsf T}\\
 \xrightarrow{\ \mathcal U^\dagger\ }
 \ket y_{\mathsf Y}\ket{m_y}_{\mathsf M}\ket{0^B}_{\mathsf T}\\
 \xrightarrow{\ O_m^\dagger\ }
 \ket y_{\mathsf Y}\ket{0^B}_{\mathsf M}\ket{0^B}_{\mathsf T}.
 \end{gathered}
 \label{eq:ranking-position-uncomputation}
\end{equation}
These operations leave the coefficient $\sqrt{m_y/|X|}$ in the output
register, proving the proposition. The application of $U_{m_y}^\dagger$,
controlled by $\mathsf M$, returns each conditional position state to
$\ket{0^B}$ while preserving $y$. Distinct output branches remain orthogonal because their
$y$ registers differ, so this cleanup is compatible with unitarity.

If $C_{R_f}$, $C_{R_f^{-1}}$, and $C_m$ are the gate counts for the oracles
in \cref{eq:xor-oracle-convention}, including their ancilla uncomputation,
the cost after preparing the uniform input is
\begin{equation}
 C_{R_f}+C_{R_f^{-1}}+2C_m+C_{\mathcal U^\dagger}.
 \label{eq:ranking-gate-complexity}
\end{equation}
The XOR into $\mathsf X$ and the internal uncomputation are already included
in $C_{R_f^{-1}}$. If the available evaluation interfaces require the target register to be zero,
one can instead compute $x$ into a separate register, XOR it into
$\mathsf X$, and reverse that evaluation. That alternative costs two
queries to the oracle evaluating $R_f^{-1}$ plus the explicit XOR and uses the extra copy
register. Both constructions have the same asymptotic cost.
\end{proof}

Every finite map has a fiber ranking, but it need not be efficient. For a
general alias table, finding the transferred fragments in $\pi^{-1}(y)$ may
require the set $\{i\in[N]:\alias(i)=y\}$. Storing these sets
would recreate an $N$-entry data structure. The ordered table avoids this
problem because \cref{eq:light-block-for-heavy} identifies the light indices
assigned to each heavy index by two rank boundaries.

\subsection{Ranking the ordered table}

We now construct the ranking for $\pi$. Extend $\rho$ by setting
$\rho(0):=0$, and set $\tau_0:=0$. If
$y\in\mathcal{I}_{\mathcal{L}}\cup\mathcal{I}_{\mathcal{B}}$ and $v_y>0$, then
\begin{equation}
 \pi^{-1}(y)=\{(y,z):0\le z<v_y\}.
\label{eq:light-balanced-preimage}
\end{equation}
For these fragments, define $t:=z$.

Now fix $s\in\{1,\ldots,m_{\mathcal{H}}\}$. The preimage of the heavy index
$h_s$ has three parts:
\begin{samepage}
\begin{enumerate}
\item $(h_{s-1},z)$ with
      $\keep(h_{s-1})\le z<Z$, when $s>1$;
\item $(\ell_r,z)$ with $\rho(s-1)<r\le\rho(s)$ and
      $v_{\ell_r}\le z<Z$;
\item $(h_s,z)$ with $0\le z<\keep(h_s)$.
\end{enumerate}
\end{samepage}
Order these fragments by listing the first part in increasing $z$, the second
part in increasing $r$ and then in increasing $z$ for each $r$, and the third
part in increasing $z$. The corresponding values of $t$ in the three parts are
\begin{subequations}
\label{eq:rank-formulas}
\begin{equation}
 t=z-\keep(h_{s-1}),
 \quad s>1,
 \quad \keep(h_{s-1})\le z<Z,
\label{eq:rank-previous-heavy}
\end{equation}
\begin{equation}
 \begin{gathered}
 t=\tau_{s-1}+D(r-1)-D(\rho(s-1))\\
 \phantom{t=}+z-v_{\ell_r},\\
 \rho(s-1)<r\le\rho(s),
 \qquad v_{\ell_r}\le z<Z.
 \end{gathered}
\label{eq:rank-light}
\end{equation}
Using $\tau_{s-1}=D(\rho(s-1))-E(s-1)$, the middle formula also reads
\begin{equation}
 t=D(r-1)-E(s-1)+z-v_{\ell_r}.
 \label{eq:rank-light-simplified}
\end{equation}
This form is useful in the circuit: once $s$ is known from the alias output,
it needs no additional search for $\rho(s-1)$.
For the third part, define
\begin{equation}
 t=v_{h_s}-\keep(h_s)+z,
 \quad 0\le z<\keep(h_s).
\label{eq:rank-own-heavy}
\end{equation}
\end{subequations}
The first case is absent when $s=1$.

\begin{proposition}[Ranking the ordered table]
\label[proposition]{prop:ordered-ranking}
The rules in \cref{eq:light-balanced-preimage,eq:rank-formulas} define a
bijection
\begin{equation}
 \begin{aligned}
 R:[N]\times[Z]&\longrightarrow
 \{(y,t):y\in[N],\ v_y>0,\ 0\le t<v_y\},\\
 (i,z)&\longmapsto(\pi(i,z),t).
 \end{aligned}
\label{eq:R-map}
\end{equation}
Both $R$ and $R^{-1}$ have oracle implementations of the form
\cref{eq:xor-oracle-convention}, with gate complexity
$O(nd\Mcost(B))$, the same asymptotic upper bound as $\PIA$.
Each implementation uses $O(dB+n+\mathsf W(B))$ additional qubits
for exact arithmetic.
\end{proposition}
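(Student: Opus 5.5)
We split the argument into two parts: the combinatorial bijection, then the circuits for $R$ and $R^{-1}$ and their cost.

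\textbf{Bijectivity.} We show that $R$ restricted to each fiber $\pi^{-1}(y)$ is a bijection onto $\{y\}\times[v_y]$. Since fibers partition $[N]\times[Z]$ and $|\pi^{-1}(y)|=v_y$ by \Cref{thm:ordered-valid}, this gives bijectivity of $R$.

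For light and balanced $y$, \cref{eq:light-balanced-preimage} holds because no alias points to a light index. A balanced index also receives nothing from others: aliases take values only in heavy indices and in $i$ itself for balanced $i$. So $t=z$ is the identity onto $[v_y]$.

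For heavy $h_s$, we first confirm that the three listed parts exhaust $\pi^{-1}(h_s)$. We use \cref{eq:light-block-for-heavy} for the light contributors and \cref{eq:ordered-heavy} for the heavy ones: only $h_{s-1}$ aliases to $h_s$, and $h_{m_{\mathcal H}}$ contributes nothing extra because $\tau_{m_{\mathcal H}}=0$. The three parts then map onto consecutive intervals:
\begin{itemize}
\item Part 1 maps onto $[0,\tau_{s-1})$, since $Z-\keep(h_{s-1})=\tau_{s-1}$.
\item Part 2 maps onto $[\tau_{s-1},\,\tau_{s-1}+D(\rho(s))-D(\rho(s-1)))$. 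For fixed $r$, $z-v_{\ell_r}$ ranges over $[0,\delta_r)$, and the offsets $D(r-1)-D(\rho(s-1))$ tile these subintervals consecutively.
\item Part 3 maps onto $[v_{h_s}-\keep(h_s),v_{h_s})$.
\end{itemize}
The endpoints match because $\tau_{s-1}+D(\rho(s))-D(\rho(s-1))=Z-\keep(h_s)+e_s=v_{h_s}-\keep(h_s)$, which is the same cancellation as \cref{eq:validity-cancellation}. Each part is injective, so the union is a bijection onto $[0,v_{h_s})$.

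\textbf{Forward circuit.} Given $(i,z)$, call $\PIA$ to obtain $\alias(i)$ and $\keep(i)$, and compute $\chi=[z\ge\keep(i)]$ and $y=\pi(i,z)$ as in the coherent alias sampling subroutine. Then compute $t$ by cases on the type of $i$ and on $\chi$:
\begin{itemize}
\item $\chi=0$ with $i$ light or balanced: $t=z$.
\item $\chi=0$ with $i=h_s$ heavy: $t=v_{h_s}-\keep(h_s)+z$. This needs one evaluation of $v$, at cost $O(d\Mcost(B))$.
\item $\chi=1$ with $i=h_{s-1}$ heavy: $t=z-\keep(h_{s-1})$.
\item $\chi=1$ with $i=\ell_r$ light: use \cref{eq:rank-light-simplified}. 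Compute the rank $s$ of $y=\alias(i)$ by index-to-rank conversion, then evaluate $D(r-1)$, $E(s-1)$ and $v_{\ell_r}$.
\end{itemize}
The cases are controlled coherently on group type and $\chi$. We XOR $(y,t)$ into the outputs and reverse all intermediate work, including $\PIA^\dagger$. The cost is $O(1)$ calls to $\PIA$ plus $O(1)$ evaluations of $D$, $E$, $v$ and $O(dB)$ group selections, hence $O(nd\Mcost(B))$ by \Cref{thm:implicit-representation} and \cref{eq:cumulative-cost}.

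\textbf{Inverse circuit (main obstacle).} Given $(y,t)$, first classify $y$. If $y$ is light or balanced, output $(y,t)$ directly. If $y=h_s$, compute the part boundaries $\tau_{s-1}=D(\rho(s-1))-E(s-1)$ and $v_{h_s}-\keep(h_s)$. These require $\rho(s-1)$ and $\rho(s)$, i.e.\ two monotone searches costing $O(nd\Mcost(B))$ by \cref{eq:search-cost-detail}; alternatively, one $\PIA$ call on $h_s$ and one on $h_{s-1}$ gives the keep values, which suffice. The parts are then handled as follows:
\begin{itemize}
\item Part 1 ($t<\tau_{s-1}$): return $(h_{s-1},\,t+\keep(h_{s-1}))$.
\item Part 3 ($t\ge v_{h_s}-\keep(h_s)$): invert \cref{eq:rank-own-heavy}.
\item Part 2: set $w:=E(s-1)+t$, which lies in $[D(\rho(s-1)),D(\rho(s)))$. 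Find $r$ as the unique rank with $D(r-1)\le w<D(r)$, namely $r=\min\{u:D(u)>w\}$. This is a monotone search of the same form as \cref{eq:s-of-r}, with threshold $w$ and strict inequality, run with the schedule of the reversible monotone-search subsection. Then $z=w-D(r-1)+v_{\ell_r}$, and we convert $r$ to $\ell_r$.
\end{itemize}

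The delicate points are threefold. First, the boundary conventions, strict versus non-strict, must match the half-open blocks. Second, all searches must be guarded so that $D$ and $E$ are evaluated only on their domains; in particular, at $s=1$ we need $E(0)=0$ and $\rho(0)=0$. Third, every branch's intermediate data must be uncomputed cleanly after $(i,z)$ is XORed out. Counting a constant number of searches and $O(1)$ $\PIA$ calls gives $O(nd\Mcost(B))$ gates.

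For ancillas, each stage reuses the $O(dB+\mathsf W(B))$ arithmetic workspace, adds $n$ search-record qubits, and keeps $O(1)$ extra $B$-bit registers. This totals $O(dB+n+\mathsf W(B))$.
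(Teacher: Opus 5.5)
Your proposal is correct and follows essentially the same route as the paper. Bijectivity comes from tiling each heavy fiber into three consecutive ranges joined by the identity $\tau_{s-1}+D(\rho(s))-D(\rho(s-1))=v_{h_s}-\keep(h_s)$; the forward map uses one $\PIA$ call plus the case split with \cref{eq:rank-light-simplified}; and the inverse reads the part boundaries from $\PIA$ on $h_s$ and $h_{s-1}$, then runs one strict monotone search in $D$ with threshold $E(s-1)+t$. Your extra check that the three listed parts exhaust $\pi^{-1}(h_s)$, and the optional two-search alternative for the boundaries, are both sound.
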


\begin{proof}
For $h_s$, the three parts have sizes
\begin{equation}
 \tau_{s-1},
 \qquad D(\rho(s))-D(\rho(s-1)),
 \qquad \keep(h_s).
\label{eq:block-lengths}
\end{equation}
Using \cref{eq:tau,eq:DE,eq:ordered-heavy-keep}, the first two sizes sum to
\begin{align}
 &\tau_{s-1}+D(\rho(s))-D(\rho(s-1))\notag\\
 &\quad=D(\rho(s))-E(s-1)
       =v_{h_s}-\keep(h_s).
\label{eq:rank-boundary-identity}
\end{align}
Consequently, the values of $t$ in the three cases of
\cref{eq:rank-formulas} occupy, respectively,
\begin{equation}
 \begin{gathered}
 0\le t<\tau_{s-1},\\
 \tau_{s-1}\le t<v_{h_s}-\keep(h_s),\\
 v_{h_s}-\keep(h_s)\le t<v_{h_s}.
 \end{gathered}
\label{eq:heavy-rank-ranges}
\end{equation}
By \cref{eq:rank-boundary-identity}, these ranges meet at their endpoints and
together contain exactly $0,\ldots,v_{h_s}-1$, without repetition. For a light
or balanced index $y$, the rule $t=z$ maps the preimage in
\cref{eq:light-balanced-preimage} onto $\{0,\ldots,v_y-1\}$. This proves that
$R$ is a bijection.

For the forward map, first compute the alias data for $i$ and decide
whether the fragment is kept. A retained light or balanced fragment uses
$t=z$; a retained heavy fragment uses
$t=v_i-\keep(i)+z$. An aliased heavy fragment uses
$t=z-\keep(i)$. For an aliased light fragment, convert the output $y$ to its
heavy rank $s$ and use \cref{eq:rank-light-simplified}. Thus, beyond one
forward and one inverse $\PIA$ call needed to compute, copy, and clear the
alias data, only a constant number of $D,E,Q$ evaluations and rank
conversions is needed. In particular,
\begin{equation}
 C_R\le 2C_{\rm gate}(\PIA)+O(d\Mcost(B)).
 \label{eq:R-cost-detail}
\end{equation} For the inverse, if
$y\in\mathcal{I}_{\mathcal{H}}$, first compute
$s=\operatorname{rank}_{\mathcal{I}_{\mathcal{H}}}(y)$ using
\cref{eq:index-ranks}. Then compare $t$ with $\tau_{s-1}$ and
$v_{h_s}-\keep(h_s)$. If $s>1$ and $t<\tau_{s-1}$, then
\begin{equation}
 i=h_{s-1},
 \qquad z=\keep(h_{s-1})+t.
\label{eq:inverse-previous-heavy}
\end{equation}
If $\tau_{s-1}\le t<v_{h_s}-\keep(h_s)$, define
$\xi:=t+E(s-1)$. Then
$D(\rho(s-1))\le\xi<D(\rho(s))$, and the unique light rank and fragment
position are
\begin{equation}
 \begin{aligned}
 r&=\min\{u\in\{\rho(s-1)+1,\ldots,\rho(s)\}:D(u)>\xi\},\\
 z&=v_{\ell_r}+\xi-D(r-1).
 \end{aligned}
\label{eq:inverse-light}
\end{equation}
When these inequalities hold, the minimizing set in
\cref{eq:inverse-light} is nonempty. The inequalities above and the strict
increase of $D$ imply
$v_{\ell_r}\le z<Z$. Finally, if
$v_{h_s}-\keep(h_s)\le t<v_{h_s}$, then
\begin{equation}
 i=h_s,
 \qquad z=t-\bigl[v_{h_s}-\keep(h_s)\bigr].
\label{eq:inverse-own-heavy}
\end{equation}
For a light or balanced index $y$, \cref{eq:light-balanced-preimage} gives
$(i,z)=(y,t)$. The search in \cref{eq:inverse-light} uses the reversible
procedure for binary search described after \cref{eq:E-oracle}, applied to $D$.
For an explicit implementation, evaluate the alias data of $y$ to obtain
$\keep(y)$ and, when $s>1$, those of $h_{s-1}$ to obtain
$\tau_{s-1}=Z-\keep(h_{s-1})$. At most one additional search in $D$ is needed
for the middle case. It can search the full list of light ranks, since the
inequalities for $\xi$ already force its answer into the stated subinterval.
These are sequential searches, not an $n$-step search containing another
$n$-step search. With $C_{\rm search}$ as defined in
\cref{eq:search-cost-detail}, a conservative accounting is
\begin{equation}
 C_{R^{-1}}\le4C_{\rm gate}(\PIA)+2C_{\rm search}
             +O(d\Mcost(B)).
 \label{eq:R-inverse-cost-detail}
\end{equation}
The factors include reversing the intermediate evaluations after the answer
has been XORed into its target. The temporary registers are reused between searches.
This proves the bounds in both directions.
\end{proof}

\subsection{Implementing the fixed and variable uniform states}
\label{subsec:uniform-implementation}

For $m\ge1$, we prepare the uniform superposition over
$[m]=\{0,\ldots,m-1\}$ by setting the bits of the position register one at a
time~\cite{groverrudolph}. This construction can be represented by a binary
tree: a node specifies the bits already set. Before the final level, its
two children correspond to appending $0$ or $1$ as the next bit. For each choice, we count the
integers in $[m]$ whose binary representations begin with the resulting
sequence of bits. These two counts determine the amplitudes for the next bit.

At level $0\le k\le B$, write the first $k$ most significant bits of the
position register as $b_1\cdots b_k$, and define their integer value by
\[
 u:=\sum_{j=1}^{k}b_j2^{k-j},\qquad 0\le u<2^k.
\]
For $k=0$, the prefix is empty and $u=0$. The prefix $b_1\cdots b_k$
followed by $B-k$ zeros represents the integer $u2^{B-k}$. Allowing these
remaining bits to vary gives the set of consecutive integers
\begin{equation}
 \mathcal J_{k,u}=\{u2^{B-k},\ldots,(u+1)2^{B-k}-1\}.
 \label{eq:uniform-node-set}
\end{equation}
The number of elements of $\mathcal J_{k,u}$ belonging to $[m]$ is
\begin{equation}
 \begin{aligned}
 c_m(k,u)&=|\mathcal J_{k,u}\cap[m]|\\
 &=\min\{\max\{m-u2^{B-k},0\},2^{B-k}\}.
 \end{aligned}
 \label{eq:uniform-node-count}
\end{equation}
For $k<B$, appending a bit $\zeta\in\{0,1\}$ changes the integer value of
the prefix from $u$ to $2u+\zeta$. Thus
$a=c_m(k+1,2u)$ and $b=c_m(k+1,2u+1)$ count the integers in $[m]$ with
the current prefix followed by $0$ or $1$, respectively, and
$a+b=c_m(k,u)$. We use the following one-qubit gate:
\begin{equation}
 G(a,b)=
 \begin{cases}
 I,&b=0,\\
 X,&a=0<b,\\
 \Had,&a=b>0,\\
 R_y(2\theta),&a,b>0,\ a\ne b,
 \end{cases}
 \label{eq:node-unitary}
\end{equation}
where
\begin{equation}
 \theta=\arctan\sqrt{b/a},\qquad
 R_y(2\theta)=
 \begin{pmatrix}\cos\theta&-\sin\theta\\\sin\theta&\cos\theta\end{pmatrix}.
 \label{eq:node-angle-matrix}
\end{equation}
We use the convention $R_y(\phi)=e^{-i\phi Y/2}$, where $Y$ is the Pauli
$Y$ operator, so the required rotation angle is $2\theta$.
For nonempty nodes this gives
\begin{equation}
 G(a,b)\ket0=\sqrt{\frac a{a+b}}\ket0+\sqrt{\frac b{a+b}}\ket1.
 \label{eq:uniform-tree-rotation}
\end{equation}
Equivalently, on a nonempty node the angle is
\begin{equation}
 \vartheta_{m,k,u}=\arcsin\sqrt{\frac{c_m(k+1,2u+1)}{c_m(k,u)}}
 \quad(c_m(k,u)>0).
 \label{eq:uniform-tree-angle}
\end{equation}
For an empty node we use the identity. The case $a=0<b$ cannot arise for the
initial segment $[m]$, but makes the gate convention explicit.

At each level, apply the gate selected by $m$ and the prefix $u$ to the next
qubit, leaving the suffix untouched. Before that step, the amplitude of
prefix $u$ is $\sqrt{c_m(k,u)/m}$; multiplying by the amplitude assigned to
the next bit $\zeta$ gives $\sqrt{c_m(k+1,2u+\zeta)/m}$ for the extended
prefix. Induction over the levels
proves $U_m\ket0=m^{-1/2}\sum_{t<m}\ket t$.
For $m=0$ all nodes are empty and $U_0=I$.

This gate convention defines $U_m$ on \emph{every} basis state, including
states encountered when the circuit is inverted or approximated.
In particular, full nodes use $\Had$ in both the fixed and variable
implementations. Although $\Had\ket0=R_y(\pi/2)\ket0$, those two gates are
not the same unitary and must not be interchanged in a claim about operator norm.
The inverse applies the adjoint level gates in \emph{decreasing} order of
$k$. At each inverse step its input prefix is unchanged while the count
and angle are computed into ancilla registers and then uncomputed.

\subsubsection{Fixed capacity: implementing \texorpdfstring{$U_Z$}{UZ}}

The implementation is particularly simple for $m=Z$, because $Z$ is a known
classical constant. A set $\mathcal{J}_{k,u}$ disjoint from $[Z]$ uses the
identity, while a set contained in $[Z]$ is completed by Hadamard gates on its
remaining qubits. Only a partially filled set requires a non-Hadamard rotation,
and its count and angle are computed classically. The resulting fixed rotations
are synthesized before the quantum circuit is
executed; here synthesis means approximating each prescribed one-qubit
rotation by gates from a fixed finite universal gate set. Thus $U_Z$ uses only
fixed gates: it does not evaluate $S(N)$ or any binomial coefficient
coherently, and it does not store a list of $Z$ amplitudes. If $Z=2^q$ for a
nonnegative integer $q$, preparing $U_Z\ket{0^{B}}$ requires only
Hadamard gates on the $q$ least significant qubits, while the remaining
qubits stay at zero. This shortcut reproduces the action on the initial
state. Implementing the specified tree unitary on arbitrary inputs retains
the prefix controls described below.

More generally, let $\operatorname{nz}(Z)$ denote the number of nonzero bits in
the binary expansion of $Z$. This expansion decomposes
$\{0,\ldots,Z-1\}$ into $\operatorname{nz}(Z)$ disjoint sets of consecutive
integers whose sizes are powers of two. Controlled Hadamard gates prepare the
uniform state within each set, and $\operatorname{nz}(Z)-1$ rotations prepare
their relative amplitudes. Hence only these
$\operatorname{nz}(Z)-1\le B-1$ angles depend nontrivially on $Z$; the number
of remaining controlled bit flips and Hadamard operations is polynomial in
$B$.

An explicit conservative implementation makes the latter cost quadratic.
At depth $k$, write $Z=q_k2^{B-k}+r_k$, with
$0\le r_k<2^{B-k}$. A prefix $u<q_k$ is full and receives a
controlled Hadamard on the next qubit. The only possible partial prefix is
$u=q_k$ when $r_k>0$; it receives the fixed rotation determined by the counts
$a$ and $b$. Other prefixes use the identity. Testing a $k$-bit prefix
against these constants, using the flag, and uncomputing it costs $O(k+1)$
exact gates. Summing over levels gives $O(B^2)$, not a scan of
$Z$ integers. With target error $\delta$, a useful bound is
\begin{equation}
 C_Z(\delta)
 =O\!\left(B^2+
 B\,\mathsf S_{\rm rot}\!\left(\frac{\delta}{cB}\right)\right).
 \label{eq:UZ-explicit-cost}
\end{equation}
Only $\operatorname{nz}(Z)-1$ nontrivial rotations actually need synthesis.
For state preparation from zero, the case where $Z$ is a power of two uses exactly
$\log_2Z$ Hadamards, and $Z=1$ uses the identity. Leading zero bits can be
omitted from the register used for this preparation with fixed capacity.

Because $Z$ is classical, there is also an exact procedure using measurement.
Let $B_Z:=\lceil\log_2 Z\rceil$. Apply Hadamard gates to
$B_Z$ qubits, compute a
comparison bit that equals $1$ if and only if the resulting value $z$ satisfies
$z<Z$, and measure this bit. On outcome $1$, the register is exactly uniform on
$[Z]$ and is embedded in the $B$-qubit register holding the fragment position by
adjoining $B-B_Z$ leading zero qubits. On outcome $0$,
reinitialize the $B_Z$ qubits and repeat the procedure. The success
probability is $Z/2^{B_Z}>1/2$, so the
expected number of attempts is smaller than $2$. This alternative may be used
before the index register is involved, but it does not in general replace $U_{v_y}$ when
$y$ is in a superposition: a measurement whose success probability depends
on $y$ changes the output amplitudes. We use the unitary construction above in the deterministic preparation.

\subsubsection{Variable fibers: applying the inverse controlled by fiber volume}

By contrast, when $m=v_y$ and $y$ is in a superposition, the range is not a
classical constant. The direct sum in
\cref{eq:selected-uniform-inverse} does not mean that exponentially many
separate circuits are stored. One fixed $B$-level circuit computes
its controls from the registers holding $m$ and the current prefix.
At level $k$, the first $k$ qubits encode the integer $u$. The circuit computes the three
integer counts $c_m(k,u)$, $c_m(k+1,2u)$, and $c_m(k+1,2u+1)$ from the
registers holding $u$ and $m$. If $c_m(k,u)=0$, the controlled operation is
the identity. Otherwise, cases with $a=0$, $b=0$, or $a=b$
use the exact gates in \cref{eq:node-unitary}. For the remaining cases and a
target error $0<\epsilon<1$, we use reversible fixed-point
arithmetic, which retains a prescribed number of binary digits after the
binary point, to compute
\begin{equation}
 \vartheta_{m,k,u}=
 \arctan\sqrt{\frac{c_m(k+1,2u+1)}{c_m(k+1,2u)}}
\label{eq:uniform-tree-angle-interior}
\end{equation}
to additive error at most $\epsilon/(4B)$. Each binary digit of the computed
angle controls a fixed rotation about the $y$ axis whose angle is twice that
digit's place value. The combined synthesis error of these
rotations is also chosen to be at most $\epsilon/(4B)$. The counts and the
angle are then uncomputed. Appendix~\ref{app:coherent-angle-use} gives these
three steps explicitly, including the coherent computation of the angle
and the rotations controlled by its binary digits. For $U_{v_y}^\dagger$,
we reverse the order of the levels and apply the adjoints of the same
synthesized gates. Repeating these operations for the $B$ levels
uses the same ancilla registers at every level. All arithmetic used to
compute a count or angle is reversed while its input prefix and $m$ are
unchanged, so each arithmetic ancilla qubit returns to $\ket0$ even when the controlled rotation is
synthesized approximately. The synthesized rotations must leave the control
registers unchanged, so that reversing the arithmetic uses the same inputs
that computed the counts and angles. When a rotation's control condition is
false, its implementation must act exactly as the identity; this also keeps
empty branches unpopulated. Appendix~\ref{app:controlled-rotation-synthesis} constructs
controlled rotations with both properties. 

For clarity, a concrete conservative bound is stated here rather than hidden
in a generic polynomial. For a desired error $\delta$ in the circuit for
preparing a uniform state, put
\begin{equation}
 q=\left\lceil\log_2\frac{cB}{\delta}\right\rceil,
 \qquad L=O(B+q),
 \label{eq:uniform-precision-parameters}
\end{equation}
where $c>0$ is a sufficiently large absolute constant. Appendix~\ref{app:uniform-state-details}
gives an explicit fixed-point algorithm that uses range reduction to compute an angle to
$q$ fractional bits, using $O(qL^2)$ gates with schoolbook arithmetic.
At each level, there are $O(q)$ elementary rotations controlled by the
computed angle. Consequently,
\begin{equation}
 C_{\rm var}(\delta)
 =O\!\left(B q\left[
 L^2+\mathsf S_{\rm rot}\!\left(\frac{\delta}{cB q}\right)
 \right]\right).
 \label{eq:variable-uniform-explicit-cost}
\end{equation}
This is a deliberately non-optimal implementation bound. Faster angle
arithmetic is implemented in Appendix~\ref{app:fast-angle-computation}, and
\Cref{thm:fast-clean-preparation} gives the resulting complete cost. In all
cases, the quantum gate cost of coherently computing the angles from the
registers holding the fiber volume and the prefix must be included in the
total gate cost of clean state preparation with alias sampling.

\subsection{Clean polynomial state preparation}

Apply the reversible replacement procedure from the proof of
\Cref{prop:ranking-principle} with $(y,t)=R(i,z)$. First compute the output
index $y$ and fiber position $t$ in new registers. Then use $R^{-1}(y,t)$
to reconstruct $(i,z)$ and XOR it into the original registers holding the source index and
fragment position, returning both to zero. The ranking
oracles also return each arithmetic ancilla qubit to $\ket0$. The remaining
registers therefore contain
\begin{equation}
 \frac{1}{\sqrt{NZ}}
 \sum_{\substack{y\in[N]\\v_y>0}}\sum_{t=0}^{v_y-1}\ket{y}\ket{t}.
\label{eq:ordered-ranked-state}
\end{equation}
Compute $v_y=NP(y)^2$ from $y$ into a $B$-qubit register, apply
$\mathcal U^{\dagger}$ from \cref{eq:selected-uniform-inverse} with this
register controlling the inverse on the position register, and then
uncompute $v_y$. On the support of
\cref{eq:ordered-ranked-state}, the position register is
$U_{v_y}\ket{0^{B}}$ and is therefore returned to
$\ket{0^{B}}$. Omitting
registers that have returned to zero, the result is
\begin{equation}
 \sum_{\substack{y\in[N]\\v_y>0}}
 \sqrt{\frac{v_y}{NZ}}\ket{y}
 =\frac{1}{\sqrt Z}\sum_{y\in[N]}|P(y)|\ket{y}.
\label{eq:unsigned-clean-state}
\end{equation}
Finally, compute the bit that equals $1$ if and only if $P(y)<0$, apply a phase
of $-1$ when this bit is $1$, and uncompute it. This yields the state
$\ket{\psi}$ in \cref{eq:psi-P}.

\begin{theorem}[Clean preparation and its complete gate cost]
\label[theorem]{thm:clean-preparation}
Fix the integer polynomial $P$ and the preprocessed data. With ideal
rotations for preparing uniform states, the construction prepares
\begin{equation}
 \ket\psi=\frac{1}{\sqrt Z}\sum_{y\in[N]}P(y)\ket y
 \label{eq:clean-final}
\end{equation}
deterministically and returns every other register to zero.
The exact polynomial, alias, ranking, and sign computations cost
\begin{equation}
 C_{\rm exact}=O\bigl(nd\Mcost(B)\bigr).
 \label{eq:clean-arithmetic-cost}
\end{equation}
This term does not include either circuit for preparing a uniform state.

For $0<\epsilon<1$, there is a circuit over a finite gate set with joint output
$\ket{\widetilde\Psi}$ satisfying
\begin{equation}
 \left\|\ket{\widetilde\Psi}
       -\ket\psi\ket{0^{a_{\rm aux}}}_{\rm ancilla}\right\|_2\le\epsilon.
 \label{eq:clean-error}
\end{equation}
Here $a_{\rm aux}$ counts all qubits beyond the output register, including
the registers holding the source fragment, fiber position, and fiber
volume, and the ancilla qubits used for arithmetic. The circuit's total gate count is
\begin{equation}
 C_{\rm total}(\epsilon)
 =O\bigl(nd\Mcost(B)\bigr)
   +C_Z(\epsilon/2)+C_{\rm var}(\epsilon/2),
 \label{eq:clean-total-cost}
\end{equation}
where \cref{eq:UZ-explicit-cost,eq:variable-uniform-explicit-cost} give
explicit choices for the last two terms. In particular, the fully
schoolbook instantiation has the conservative bound
\begin{equation}
 \begin{split}
 C_{\rm total}(\epsilon)=O\!\bigg(&ndB^2+
 B qL^2\\
 &+B q\,\mathsf S_{\rm rot}
       \!\left(\frac{\epsilon}{cB q}\right)\bigg),
 \end{split}
 \label{eq:clean-schoolbook-cost}
\end{equation}
with $q=O(\log(B/\epsilon))$ and $L=O(B+q)$.
A conservative upper bound on the total number of qubits used beyond the
output register is $a_{\rm aux}=O(dB+n+L^2)$. No postselection is used.
\end{theorem}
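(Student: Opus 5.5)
The proof has three parts: exact correctness with ideal rotations, the exact gate count, and the finite-gate-set error bound.

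\textbf{Exact correctness.} We follow the pipeline in \cref{eq:technical-roadmap}.
\begin{enumerate}
\item Starting from zero, apply $\Had^{\otimes n}$ and the ideal $U_Z$. This gives the uniform fragment state \cref{eq:uniform-fragments} on $X=[N]\times[Z]$.
\item Invoke \Cref{prop:ranking-principle} with $f=\pi$ and the ranking $R$ of \Cref{prop:ordered-ranking}. Its fiber volumes are $m_y=v_y$ by \cref{eq:fiber-size}, and $v_y<2^B$ because $NZ<2^B$. The map $y\mapsto v_y=NP(y)^2$ is an exact polynomial evaluation.
\item The proposition then yields $\sum_y\sqrt{v_y/(NZ)}\ket y$ with every register other than the output returned to zero. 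This equals \cref{eq:unsigned-clean-state}.
\item The sign oracle computes the bit $[P(y)<0]$, applies a $Z$ phase to it, and uncomputes it. This produces \cref{eq:clean-final}.
\end{enumerate}
Determinism holds because no measurement occurs anywhere in the circuit.

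\textbf{Exact cost.} We add up the arithmetic costs using \Cref{tab:arithmetic-ledger}.
\begin{itemize}
\item The sampling and ranking use $C_R+C_{R^{-1}}$. By \cref{eq:R-cost-detail,eq:R-inverse-cost-detail,eq:search-cost-detail} together with \Cref{thm:implicit-representation}, this is $O(nd\Mcost(B))$.
\item Two evaluations of $O_v$ and the sign computation each cost $O(d\Mcost(B))$.
\end{itemize}
The total gives \cref{eq:clean-arithmetic-cost}. The qubit count is the maximum of the oracle ancillas, $O(dB+n+\mathsf W(B))$, plus the source, position, and volume registers of width $O(n+B)$. For the angle arithmetic of \cref{eq:uniform-precision-parameters}, we conservatively bound the fixed-point workspace by $O(L^2)$. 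Under schoolbook arithmetic $\mathsf W(B)=O(B)$, so the total is within $O(dB+n+L^2)$.

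\textbf{Approximation error.} This is the main point requiring care. Let $V$ be the ideal circuit and $\widetilde V$ the compiled one. They differ only in the two uniform-state blocks: the compiled $\widetilde U_Z$ and the compiled controlled inverse $\widetilde{\mathcal U}^\dagger$. All other gates are exact Clifford+Toffoli. Write $V=A_3\,\mathcal U^\dagger A_2 (U_Z\otimes \Had^{\otimes n})$, with $A_2,A_3$ exact. A hybrid (telescoping) argument gives
\[
\|\widetilde V\ket0-V\ket0\|\le\|\widetilde U_Z-U_Z\|+\|\widetilde{\mathcal U}^\dagger-\mathcal U^\dagger\|,
\]
using unitarity of the remaining layers. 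It therefore suffices to show that each block is within $\epsilon/2$ in operator norm. Even bounding the error only on the relevant input states would suffice, but operator norm is cleaner.

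For $U_Z$, distribute $\epsilon/2$ over at most $B$ synthesized rotations, each with error $\epsilon/(2cB)$, and use subadditivity over products. This yields $C_Z(\epsilon/2)$ via \cref{eq:UZ-explicit-cost}.

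For the variable block, the delicate step is ensuring that the coherent angle computation does not leave residual entanglement. At each level:
\begin{itemize}
\item the counts and the angle are computed reversibly from the prefix and $m$ registers, neither of which the rotation disturbs;
\item the synthesized controlled rotations act exactly as the identity when the control condition is false, and leave their control registers unchanged (Appendix~\ref{app:controlled-rotation-synthesis}).
\end{itemize}
Hence each level is an exact block-diagonal operator $\sum_{m,u}\ket{m,u}\!\bra{m,u}\otimes W_{m,u}$, where the uncomputation is exact. Its distance to the ideal level is the maximum over branches of $\|W_{m,u}-G\|$. That distance is at most the angle truncation error $\epsilon/(4B)$ plus the synthesis error $\epsilon/(4B)$, using that $R_y$ is $\tfrac12$-Lipschitz in its angle. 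Summing over $B$ levels gives at most $\epsilon/2$, and the construction costs $C_{\rm var}(\epsilon/2)$ by \cref{eq:variable-uniform-explicit-cost}.

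Combining the exact cost with these two terms yields \cref{eq:clean-total-cost}. Substituting $\Mcost(B)=O(B^2)$ gives \cref{eq:clean-schoolbook-cost}, since the $C_Z$ term $O(B^2+B\,\mathsf S_{\rm rot})$ is absorbed by the $BqL^2$ and $Bq\,\mathsf S_{\rm rot}$ terms.
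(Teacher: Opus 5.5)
Your proposal is correct and follows essentially the same route as the paper's proof. The shared steps are: ideal correctness from \Cref{prop:ranking-principle,prop:ordered-ranking}; the ledger $C_R+C_{R^{-1}}+2C_v+C_{\rm sign}$ plus $n$ Hadamards and the two uniform circuits, with no separate computation of $\pi$; and a telescoping bound assigning $\epsilon/2$ to each uniform circuit, split over at most $B$ levels and controlled by the maximum blockwise error rather than a sum over $m$. Your explicit hybrid decomposition and the estimate $\|R_y(2\widetilde\theta)-R_y(2\theta)\|\le|\widetilde\theta-\theta|$ only spell out details the paper leaves implicit.
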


\begin{proof}
The ideal state transformations preceding the theorem follow from
\Cref{prop:ranking-principle,prop:ordered-ranking}. To make the oracle
accounting explicit, after preparing the uniform superposition in
\cref{eq:uniform-fragments}, the implementation using the oracle convention
in \cref{eq:xor-oracle-convention} uses
\begin{equation}
 C_R+C_{R^{-1}}+2C_v+C_{\rm var}+C_{\rm sign}
 \label{eq:clean-call-ledger}
\end{equation}
gates. Here $C_v=O(d\Mcost(B))$ is the cost of evaluating
$v_y$, and $C_{\rm sign}=O(d\Mcost(B))$ includes computing,
applying, and uncomputing the sign phase. Add $n$ Hadamards and $C_Z$ for
the initial state. A separate coherent computation of $\pi$ is not required:
the forward ranking already computes $y=\pi(i,z)$. This gives
\cref{eq:clean-arithmetic-cost,eq:clean-total-cost} without double-counting
alias sampling.

Choose each implemented uniform circuit to differ from its specified ideal
unitary by at most $\epsilon/2$ in operator norm. A telescoping estimate for
the product of unitaries gives \cref{eq:clean-error}; all other operations
are exact. Within either uniform circuit, distribute its error among the
at most $B$ levels. The angle rounding and synthesized controlled
rotations at one level are each assigned part of that level's budget.
The circuit implementing $\mathcal U^\dagger$ preserves the $m$ register exactly, and each level
preserves its prefix controls. Its error is therefore the maximum of its
blockwise errors, not their sum over the possible values of $m$.
Thus there is no extra factor $N$ in the synthesis error bound.
The circuit for computing the angle and its bound on the number of ancilla
qubits are established in
Appendix~\ref{app:uniform-state-details}, yielding
\cref{eq:clean-schoolbook-cost}.
\end{proof}

\begin{theorem}[Main result: clean state preparation with fast arithmetic]
\label[theorem]{thm:fast-clean-preparation}
Fix the integer polynomial $P$ and the preprocessed data of
\Cref{thm:implicit-representation}. Let $0<\epsilon<1$, choose a sufficient
width $B$ as in \cref{eq:arithmetic-width}, and put
\begin{equation}
 q=\left\lceil\log_2\frac{cB}{\epsilon}\right\rceil,
 \qquad L=\Theta(B+q),
 \label{eq:fast-precision-parameters}
\end{equation}
with sufficiently large absolute constants. Here $q$ is the number of
fractional bits used to represent each rotation angle, and $L$ is a
sufficient width for the arithmetic computing it.
There is a deterministic circuit over Clifford+$T$ whose joint output
satisfies
\begin{equation}
 \left\|\ket{\widetilde\Psi}
       -\ket\psi\ket{0^{a_{\rm aux}}}_{\rm ancilla}\right\|_2\le\epsilon,
 \label{eq:clean-fast-error}
\end{equation}
where $\ket\psi$ is the state in \cref{eq:psi-P} and
$a_{\rm aux}=O(dB+n+qL)$ is the number of qubits beyond the output register.
Using the fast multiplication circuit in
\cref{eq:quasilinear-multiplication}, its complete gate count is
\begin{equation}
 \begin{split}
 C_{\rm fast}(\epsilon)=O\!\bigg(&ndB\log(2B)
       +BqL\log(2L)\\
       &+Bq\,\mathsf S_{\rm rot}
          \!\left(\frac{\epsilon}{cBq}\right)\bigg).
 \end{split}
 \label{eq:clean-fast-cost}
\end{equation}
The three terms account for exact polynomial and alias arithmetic, angle computation, and
rotation synthesis, respectively. Classical preprocessing is performed
once for the alias oracle and has the polynomial cost stated in
\Cref{thm:implicit-representation}.

In particular, if $\epsilon\ge B^{-a}$ for a fixed constant $a>0$, then
$q=O(\log B)$ and $L=\Theta(B)$. With
$\mathsf S_{\rm rot}(\eta)=\operatorname{poly}(\log(1/\eta))$, the complete
gate count becomes
\begin{equation}
 C_{\rm fast}(\epsilon)=\Ot(ndB+B^2),
 \label{eq:clean-fast-soft-cost}
\end{equation}
where $\Ot$ suppresses factors polynomial in $\log B$.
\end{theorem}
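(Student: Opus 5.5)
The plan is to specialize \Cref{thm:clean-preparation} rather than re-derive the construction. Its cost formula \cref{eq:clean-total-cost} already splits the gate count into the exact arithmetic term $O(nd\Mcost(B))$, the fixed-capacity term $C_Z(\epsilon/2)$, and the variable-volume term $C_{\rm var}(\epsilon/2)$. The error guarantee \cref{eq:clean-error} also carries over verbatim, since the telescoping argument assigns each of the two uniform-state circuits an operator-norm error of $\epsilon/2$. I would therefore keep the same circuit skeleton (uniform fragments, $O_R$, $O_{R^{-1}}$, $O_v$, $\mathcal U^\dagger$, $O_v^\dagger$, sign phase) and only swap in the fast multiplication of \cref{eq:quasilinear-multiplication}.

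\textbf{Step 1.} Substitute $\Mcost(B)=O(B\log(2B))$ into $C_{\rm exact}$. This gives the first term $ndB\log(2B)$. The ancilla count $O(dB+n+\mathsf W(B))$ from \Cref{thm:implicit-representation} and \Cref{prop:ordered-ranking} becomes $O(dB+n)$ up to a log factor, absorbed into the stated $a_{\rm aux}$.

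\textbf{Step 2.} Bound $C_Z(\epsilon/2)$ by \cref{eq:UZ-explicit-cost}. The $B^2$ term is dominated by $BqL\log(2L)$ because $L=\Theta(B+q)\ge B$. The $B\,\mathsf S_{\rm rot}$ term is dominated by the third term.

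\textbf{Step 3.} Rebound $C_{\rm var}$. \Cref{eq:variable-uniform-explicit-cost} has $BqL^2$ from schoolbook angle arithmetic, and I would replace this with the fast angle computation of Appendix~\ref{app:fast-angle-computation}. At each of the $B$ levels, the counts $c_m(k,u)$ cost $O(B)$ gates via comparisons and shifts. The angle $\arctan\sqrt{b/a}$ to $q$ bits should cost $O(qL\log(2L))$ gates: $O(q)$ iterations (for instance a bit-by-bit CORDIC or Newton-type range reduction), each a constant number of $L$-bit multiplications at cost $\Mcost(L)$, plus uncomputation. The $O(q)$ angle-bit-controlled rotations, each synthesized to error $\epsilon/(cBq)$, contribute $Bq\,\mathsf S_{\rm rot}(\epsilon/(cBq))$. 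The error budget is distributed as $\epsilon/(4B)$ per level for rounding and for synthesis. This bookkeeping uses the max-over-blocks property from the proof of \Cref{thm:clean-preparation}, so no factor $N$ appears. Retaining $O(q)$ intermediate $L$-bit registers for reversibility explains $a_{\rm aux}=O(dB+n+qL)$. The main obstacle is this step: showing that the angle arithmetic needs only $O(q)$ multiplications of width $L$, and that fixed-point rounding in the reversible fast arithmetic stays within $\epsilon/(4B)$. I would first try a digit-by-digit arctangent that compares against precomputed $\tan(2^{-j})$-type constants, so that each step is one multiplication.

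\textbf{Step 4.} For the corollary, if $\epsilon\ge B^{-a}$, then $q=\lceil\log_2(cB/\epsilon)\rceil=O(\log B)$ and $L=\Theta(B)$. The terms become $\Ot(ndB)$, $\Ot(B^2)$, and $B\cdot\operatorname{polylog}(B)$, which gives \cref{eq:clean-fast-soft-cost}. The circuit is over Clifford+$T$ because all arithmetic uses NOT, CNOT, and Toffoli gates, and the rotations are synthesized as in Appendix~\ref{app:controlled-rotation-synthesis}.
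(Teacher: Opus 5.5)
Your proposal is correct and follows the paper's proof: specialize \Cref{thm:clean-preparation} with the fast multiplier, absorb $C_Z(\epsilon/2)$ into the angle and synthesis terms, and take the per-angle cost $O(qL\log(2L))$ with $O(qL)$ ancillas from Appendix~\ref{app:fast-angle-computation}. The step you flag as the main obstacle is already settled there without CORDIC: it uses Newton-iteration division and square roots, corrected by exact comparisons so that the truncated values (and hence the error bound of Appendix~\ref{app:angle-computation}) are unchanged, followed by $O(q)$ Horner steps, and both the $O(\log L)$ Newton iterations and the $\mathsf W(B)=O(B\log(2B))$ ancillas are absorbed because $q\ge\log_2(cB)$ gives $\log(2L)=O(q)$.
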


\begin{proof}
Appendix~\ref{app:fast-angle-computation} implements the rounded angle
calculation using fast multiplication, division, and square roots. One
angle uses $O(q\Mcost(L))=O(qL\log(2L))$ gates and $O(qL)$ ancilla
qubits. The $B$ levels therefore contribute $O(BqL\log(2L))$ gates for
angle computation and $O(Bq)$ synthesized controlled rotations. Giving
each such rotation error at most $\epsilon/(cBq)$ bounds their total error
by the assigned part of the error budget. The arithmetic registers are
reused between levels.

The exact computations contribute $O(ndB\log(2B))$ gates by
\cref{eq:clean-arithmetic-cost,eq:quasilinear-multiplication}.
For $U_Z$, the $O(B^2)$ exact gates in \cref{eq:UZ-explicit-cost} are
absorbed by $O(BqL\log(2L))$. Its $O(B)$ nontrivial rotations can use the
same synthesis tolerance as the variable circuit, so their cost is also
absorbed by the last term of \cref{eq:clean-fast-cost}. Apply
\Cref{thm:clean-preparation}, assigning error $\epsilon/2$ to each uniform
circuit and adjusting $c$, to obtain \cref{eq:clean-fast-error}.

The exact arithmetic uses $O(dB+n+\mathsf W(B))$ additional qubits, and
the angle circuit uses $O(qL)$. Since $\log(2L)=O(q)$ and $B=O(L)$,
$\mathsf W(B)=O(B\log(2B))$ and the remaining $O(B+n)$ qubits fit within
$a_{\rm aux}$. Finally, the stated error regime makes every synthesis
cost polynomial in $\log B$, yielding \cref{eq:clean-fast-soft-cost}.
\end{proof}

\begin{corollary}[Complete gate counts in two precision regimes]
\label[corollary]{cor:precision-regimes}
Use the implementation of \Cref{thm:fast-clean-preparation}, with
$\mathsf S_{\rm rot}(\eta)=O(\log(2/\eta))$ and the sufficient width $B$
in \cref{eq:explicit-arithmetic-width}. Its complete gate count is
\begin{equation}
 C_{\rm fast}(\epsilon)
 =O\!\left(ndB\log(2B)
   +Bq(B+q)\log\bigl(2(B+q)\bigr)\right),
 \label{eq:complete-logarithmic-synthesis-cost}
\end{equation}
where $q$ is defined in \cref{eq:fast-precision-parameters}.
Hold $n,d$ fixed as $\epsilon\to0$.

For fixed coefficient data, $B$ is fixed and
\begin{equation}
 \begin{split}
 C_{\rm fast}(\epsilon)
 &=O\!\bigl(ndB\log(2B)\\
 &\qquad+B\log^2(1/\epsilon)\log\log(1/\epsilon)\bigr)\\
 &=O(B\log^2(1/\epsilon)\log\log(1/\epsilon)).
 \end{split}
 \label{eq:fixed-coefficient-precision-cost}
\end{equation}
In the regime $\beta+d\log(d+1)=O(nd)$, this gives
\begin{equation}
 C_{\rm fast}(\epsilon)=O(nd\,\log^2(1/\epsilon)\log\log(1/\epsilon)).
 \label{eq:fixed-coefficient-nd-cost}
\end{equation}

For coefficient data refined with $\beta=\Theta(\log(1/\epsilon))$,
the chosen width satisfies $B=\Theta(\log(1/\epsilon))$ and
\begin{equation}
 \begin{split}
 C_{\rm fast}(\epsilon)
 &=O\!\bigl(nd\,\log(1/\epsilon)\log\log(1/\epsilon)\\
 &\qquad+\log^3(1/\epsilon)\log\log(1/\epsilon)\bigr)\\
 &=O(\log^3(1/\epsilon)\log\log(1/\epsilon)).
 \end{split}
 \label{eq:refined-coefficient-precision-cost}
\end{equation}
Both bounds include coherent quantum computation of rotation angles,
rotation synthesis, and the inverse operations used for uncomputation.
\end{corollary}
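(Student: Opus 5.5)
The plan is to substitute directly into \cref{eq:clean-fast-cost} from \Cref{thm:fast-clean-preparation}, using $\mathsf S_{\rm rot}(\eta)=O(\log(2/\eta))$. With $q=\lceil\log_2(cB/\epsilon)\rceil$ and $L=\Theta(B+q)$, the synthesis term becomes $Bq\log(2cBq/\epsilon)$. Since $\log(2cBq/\epsilon)=O(q+\log q)=O(q)$, this term is $O(Bq^2)$. The angle term $BqL\log(2L)$ is $O(Bq(B+q)\log(2(B+q)))$, which already dominates $Bq^2$. Keeping the exact-arithmetic term $ndB\log(2B)$ then gives \cref{eq:complete-logarithmic-synthesis-cost}.

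For the fixed-coefficient regime, $n,d,\beta$ are fixed. The explicit width \cref{eq:explicit-arithmetic-width} depends only on these, so $B$ is a constant. As $\epsilon\to0$ we have $q=\Theta(\log(1/\epsilon))$, so $B+q=\Theta(q)$. The second term is then $O(Bq^2\log q)=O(B\log^2(1/\epsilon)\log\log(1/\epsilon))$. For small $\epsilon$ this absorbs the constant $ndB\log(2B)$, because $ndB\log(2B)=O(B)$ once $n,d$ and $B$ are fixed. This gives \cref{eq:fixed-coefficient-precision-cost}. When $\beta+d\log(d+1)=O(nd)$, the width bound \cref{eq:arithmetic-width} gives $B=O(nd)$, which yields \cref{eq:fixed-coefficient-nd-cost}.

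For the refined-coefficient regime, I would verify from \cref{eq:explicit-arithmetic-width} that $B=\Theta(\beta+dn+d\log(d+1))$. With $n,d$ fixed and $\beta=\Theta(\log(1/\epsilon))$, this gives $B=\Theta(\log(1/\epsilon))$. The lower bound needs $B\gtrsim\beta$, which holds because $B$ must hold the coefficients. Then $q=\Theta(\log(B/\epsilon))=\Theta(\log(1/\epsilon))$, since $\log B=O(\log\log(1/\epsilon))$. With $B+q=\Theta(\log(1/\epsilon))$:
\begin{itemize}
\item the first term is $O(nd\log(1/\epsilon)\log\log(1/\epsilon))$;
\item the second term is $O(\log^3(1/\epsilon)\log\log(1/\epsilon))$.
\end{itemize}
For fixed $n,d$ the second term dominates, giving \cref{eq:refined-coefficient-precision-cost}. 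The statement that angle computation, synthesis and uncomputation are included is inherited from \Cref{thm:fast-clean-preparation}.

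The main obstacle is the width bookkeeping. I need to check in the appendix that the explicit $B$ is two-sided $\Theta$ of $\beta+dn+d\log(d+1)$, not merely an upper bound. I also need $\log(1/\epsilon)$ to dominate both $\log B$ and the constant $c$; this only needs $\epsilon$ small, which is automatic in the limit taken.
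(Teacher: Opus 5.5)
Your proposal is correct and follows the paper's proof essentially step for step: you bound the synthesis term by $O(Bq^2)$ and absorb it into the angle-computation term, then specialize to fixed $B$ (where $q,L=\Theta(\log(1/\epsilon))$) and to $\beta=\Theta(\log(1/\epsilon))$ (where $B,q,L=\Theta(\log(1/\epsilon))$). The two-sided bound on $B$ that you flag as the main obstacle follows directly from the formula, because \cref{eq:explicit-arithmetic-width} contains the term $2\beta$ and is otherwise $O(\beta+dn+d\log(d+1))$; you do not need the heuristic that $B$ must hold the coefficients.
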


\begin{proof}
Appendix~\ref{app:controlled-rotation-synthesis} establishes the logarithmic
synthesis bound. Since $q=\lceil\log_2(cB/\epsilon)\rceil$, the synthesis
term in \cref{eq:clean-fast-cost} satisfies
\begin{equation}
 Bq\,\mathsf S_{\rm rot}\!\left(\frac{\epsilon}{cBq}\right)
 =O\!\left(Bq\log\frac{2cBq}{\epsilon}\right)
 =O(Bq^2).
 \label{eq:logarithmic-synthesis-contribution}
\end{equation}
With $L=\Theta(B+q)$, this term is absorbed by the coherent angle
computation, giving \cref{eq:complete-logarithmic-synthesis-cost}.
For fixed $B$, we have $q=\Theta(\log(1/\epsilon))$ and
$L=\Theta(\log(1/\epsilon))$ as $\epsilon\to0$, which gives
\cref{eq:fixed-coefficient-precision-cost}. The stated condition on
$\beta,d,n$ makes the width in \cref{eq:explicit-arithmetic-width}
$O(nd)$, proving \cref{eq:fixed-coefficient-nd-cost}.
If $\beta=\Theta(\log(1/\epsilon))$, that width is $\Theta(\log(1/\epsilon))$;
then $q,L=\Theta(\log(1/\epsilon))$ and
\cref{eq:refined-coefficient-precision-cost} follows. In this second
regime, rotation synthesis contributes $O(\log^3(1/\epsilon))$ gates and
coherent angle computation contributes $O(\log^3(1/\epsilon)\log\log(1/\epsilon))$.
\end{proof}

For a fixed real polynomial $f$ and fixed grid with a nonzero sampled
vector, the coefficient rounding in Appendix~\ref{app:input-accuracy}
connects the second regime to the total preparation error. Assign
$\epsilon_{\mathrm{coef}}=\epsilon/2$ in
\cref{eq:coefficient-approximation-width}. Since $\chi_N(f)$ is fixed,
this gives $b=\log_2(1/\epsilon)+O(1)$. The normalized coefficient of largest
absolute value rounds to $\pm2^b$, so this choice has
$\beta=b+1=\Theta(\log(1/\epsilon))$. Applying the circuit with error tolerance
$\epsilon/2$ gives total joint error at most $\epsilon$ by
\cref{eq:dyadic-state-error,eq:clean-fast-error}, with the complete gate
count $O(\log^3(1/\epsilon)\log\log(1/\epsilon))$ in
\cref{eq:refined-coefficient-precision-cost}.

\begin{remark}[What ``garbage-free'' means at finite precision]
The ideal circuit returns the full ancilla register to $\ket{0^{a_{\rm aux}}}$.
For synthesized rotations, \cref{eq:clean-error} is a statement about the
\emph{joint} state. Each arithmetic ancilla qubit returns exactly to
$\ket0$, while the register holding the fiber position is only close to
$\ket{0^B}$ and can retain a small entanglement with $y$. The reduced
output state $\widetilde\rho$ satisfies
\begin{equation}
 \tfrac12\bigl\|\widetilde\rho-\ket\psi\!\bra\psi\bigr\|_1
 \le\epsilon.
 \label{eq:reduced-clean-error}
\end{equation}
If the integer polynomial approximates a real polynomial with error
$\epsilon_{\rm coef}$ as in \cref{eq:dyadic-state-error}, the joint error
relative to the target defined by that real polynomial is at most
$\epsilon_{\rm coef}+\epsilon$. These are separate accuracy budgets.
\end{remark}

The compressed representation is exponentially smaller than an
explicit $N$-entry table when the input parameters are polynomially bounded
in $n$; that comparison alone is not a quantum--classical sampling separation.

\begin{remark}[Scope of the comparison]
The resource bounds establish an efficient, compressed \emph{alias}
implementation and its reversible cleanup; they are not lower bounds for
preparing polynomial states. The same exact cumulative function $S$ can
also support direct cumulative inversion or direct rotations determined by counts.
Thus an advantage over those alternatives requires a separate resource or
application comparison. The present construction isolates what is needed
to make an ordered alias sampler itself garbage-free.
\end{remark}

\section{Conclusion}
\label{sec:conclusion}
In the paper, we exponentially reduced the computational complexity of alias sampling algorithm with a structured quantum state—the quantum state with amplitude generated by a finite-degree polynomial. Both the classical preprocessing and coherent quantum complexity have been exponentially reduced from $\mathcal{O}(N)$ to $poly(n,d)$. Such reduction is made possible by a more efficient representation of alias tables for polynomial states. Instead of loading all amplitudes of the polynomial state, we can merely use the low-dimensional description of the polynomial and coherently compute the alias table on the quantum computer. In the high-limit regime $\epsilon \to 0$, the gate complexity is of $\tilde{\mathcal{O}}(\log^3(1/\epsilon))$, independent of $n,d$. In addition, when $\beta$ is a fixed number, the leading order of complexity matches the state-of-the-art method in terms of leading order, which is linear in $n,d$. 

The compact representation exploits two features of polynomial amplitudes.
First, the degree bound limits the light, heavy, and balanced indices to
$O(d)$ groups of consecutive indices, so their boundaries and types can be
stored in a small set of group records. Second, polynomial values and their
cumulative squared weights admit short exact formulas. In particular,
$S(t)=\sum_{i<t}P(i)^2$ is a polynomial of degree at most $2d+1$, with
integer coefficients in the binomial basis. These formulas let us evaluate
the total weight over an interval without summing its entries individually.
Together, the group records and exact cumulative formulas give the deficits
and surpluses used by reversible searches to compute each alias and keep
threshold. We can therefore evaluate the alias table coherently from the
polynomial coefficients without constructing or storing its $2^n$ entries.

Moreover, for the first time, it is possible to prepare quantum states with alias sampling without garbage states. Such clean state preparation techniques can be useful beyond the scope of polynomial state preparation and commonly use places like quantum input models and block encoding. This gives a new path for the initial state preparation with sampling type methods, which is extremely helpful for quantum linear solver, quantum differential equation solver.

There are still a few directions left open along the line of polynomial state preparation or alias sampling. The current method requires intensive arithmetic computation. There can be a large constant overhead for naive implementations, and further reducing the complexity can be of great interest. Alias sampling for other structured states, including sparse states without garbage states, can also be interesting. Since the method can also be considered a quantum-inspired classical algorithm, it would be intriguing to compare it with other classical sampling algorithms, particularly those based on diffusion models or stochastic differential equations.

\section*{Acknowledgments}
The authors thank Nathan Baker, Daan Camps, Brad Lackey, Burak Sahinoglu, Matthias Christandl, and Matthias Troyer for valuable discussions.
D.L and WA.dJ was supported by the U.S. Department of Energy, Office of Science, Office of Advanced Scientific Computing Research through the Accelerated Research in Quantum Computing Program MACH-Q Project. H.W. and J.C. acknowledge support from the NSF NQVL:QSTD Phase II
program under Award No.~2533041. C.Y was supported by the U.S. Department of Energy, Office of Science, Office of Advanced Scientific Computing Research through Quantum Utility through Advanced Computational Quantum Algorithms and the Quantum System Accelerator projects.


The authors acknowledge the use of GPT-6 Astra for assistance with literature searches, language editing and preparation of initial note.

\appendix

\section{Worked example of the ordered table}
\label{app:worked-example}

Let $N=8$ and $P(i)=i$ for every $i\in[N]$. Then
\begin{equation}
 Q(i)=i^2,
 \qquad
 Z=\sum_{i=0}^{7}i^2=140,
 \qquad
 v_i=8i^2.
\label{eq:example-data}
\end{equation}
The light, heavy, and balanced index sets in \cref{eq:LHB} are
\begin{equation}
 \mathcal{I}_{\mathcal{L}}=\{0,1,2,3,4\},
 \qquad
 \mathcal{I}_{\mathcal{H}}=\{5,6,7\},
 \qquad
 \mathcal{I}_{\mathcal{B}}=\varnothing.
\label{eq:example-index-sets}
\end{equation}
Thus
\begin{equation}
 (\ell_1,\ldots,\ell_5)=(0,1,2,3,4),
 \qquad
 (h_1,h_2,h_3)=(5,6,7).
\label{eq:example-ordered-indices}
\end{equation}
The cumulative missing and extra weights are
\begin{equation}
 \begin{aligned}
 (D(0),\ldots,D(5))&=(0,140,272,380,448,460),\\
 (E(0),\ldots,E(3))&=(0,60,208,460).
 \end{aligned}
\label{eq:example-DE}
\end{equation}
The coefficients obtained from finite differences of $Q$ are
$(\kappa_0,\kappa_1,\kappa_2)=(0,1,2)$, and therefore
\begin{equation}
 S(t)=\binom{t}{2}+2\binom{t}{3}.
\label{eq:example-S}
\end{equation}
Substituting this prefix sum into \cref{eq:D-closed,eq:E-closed} produces
every value in \cref{eq:example-DE} without summing over the eight indices.

The definitions in \cref{eq:s-of-r,eq:rho-of-s,eq:ordered-table} give
\begin{equation}
\begin{array}{c|r|c|r|c}
i&v_i&\text{type}&\keep(i)&\alias(i)\\
\hline
0&0   &\mathcal{L}&0   &5\\
1&8   &\mathcal{L}&8   &6\\
2&32  &\mathcal{L}&32  &7\\
3&72  &\mathcal{L}&72  &7\\
4&128 &\mathcal{L}&128 &7\\
5&200 &\mathcal{H}&60  &6\\
6&288 &\mathcal{H}&76  &7\\
7&392 &\mathcal{H}&140 &7
\end{array}
\label{eq:example-table}
\end{equation}
For example, the heavy index $6$ receives $\keep(6)=76$ retained fragments,
$140-8=132$ fragments transferred from the light index $1$, and
$140-60=80$ fragments transferred from the preceding heavy index $5$. Their
sum is $288=v_6$, as required by \cref{eq:balance-condition}.

For this example,
\begin{equation}
 \begin{aligned}
 (s(1),\ldots,s(5))&=(1,2,3,3,3),\\
 (\rho(0),\rho(1),\rho(2),\rho(3))&=(0,1,2,5),\\
 (\tau_0,\tau_1,\tau_2,\tau_3)&=(0,80,64,0).
 \end{aligned}
\label{eq:example-ranks-and-tau}
\end{equation}
For each $s\in\{1,2,3\}$, the fragments transferred from the preceding heavy
index when $s>1$, the fragments transferred from light indices, and the
fragments retained by the heavy index $h_s$ occupy the following ranges of the
position $t$:
\begin{equation}
\begin{array}{c|c|c|c|c}
h_s&\substack{\text{from preceding}\\\text{heavy index}}
&\text{from light indices}&\text{retained by }h_s&v_{h_s}\\
\hline
5&\varnothing&0\le t<140&140\le t<200&200\\
6&0\le t<80&80\le t<212&212\le t<288&288\\
7&0\le t<64&64\le t<252&252\le t<392&392
\end{array}
\label{eq:example-rank-ranges}
\end{equation}
For the heavy index $6$, the source fragments $(i,z)=(5,100)$, $(1,50)$,
and $(6,30)$ have fiber positions $t=40$, $122$, and $242$, respectively.
They illustrate a
fragment transferred from the preceding heavy index, one transferred from a
light index, and one retained by index $6$. The output index $0$ has an empty
fiber, so no term with $y=0$ occurs in the ranked state.

To see the inverse ranking explicitly, take $(y,t)=(6,122)$. The block
transferred from the preceding heavy index has length $\tau_1=80$, and the
retained block begins at $v_6-\keep(6)=212$. Since $80\le122<212$, the
fragment comes from a light index. Then
\begin{equation}
 \begin{gathered}
 \xi=122+E(1)=182,\\
 D(1)=140\le182<D(2)=272.
 \end{gathered}
\end{equation}
Thus $r=2$, $i=\ell_2=1$, and
$z=v_1+182-D(1)=8+42=50$. Hence
$R^{-1}(6,122)=(1,50)$, as required.
After ranking, conditioned on output $y=6$, the register holding the fiber
position is in the state
\begin{equation}
 \frac1{\sqrt{288}}\sum_{t=0}^{287}\ket t
 =U_{288}\ket0,
\end{equation}
Applying $U_{288}^{\dagger}$ resets this register to $\ket0$. The output
amplitude remains
$\sqrt{288/(8\cdot140)}=6/\sqrt{140}$.

\section{Additional implementation details}
\label{app:implicit-oracle-details}

\subsection{Selection from the stored group records}
\label{subsec:group-selection}

We use the group records of \Cref{thm:implicit-representation}.
The boundaries $\gamma_{j-1}$ and $\gamma_j$ are defined in
\cref{eq:group-boundaries}. The counts $\lambda_j$ and $\eta_j$ are the
numbers of light and heavy indices before $\gamma_{j-1}$, respectively,
as defined in \cref{eq:group-earlier-counts}. The corresponding cumulative
missing and extra weights $D(\lambda_j)$ and $E(\eta_j)$ are obtained
from \cref{eq:DE}.

For an input $i\in[N]$ and each $a\in\{1,\ldots,G-1\}$, compute a bit
$\chi_a$ that equals $1$ if and only if
$i\ge\gamma_a$. Since the boundaries are increasing,
\begin{equation}
 j(i):=1+\sum_{a=1}^{G-1}\chi_a
\label{eq:appendix-group-number}
\end{equation}
is the unique group number satisfying
$\gamma_{j(i)-1}\le i<\gamma_{j(i)}$.

For $j\in\{1,\ldots,G\}$, let $u_j$ be any integer value stored with group
$j$. Then
\begin{equation}
 u_{j(i)}=u_1+\sum_{a=1}^{G-1}
 \chi_a(u_{a+1}-u_a).
\label{eq:appendix-group-selection}
\end{equation}
The first $j(i)-1$ comparison bits equal $1$ and the remaining bits equal
$0$, so the differences in the sum add up to $u_{j(i)}-u_1$. Adding $u_1$
therefore gives $u_{j(i)}$. Applying
\cref{eq:appendix-group-selection} selects
$\gamma_{j(i)-1}$, $\gamma_{j(i)}$, $\lambda_{j(i)}$, $\eta_{j(i)}$,
$D(\lambda_{j(i)})$, and $E(\eta_{j(i)})$. We encode the three possible types
by the two-bit strings $\mathcal{L}\leftrightarrow(0,0)$,
$\mathcal{H}\leftrightarrow(0,1)$, and $\mathcal{B}\leftrightarrow(1,0)$ and apply
\cref{eq:appendix-group-selection} to each encoding bit to select
$\theta_{j(i)}$. Thus all data associated with the group are selected using
$O(G)=O(d)$ comparisons and controlled additions.

For $j\in\{1,\ldots,G\}$ and $i\in[N]$ satisfying
$\gamma_{j-1}\le i<\gamma_j$, \cref{eq:light-rank} gives the light rank when
$\theta_j=\mathcal{L}$. When $\theta_j=\mathcal{H}$, the heavy rank is
\begin{equation}
 \operatorname{rank}_{\mathcal{I}_{\mathcal{H}}}(i)
 =\eta_j+i-\gamma_{j-1}+1.
\label{eq:appendix-heavy-rank}
\end{equation}
Conversely, for $r\in\{1,\ldots,m_{\mathcal{L}}\}$, let
$j_{\mathcal{L}}(r)$ identify the unique light group specified below. For
$s\in\{1,\ldots,m_{\mathcal{H}}\}$, define $j_{\mathcal{H}}(s)$ analogously for
the heavy groups:
\begin{equation}
 \begin{aligned}
  \theta_{j_{\mathcal{L}}(r)}&=\mathcal{L},&
  \lambda_{j_{\mathcal{L}}(r)}&<r
  \le\lambda_{j_{\mathcal{L}}(r)}
  +\gamma_{j_{\mathcal{L}}(r)}-\gamma_{j_{\mathcal{L}}(r)-1},\\
  \theta_{j_{\mathcal{H}}(s)}&=\mathcal{H},&
  \eta_{j_{\mathcal{H}}(s)}&<s
  \le\eta_{j_{\mathcal{H}}(s)}
  +\gamma_{j_{\mathcal{H}}(s)}-\gamma_{j_{\mathcal{H}}(s)-1}.
 \end{aligned}
\label{eq:appendix-inverse-rank-groups}
\end{equation}
After the appropriate group is selected, one addition recovers the index. All
constants used here are classical values inserted when the circuit is
constructed. More explicitly, the groups identified in
\cref{eq:appendix-inverse-rank-groups} give
\begin{equation}
 \begin{aligned}
  \ell_r&=\gamma_{j_{\mathcal{L}}(r)-1}+r
  -\lambda_{j_{\mathcal{L}}(r)}-1,\\
  h_s&=\gamma_{j_{\mathcal{H}}(s)-1}+s
  -\eta_{j_{\mathcal{H}}(s)}-1.
 \end{aligned}
\label{eq:appendix-rank-to-index}
\end{equation}

\subsection{From isolated roots to grid groups}
\label{app:root-to-grid}

The goal of this classical preprocessing step is to compute the group
boundaries $\gamma_j$ in \cref{eq:group-boundaries} directly from the
polynomial coefficients. Recall from \cref{eq:classification-polynomial}
that $F(i)=NP(i)^2-Z$ determines whether an integer index $i\in[N]$ is
light, heavy, or balanced according to whether its value is negative,
positive, or zero. The sign of a real polynomial is constant between
consecutive real roots. We therefore locate the roots relative to the
integer grid and use them to identify the groups of consecutive indices
with the same classification. The resulting group records become fixed
constants in the quantum circuit. If $F$ is constant, its sign determines
a single group; in the following, assume $r:=\deg F\ge1$.

We first compute the square-free part $\widehat F$ of $F$, obtained by
removing repeated copies of its factors. This polynomial has the same
distinct roots as $F$, and each of its roots is simple. A certified classical
root-isolation algorithm~\cite{sagraloff2011} produces disjoint closed
intervals with rational endpoints, each containing exactly one distinct real root
and together containing all real roots. Each such interval is called an
\emph{isolating interval}. We choose endpoints that are not roots and refine
the intervals to width less than $1/2$. Intervals disjoint from $[0,N-1]$
can then be discarded. To refine a remaining interval, evaluate
$\widehat F$ exactly at its midpoint and retain the half whose endpoints
have opposite signs. If the midpoint is itself a root, its position is
known exactly. This bisection procedure preserves the unique root in the
interval. The signs used for the final light, heavy, and balanced
classification are evaluated using the original polynomial $F$.

The next step determines where each isolated root lies relative to the
integers. If its interval lies entirely between two consecutive integers
$k$ and $k+1$, the root can contribute a group boundary at $k+1$.
Otherwise, the interval contains a unique candidate integer $j$, because
its width is less than $1/2$. We evaluate $F(j)$ exactly. If $F(j)=0$,
then $j$ is the unique root in the interval and is a balanced index whenever
$j\in[N]$. If $F(j)\ne0$, we continue bisection until the interval lies
entirely on one side of $j$. We now bound the precision required for this
last step.

Let $H_F\ge1$ bound the magnitudes of the integer coefficients of $F$.
Every retained interval lies within $[-1,N+1]$. On this range,
\begin{equation}
 |F'(x)|\le H_F\frac{r(r+1)}2(N+2)^{r-1}=:K_F.
 \label{eq:root-grid-derivative}
\end{equation}
If $F(j)\ne0$, its integer value satisfies $|F(j)|\ge1$. For the root
$\alpha$ in the interval, the mean value theorem therefore gives
\begin{equation}
 1\le |F(j)-F(\alpha)|\le K_F|j-\alpha|,
 \qquad |\alpha-j|\ge K_F^{-1}.
 \label{eq:root-integer-separation}
\end{equation}
An interval of width less than $K_F^{-1}$ containing $\alpha$ therefore
excludes every integer that is not a root. It suffices to refine to width
at most $2^{-L}$, where
\begin{equation}
 \begin{aligned}
 L&:=\lceil\log_2 K_F\rceil+1,\\
 L&=O\bigl(\log H_F+rn+\log(r+1)\bigr).
 \end{aligned}
 \label{eq:root-grid-precision}
\end{equation}
Here we used $N=2^n$. The magnitude of $K_F$ may grow exponentially with
$n$, while the required number of binary digits is proportional to
$\log_2 K_F$. Starting from an interval of width $W_0<1/2$, after $t$
bisections its width is $W_0 2^{-t}$. Thus $O(L)$ bisections suffice to
reach the required width.

The cost of the arithmetic in each bisection is also polynomial.
For the integer input model in \cref{eq:integer-P,eq:coefficient-width},
one may take
\begin{equation}
 H_F=(d+1)^2 2^{2\beta}N^{2d+1}.
 \label{eq:root-coefficient-height}
\end{equation}
Indeed, this bounds each coefficient of $NP^2$ and the normalization $Z$;
the constant coefficient of $F$ is the difference of two nonnegative
quantities bounded by $H_F$. Consequently, $r\le2d$ and
$\log H_F=O(\beta+dn+\log(d+1))$, so $L$ is polynomial in $d$, $n$, and
$\beta$. Computing the square-free part and isolating its roots have
polynomial bit complexity~\cite{sagraloff2011}. The rational endpoints
produced by this computation have polynomial bit lengths, and each
bisection adds at most one bit to their common denominator. Exact
polynomial evaluation at these rational points also uses integers with
polynomial bit lengths. Hence both the number of refinement steps and
the bit cost of each step are polynomial in the input parameters.

Finally, we construct the groups from the root locations. Include $0$ and
$N$ as boundary candidates. A noninteger root between $k$ and $k+1$ adds
$k+1$ as a candidate, and an integer root $j\in[N]$ adds $j$ and $j+1$
to isolate the balanced index. Retain candidates in $[0,N]$, sort them,
and remove duplicates. Evaluate the original $F$ at one integer in each
resulting nonempty group, and merge neighboring groups with equal signs.
This also handles repeated roots at which the sign stays the same and
multiple roots between the same pair of integers. There are $O(d)$ roots
and boundary candidates, so these evaluations determine all classification
groups with polynomial classical cost and avoid a scan of all $N=2^n$
indices.

\subsection{Exact division by a known constant}
\label{subsec:constant-division}

This subsection constructs the exact division required by the binomial
recurrence in \cref{eq:binomial-recurrence}. The divisor $h\ge1$ is a
classical constant fixed when the circuit is constructed; in the recurrence,
$h=k+1$. The numerator $w$ is held in an $L$-qubit register, with $L\ge1$,
$0\le w<2^L$, and the promise that $h$ divides $w$. For an arbitrary
$L$-bit output string $y$, the desired transformation is
\begin{equation}
 \begin{aligned}
 &\ket{w}\ket{y}\ket{0^{a_{\rm div}}}\\
 &\quad\longmapsto
 \ket{w}\ket{y\mathbin{\oplus}(w/h)}\ket{0^{a_{\rm div}}},
 \end{aligned}
 \label{eq:constant-division-map}
\end{equation}
where $\oplus$ is bitwise XOR and $a_{\rm div}$ counts all ancilla qubits.
The input is preserved, and every ancilla qubit returns to $\ket0$.
If $h\ge2^L$, the divisibility promise forces $w=0$, so the identity circuit
implements this map. We therefore consider $1\le h<2^L$ below.

Write $h=2^\nu o$, where $o$ is odd and $0\le\nu<L$, and let $q=w/h$.
Divisibility gives
\[
 w=2^\nu o q,\qquad a:=w/2^\nu=oq.
\]
Thus the low $\nu$ bits of $w$ are zero, and its high $L-\nu$ bits encode
$a$. Since $o$ is odd, it has an inverse modulo $2^L$. Compute the constant
$c\in\{0,\ldots,2^L-1\}$ satisfying $oc\equiv1\pmod{2^L}$ classically
in advance. Then $ac=oqc\equiv q\pmod{2^L}$. Moreover,
$0\le q\le w<2^L$, so this residue equals the desired integer quotient:
\begin{equation}
 \frac wh=\left[\left(\frac{w}{2^\nu}\right)c\right]\bmod 2^L.
 \label{eq:constant-division-modular}
\end{equation}
The division can therefore be implemented by copying bits and multiplying
by a fixed constant.

We use the clean multiplication circuit specified in the arithmetic
conventions, which preserves its input registers and writes their full
product into a separate register. The following construction makes all
registers explicit:
\begin{enumerate}
 \item \emph{Form the shifted operand.}
 Initialize an $L$-qubit register $A$ to $\ket{0^L}$. Using $L-\nu$
 CNOT gates, copy the high $L-\nu$ bits of $w$ into the low $L-\nu$ bits
 of $A$. Its high $\nu$ bits remain zero, so $A$ holds $a=w/2^\nu$.

 \item \emph{Compute the product.}
 Initialize an $L$-qubit register $C$ to the known constant $c$ using
 at most $L$ NOT gates. Apply clean multiplication to $A$ and $C$, writing
 the full product $ac$ into a $2L$-qubit register $R$ initially in
 $\ket{0^{2L}}$. By \cref{eq:constant-division-modular}, the low $L$ bits
 of $R$ encode $q$.

 \item \emph{Copy the quotient to the output.}
 XOR the low $L$ bits of $R$ into the register holding $y$, using $L$
 CNOT gates. That register now holds $y\oplus q$.

 \item \emph{Uncompute.}
 Apply the inverse multiplication circuit to return $R$ to
 $\ket{0^{2L}}$. Undo the NOT gates that prepared $C$ and the CNOT gates
 that prepared $A$, returning both registers to $\ket{0^L}$.
 The multiplication circuit also returns each of its internal ancilla
 qubits to $\ket0$.
\end{enumerate}
All these operations act coherently when $w$ is in superposition. The
same reversible arithmetic gives a deterministic extension to inputs
outside the divisibility promise, as in \cref{eq:xor-oracle-convention}.

The construction uses one multiplication and its inverse, together with
$O(L)$ NOT and CNOT gates. Since $\Mcost(L)\ge L$, the total cost is
$O(\Mcost(L))$ gates. The registers $A$, $C$, and $R$ use $4L$ ancilla
qubits in total, in addition to the multiplication circuit's internal
ancilla qubits. Thus
\begin{equation}
 a_{\rm div}\le4L+\mathsf W(L)=O\bigl(L+\mathsf W(L)\bigr).
 \label{eq:constant-division-space}
\end{equation}
Here the register holding $w$ and the output register holding $y$ are
excluded from $a_{\rm div}$. Keeping the full product until uncomputation
makes its contribution to the qubit count explicit.

For the binomial recurrence, the numerator is
$w=\binom{t}{k}(t-k)$ on the active branch $k<t$, and $h=k+1$.
The width in \cref{eq:explicit-arithmetic-width} bounds this full numerator
before division and supplies a valid choice of $L$. Inactive branches are
assigned zero directly. Each division therefore costs $O(\Mcost(L))$
gates and uses $O(L+\mathsf W(L))$ ancilla qubits, which can be reused
at subsequent steps of the recurrence.


\section{Gate bounds for uniform states with variable range}
\label{app:uniform-state-details}

Clean state preparation uses $U_Z$ to prepare the sampling input and
a uniform-state inverse to erase the fiber position.
For a $B$-qubit position register and $1\le m<2^B$, the ideal circuit
$U_m$ prepares
\[
 U_m\ket{0^B}=\frac{1}{\sqrt m}\sum_{t=0}^{m-1}\ket t.
\]
The sampling circuit uses the classically known range $m=Z$.
For fiber erasure, the range is $m=v_y$, the number of source fragments
assigned to output $y$. This value is supplied in a quantum register
that controls $U_{v_y}^\dagger$ and may be in superposition.
We set $U_0=I$ to include empty ranges.

Both operations use the binary decision tree in
\cref{fig:uniform-state}(a), whose branches set successive position bits.
Its node gates can be precomputed for fixed $Z$, whereas a variable
$v_y$ requires their parameters to be computed coherently from the range
and the preceding bits. Each computed angle must then be uncomputed
after use, as in \cref{fig:uniform-state}(b). The gate bounds therefore
include this reversible arithmetic as well as rotation synthesis.

\begin{figure*}[t]
\centering
\includegraphics[width=\textwidth]{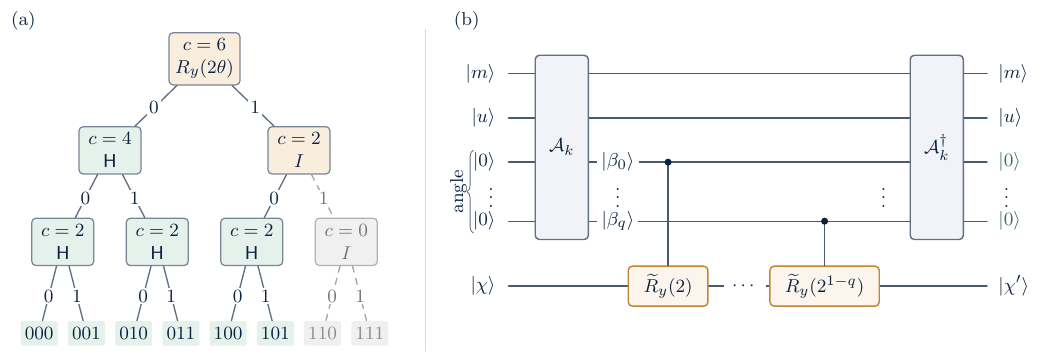}
\caption{Uniform-state preparation: (a) binary tree for $m=6$, with counts
$c$ and node gates, and (b) coherent rotations with exact angle uncomputation.}
\label{fig:uniform-state}
\end{figure*}

\subsection{Error allocation and the circuit for fixed capacity}

We divide the total tolerance $\epsilon$ equally between $U_Z$ and
$\mathcal U^\dagger$, which applies $U_{v_y}^\dagger$ under the control
of the volume register. For either circuit, let $0<\delta<1$ denote its
operator-norm error budget, including relative phase errors between
output labels. Within a level, the node gates act on disjoint control
subspaces, so their errors contribute through the largest node error.
Across the at most $B$ levels, the errors add under composition.
A sufficiently small constant times $\delta/B$ per node therefore
suffices for the circuit budget.

The binary tree sets the position bits from most to least significant.
At level $k$, the integer $u$ encodes the first $k$ bits, with $u=0$ for
the empty prefix at $k=0$. This prefix describes the integers from
$u2^{B-k}$ to $(u+1)2^{B-k}-1$. Its count $c_m(k,u)$ is the number of
these integers in the target range $[m]=\{0,\ldots,m-1\}$.
For $k<B$, the parent and child counts are
\[
 \begin{aligned}
 c_m(k,u)&=\min\{\max\{m-u2^{B-k},0\},2^{B-k}\},\\
 a&=c_m(k+1,2u),\quad b=c_m(k+1,2u+1).
 \end{aligned}
\]
Thus $a$ and $b$ count the target integers whose next bit is $0$ and $1$,
respectively, and $a+b=c_m(k,u)$. At a nonempty node, the next qubit
starts in $\ket0$ and must be mapped to
\[
 \sqrt{\frac{a}{a+b}}\ket0+\sqrt{\frac{b}{a+b}}\ket1.
\]
Choosing these conditional amplitudes at every level gives equal
amplitudes at all $m$ target leaves.

Exact count comparisons handle every node that needs no angle
approximation. When $b=0$, we use the identity $I$, including at empty
nodes where $a=b=0$. When $a=0<b$, we use the bit flip $X$.
This latter case cannot occur for the initial segment $[m]$, but specifies
the gate convention. Nodes with $a=b>0$ use the Hadamard gate $\Had$.
Only unequal positive counts require an approximate rotation
$R_y(2\theta)$, with $\theta=\arctan\sqrt{b/a}$.
Here $R_y(\phi)=e^{-i\phi Y/2}$, where $Y$ is the Pauli $Y$ operator.

The equal split requires only $\Had$, since
$\Had\ket0=R_y(\pi/2)\ket0$. These gates differ on $\ket1$, so we use
$\Had$ consistently in both circuits and $\Had^\dagger=\Had$ in their
inverses. In \cref{fig:uniform-state}(a), the root has child counts
$4$ and $2$ and uses $R_y(2\theta)$. The node with prefix $1$ has child
counts $2$ and $0$, so it uses $I$.

For fixed $Z$, gate selection reduces to locating the boundary of
$[Z]$ at each level. At depth $k$, write
$Z=q_k2^{B-k}+r_k$ with $0\le r_k<2^{B-k}$.
Prefixes $u<q_k$ are full and receive a controlled Hadamard gate.
Only $u=q_k$ can be partial, when $r_k>0$, and its counts select the
gate as above. All remaining prefixes are empty and use $I$.
Since $q_k$ and $r_k$ are classical constants, we can precompute these
gates and select them by comparing the $k$-bit prefix with $q_k$.
The comparisons, exact controls, and their reversals cost $O(k+1)$
gates per level, giving $O(B^2)$ exact gates over all $B$ levels.

Only a partial prefix with two nonempty children requires rotation
synthesis. There are $\operatorname{nz}(Z)-1\le B-1$ such prefixes,
where $\operatorname{nz}(Z)$ counts the nonzero binary digits of $Z$.
Let $\mathsf S_{\rm rot}(\eta)$ denote the gate count for a controlled
rotation with operator-norm error at most $\eta$. Adding these rotations
at the assigned precision to the exact controls gives
\begin{equation}
 C_Z(\delta)=O\!\left(B^2+
 B\mathsf S_{\rm rot}\!\left(\frac{\delta}{cB}\right)\right)
 \label{eq:appendix-UZ-rotation-cost}
\end{equation}
for a sufficiently large absolute constant $c$. When $Z=1$ or $Z$ is a
power of two, no rotation requires synthesis.

\subsection{Coherent angle computation and controlled rotations}
\label{app:controlled-synthesis}

\subsubsection{Computing, applying, and uncomputing the angle}
\label{app:coherent-angle-use}

For variable fiber volume, the tree gates depend on the quantum input
$m=v_y$ and the prefix $u$. Listing a separate tree for every possible
$B$-bit range would introduce exponentially many cases. We instead
compute each angle with the same reversible arithmetic circuit for all
$(m,u)$, allowing both registers to remain in superposition.

At level $k$, exact count comparisons select $I$, $X$, or $\Had$ unless
$a,b>0$ and $a\ne b$. In the remaining case, we compute an approximate
angle $\widetilde\theta$ from $(m,k,u)$ in a register with one integer
bit and $q\ge1$ fractional bits. This register controls rotations on the
next position qubit and is then uncomputed, following Grover and
Rudolph~\cite{groverrudolph}. The three stages shown in
\cref{fig:uniform-state}(b) work as follows.

\begin{enumerate}
\item \emph{Compute the angle coherently.}
A reversible circuit $\mathcal A_k$ evaluates the counts and computes the
binary encoding of $\widetilde\theta$, an approximation to
$\vartheta_{m,k,u}$, where
$\vartheta_{m,k,u}=\arctan\sqrt{b/a}$.
Although $\widetilde\theta$ approximates the desired angle, the circuit
computes its finite binary encoding exactly. It copies this encoding to
the angle register and reverses the intermediate arithmetic, leaving $m$
and $u$ unchanged and returning every arithmetic ancilla qubit to $\ket0$.
\Cref{app:angle-computation,app:fast-angle-computation} give the arithmetic
circuits and their precision bounds.

\item \emph{Apply rotations controlled by the angle bits.}
Express the computed angle as
$\widetilde\theta=\sum_{j=0}^{q}\beta_j2^{-j}$, where $\beta_j\in\{0,1\}$.
For each $j$, the qubit holding $\beta_j$ controls $R_y(2^{1-j})$ on
the next position qubit. These rotations share an axis, so their ideal
product is $R_y(2\widetilde\theta)$. Although $\widetilde\theta$ depends
on the quantum input, the elementary angles $2^{1-j}$ are fixed
constants whose controlled rotations can be synthesized classically.
The construction in \cref{app:controlled-rotation-synthesis} approximates
each rotation within its assigned error while leaving the control bit
unchanged, as required for the next step.

\item \emph{Uncompute the angle.}
Apply $\mathcal A_k^\dagger$ after the controlled rotations. Since the
computational-basis values of $m$, $u$, and the angle bits are unchanged, this returns
the angle register to $\ket{0^{q+1}}$ even when it is entangled with
the target qubit. Every arithmetic ancilla qubit also
returns to $\ket0$.
\end{enumerate}

For an arbitrary state $\ket\chi$ of the next position qubit, the three
steps have the following action with ideal rotations and the computed
angle $\widetilde\theta$:
\begin{equation}
 \begin{aligned}
 &\ket m\ket u\ket{0^{q+1}}\ket\chi\\
 &\quad\longmapsto
 \ket m\ket u\ket{0^{q+1}}R_y(2\widetilde\theta)\ket\chi.
 \end{aligned}
 \label{eq:coherent-angle-three-steps}
\end{equation}

Linearity extends this action to a superposition of $(m,u)$.
To implement $\mathcal U^\dagger$ on the same inputs, we apply the
levels in decreasing order of $k$. At each level, we compute the same
angle, apply the adjoints of the compiled controlled rotations in
reverse order, and uncompute the angle.

\subsubsection{Synthesis with an exact inactive branch}
\label{app:controlled-rotation-synthesis}

The compute--rotate--uncompute construction requires two exact
properties from the synthesized controlled rotations.
They must act as the identity when the control is zero, so a disabled
rotation cannot populate an empty branch. They must also leave the
control bits unchanged, so the arithmetic that computed those bits can
be reversed. A generic approximation in operator norm need not satisfy
either property exactly.

We obtain both properties by pairing a synthesized rotation with its
exact inverse. For a target $R_y(\varphi)$, let $V=R_y(\varphi/2)$ and
synthesize a one-qubit approximation $\widetilde V$. Use the same gate
sequence and its exact inverse in
\begin{equation}
 \widetilde C_{\varphi}
 =(I\otimes\widetilde V)\,\mathrm{CNOT}
  (I\otimes\widetilde V^{\dagger})\,\mathrm{CNOT}.
 \label{eq:controlled-rotation-construction}
\end{equation}

On control $0$, the two synthesized sequences cancel exactly.
On control $1$, the CNOTs instead give
$\widetilde V X\widetilde V^{\dagger}X$, whose ideal value is
$R_y(\varphi)$ because
$XR_y(\alpha)X=R_y(-\alpha)$.
The synthesized sequence and its inverse contribute error at most
$2\eta$ when $\|\widetilde V-V\|\le\eta$, while their global phases
cancel. Only the CNOTs act on the control, leaving it unchanged and
available for exact uncomputation.

Ross--Selinger synthesis~\cite{rossselinger2016} approximates
a one-qubit $z$ rotation to operator-norm error $\eta/2$ using
$O(\log(2/\eta))$ Clifford+$T$ gates. The total gate count follows from
their logarithmic $T$-count because each block of consecutive one-qubit
Clifford gates has a constant-size realization. Conjugation by a fixed
Clifford gate converts the $z$ rotation to a $y$ rotation with only a
constant number of additional gates. Substituting this approximation for
$V$ in \cref{eq:controlled-rotation-construction} gives error at most
$\eta$, using the synthesized sequence, its inverse, and two CNOTs,
with no additional ancilla qubits. Thus
\begin{equation}
 \mathsf S_{\rm rot}(\eta)=O(\log(2/\eta)).
 \label{eq:logarithmic-rotation-synthesis}
\end{equation}

The gates $I$, $X$, and controlled $\Had$ used for zero or equal
counts are implemented exactly over Clifford+$T$.
For example, conjugating CNOT by a one-qubit realization of
$R_y(-\pi/4)$, up to its cancelling global phase, gives controlled $\Had$.

These exact gates and controls ensure that $\widetilde U_Z\ket0$ has
support only on $[Z]$. Exact gates exclude empty children, while
approximate rotations act only between nonempty children.
Ranking therefore receives only valid source fragments and clears the
original source register exactly, even at finite synthesis accuracy.
Uncomputation also resets the arithmetic
registers and the register holding the fiber volume to $\ket0$.

These exact cleanup steps leave approximation error only in the output
state and the fiber-position register. With error at most $\epsilon/2$ in each uniform-state circuit,
the final joint state is within Euclidean distance $\epsilon$ of the
desired output state tensored with all auxiliary registers in $\ket0$.

\subsection{An explicit angle algorithm with a conservative bound}
\label{app:angle-computation}

The remaining cost is reversible angle computation. We give a schoolbook implementation whose
precision bound holds for every pair of unequal positive child counts $a,b$.
The desired angle is $\theta=\arctan\sqrt{b/a}\in(0,\pi/2)$.

To make the arctangent series converge rapidly, we reduce its argument
below $1/2$. Starting with the ratio of the smaller count to the larger,
set $\ell=\min(a,b)$, $h=\max(a,b)$, and
\begin{equation}
 r=\ell/h,\qquad s=\sqrt r,\qquad
 u=\frac{s}{1+\sqrt{1+s^2}}.
 \label{eq:appendix-angle-range-reduction}
\end{equation}
Then $0\le r,s\le1$ and $0\le u\le\sqrt2-1<1/2$.
The half-angle identity gives
\begin{equation}
 \theta_0=\arctan s=2\arctan u,
 \qquad
 \theta=\begin{cases}
 \theta_0,&b\le a,\\
 \pi/2-\theta_0,&b>a.
 \end{cases}
 \label{eq:appendix-angle-recovery}
\end{equation}

The first square root determines the required working precision because
its input can approach zero: an input error of $2^{-p}$ can become
an output error of order $2^{-p/2}$.
The later operations amplify this error by only a constant, since the
second radicand is at least $1$ and the final denominator is at least $2$.
To obtain $q$ fractional output bits, we therefore retain
\begin{equation}
 p=2q+\lceil\log_2(q+1)\rceil+c_0
 \label{eq:appendix-fixed-point-width}
\end{equation}
fractional working bits. The fixed constant $c_0$ supplies guard bits,
and $c_0=32$ suffices for the bound below. Registers holding the scaled
integers use $L=O(B+q)$ bits. We compute $r$, both square roots, and $u$
by binary division and integer square root, truncating each result to
$p$ fractional bits. For example, for a nonnegative fixed-point value
$x$, we compute $\lfloor\sqrt{\lfloor2^p x\rfloor 2^p}\rfloor$ and
interpret the result with $p$ fractional bits.

The reduced argument allows a polynomial with $K=O(q)$ terms to
approximate $\arctan u$. We use
\begin{equation}
 A_K(u)=\sum_{j=0}^{K-1}\frac{(-1)^j u^{2j+1}}{2j+1}.
 \label{eq:appendix-arctan-polynomial}
\end{equation}

With $K=q+4$, Horner's rule in $u^2$ evaluates the polynomial in $O(q)$
multiplications and additions using coefficients $1/(2j+1)$ precomputed
at the working precision. We use the result and a value of $\pi/2$
precomputed to the same precision to recover $\theta$ by
\cref{eq:appendix-angle-recovery}, then round to the nearest value with
$q$ fractional bits.

The initial division contributes absolute error at most $2^{-p}$, and
the inequality
$|\sqrt x-\sqrt y|\le\sqrt{|x-y|}$, valid for $x,y\ge0$,
limits the propagated division error to $2^{-p/2}$ after the first square root, even
when $\ell/h$ is exponentially small. Truncating the square root adds
at most $2^{-p}$ and keeps its value in $[0,1]$. The next radicand lies
in $[1,2]$, where the square root is Lipschitz. The final division has
denominator at least $2$, and the map
$s\mapsto s/(1+\sqrt{1+s^2})$ has bounded derivative on $[0,1]$.
These operations therefore amplify the first square-root error by at
most a constant independent of the counts.

The computed argument also remains within the polynomial's convergence
interval: although $u$ can exceed $\sqrt2-1$ by $O(2^{-p})$, the guard
bits keep it in $[0,1/2]$, where the Taylor remainder satisfies
\begin{equation}
 |\arctan u-A_K(u)|
 \le\frac{u^{2K+1}}{2K+1}\le2^{-(2K+1)}.
 \label{eq:appendix-arctan-error}
\end{equation}

The same interval bounds Horner's multiplier by $u^2\le1/4$, so
coefficient and rounding errors propagate as a geometric sum and are
conservatively bounded by $O(K2^{-p})$.
Combining the errors in range reduction, polynomial evaluation, angle
recovery, and final rounding gives
\begin{equation}
 \begin{split}
 |\widetilde\theta-\theta|\le{}&4\,2^{-p/2}
       +32(K+1)2^{-p}\\
       &+2^{-2K}+2^{-q-1}.
 \end{split}
 \label{eq:explicit-angle-error}
\end{equation}

The four terms account for the first square root, the remaining
fixed-point arithmetic and constants, the Taylor remainder, and the
final rounding. With $K=q+4$, $c_0=32$, and $q\ge1$, their sum is
smaller than $2^{-q}$. Choosing
$q=\lceil\log_2(cB/\delta)\rceil$ therefore keeps the angle error within
its share of the level tolerance for a sufficiently large absolute
constant $c$. The precision depends only on $B$ and $\delta$, so all
count pairs use the same register width.

Schoolbook arithmetic gives an explicit reversible implementation of
this calculation. Division and digit-by-digit integer square root each
use $O(L)$ stages of $O(L)$-bit arithmetic. We retain each stage's
overwritten data, copy the result, and reverse the stages. Each routine
then uses $O(L^2)$ gates and at most $O(L^2)$ ancilla qubits. For Horner
evaluation, we retain $O(q)$ intermediate $L$-bit values and use clean
schoolbook multiplication at each step. Copying the final angle and
reversing the intermediate calculations implements $\mathcal A_k$.
Together, $\mathcal A_k$ and $\mathcal A_k^\dagger$ have total gate count
and maximum arithmetic ancilla count
\begin{equation}
 \begin{aligned}
 C_{\rm angle}(B,q)&=O(qL^2),\\
 W_{\rm angle}&=O(L^2+qL)=O(L^2).
 \end{aligned}
 \label{eq:appendix-angle-cost}
\end{equation}

The $O(q)$ Horner steps dominate the gate count, while the stored division
and square-root histories dominate the ancilla count.

Combining angle computation with rotation synthesis gives the cost of
the variable-range circuit. At each level, the $q+1$ angle bits control
$O(q)$ fixed rotations. We synthesize each to error $O(\delta/(Bq))$
using \cref{app:controlled-rotation-synthesis}, then uncompute the counts
and angle exactly. Summing over $B$ levels gives
\begin{equation}
 C_{\rm var}(\delta)
 =O\!\left(B q\left[L^2+
 \mathsf S_{\rm rot}\!\left(\frac{\delta}{cB q}\right)\right]\right).
 \label{eq:appendix-variable-uniform-cost}
\end{equation}
The exact count operations cost $O(B)$ gates per level and are absorbed
in this bound. Reusing the arithmetic registers at every level keeps
the ancilla count at $O(L^2)$. These bounds include angle computation,
controlled rotations, and arithmetic uncomputation. A concrete $T$-count
or depth requires choosing and compiling specific arithmetic and
synthesis circuits.

\subsection{Implementation using fast arithmetic}
\label{app:fast-angle-computation}

Fast arithmetic reduces the $L^2$ cost of the schoolbook routines while
preserving their computed angles and error bound. To obtain the same
truncated values, we approximate division and square root by Newton
iteration, then correct each result with exact comparisons.
We keep $q=\lceil\log_2(cB/\delta)\rceil$ and choose $L=\Theta(B+q)$ large
enough for all scaled operands, so $\log(2L)=O(q)$.

Normalization places the operands in fixed intervals where Newton
iteration converges quadratically. For division, normalize the nonzero
divisor by a power of two to obtain
$y\in[1,2)$ and approximate $1/y$. For a square root, normalize the
positive radicand by a power of four to obtain $y\in[1,4)$ and
approximate $1/\sqrt y$. The respective Newton iterations are
\begin{equation}
 \begin{aligned}
 x_{j+1}&=x_j(2-yx_j),\\
 z_{j+1}&=\tfrac12 z_j(3-yz_j^2),
 \end{aligned}
 \label{eq:fast-newton-iterations}
\end{equation}
with $x_0=z_0=1/2$. These standard iterations use a constant number of
multiplications and additions per step~\cite[Secs.~3.4--3.5 and
4.2]{brentzimmermann2010}. Their convergence follows from the residuals:
in exact arithmetic, the reciprocal residual
$e_j=1-yx_j$ satisfies $e_{j+1}=e_j^2$ and $0\le e_0\le1/2$.
For the reciprocal square root, $f_j=1-yz_j^2$ satisfies
$f_{j+1}=f_j^2(3+f_j)/4\le f_j^2$ and $0\le f_0\le3/4$.
Thus $O(\log(2L))$ exact iterations give any prescribed constant multiple
of $L$ bits of accuracy.

To retain this convergence at finite precision, we evaluate each update
exactly on the current scaled integers and round down once to $c_1L$
fractional bits. Downward rounding keeps the iterates below the exact
reciprocal or reciprocal square root, where both update maps have
derivative at most $2$. Truncation therefore contributes at most
$2^J2^{-c_1L}$ after $J$ updates, in addition to the exact-iteration
residual of at most $(3/4)^{2^J}$.
Choosing $J=\lceil\log_2(c_2L)\rceil$ and sufficiently large absolute
constants $c_1,c_2$ makes both contributions small enough that the
quotient or square-root estimate has error less than $1/4$, even after
multiplication and rescaling by factors of at most $2^{O(L)}$.

Exact comparisons recover the required integer floors from these
approximations. Since each estimate has error less than $1/4$, its
integer part differs from the required floor by at most one.
For a division $U/V$, a constant number of product comparisons and
corrections gives the integer $t$ satisfying
$tV\le U<(t+1)V$. For a square root of $U$, square comparisons and
corrections give $s$ satisfying $s^2\le U<(s+1)^2$.
These are exactly the truncated results required by
\cref{app:angle-computation}. Zero numerators and radicands receive zero
directly. On branches where a routine is not needed, we supply a nonzero
dummy divisor or a positive dummy radicand. This keeps every arithmetic
operation defined on inputs in superposition.

We implement these division and square-root routines with clean
reversible multiplication. Let
$\Mcost(L)$ and $\mathsf W(L)$ bound its gate count and internal ancilla
count for two $L$-bit inputs. The circuit preserves both inputs, XORs the
full product into a $2L$-qubit output register, and returns its internal
ancilla qubits to $\ket0$. We use the bounds
\[
 \begin{aligned}
 \Mcost(L)&=O\!\left(L\log(2L)\right),\\
 \mathsf W(L)&=O\!\left(L\log(2L)\right),
 \end{aligned}
\]
obtained from fast integer multiplication and reversible
compute--copy--uncompute~\cite{harvey2021integer,ragavan2023space}.

To implement the full routine coherently, we first find the leading
nonzero bit and normalize with controlled shifts, using
$O(L\log(2L))$ gates. We then perform the fixed number of Newton
iterations and the final corrections, using clean multiplication and
retaining the intermediate values. Every register width and iteration
count depends only on $L$, and comparisons control reversible operations
without measurement, so one circuit handles all inputs in superposition.
Copying its result and reversing the calculation gives, for each
division or square root,
\begin{equation}
 \begin{aligned}
 C_{\rm div/sqrt}(L)&=O\!\left(\Mcost(L)\log(2L)\right),\\
 A_{\rm div/sqrt}(L)&=O\!\left(\mathsf W(L)+L\log(2L)\right),
 \end{aligned}
 \label{eq:fast-division-sqrt-cost}
\end{equation}
where $A_{\rm div/sqrt}$ counts ancilla qubits. Multiplication reuses
$\mathsf W(L)$ ancilla qubits between iterations, while the retained
values occupy $O(L\log(2L))$ qubits. Reversing the calculation returns
all these ancilla qubits to $\ket0$.

Each angle requires a constant number of these divisions and square
roots, followed by the $O(q)$ Horner steps in
\cref{eq:appendix-arctan-polynomial}. Retaining the $O(q)$ intermediate
values and reusing the ancilla qubits for multiplication gives
\begin{equation}
 \begin{aligned}
 C_{\rm angle}^{\rm fast}(B,q)
   &=O\!\left((q+\log(2L))\Mcost(L)\right)\\
   &=O\!\left(qL\log(2L)\right),\\
 A_{\rm angle}^{\rm fast}(B,q)
   &=O\!\left(qL+L\log(2L)+\mathsf W(L)\right)\\
   &=O(qL).
 \end{aligned}
 \label{eq:fast-angle-cost}
\end{equation}
The second equalities use $\log(2L)=O(q)$ and the multiplication bounds
stated above. Copying the angle, reversing
its computation, and uncomputing it after the rotations change the gate
count by only a constant factor. Since the truncated angle is unchanged,
the same rounding and synthesis error bounds apply. Summing over the
$B$ levels gives
\begin{equation}
 C_{\rm var}^{\rm fast}(\delta)
 =O\!\left(Bq\left[L\log(2L)+
 \mathsf S_{\rm rot}\!\left(\frac{\delta}{cBq}\right)\right]\right),
 \label{eq:fast-variable-uniform-cost}
\end{equation}
with $O(qL)$ ancilla qubits reused across all levels.
Fast arithmetic therefore replaces the $L^2$ term in
\cref{eq:appendix-variable-uniform-cost} by $L\log(2L)$ while retaining
the same error guarantee and exact arithmetic cleanup.
The resulting gate count is polynomial in $B$ and $\log(1/\delta)$,
using one circuit for all ranges in superposition.

\section{Input accuracy and a sufficient arithmetic width}
\label{app:input-accuracy}

\subsection{From real coefficients to an integer polynomial}

The construction in Section~\ref{sec:polynomial-input} takes an integer
polynomial $P$ as input. This subsection shows how to choose its coefficients
when the target state has amplitudes proportional to samples of a polynomial
$f$ with real coefficients. We bound the change in the sampled amplitudes
and then the effect of normalization, obtaining a sufficient coefficient
bit length $\beta$ for a prescribed state error $\epsilon_{\mathrm{coef}}$.
This supplies the coefficient precision used in \Cref{cor:precision-regimes}.
The sum of this coefficient error and the circuit approximation error in
\Cref{thm:fast-clean-preparation} bounds the total preparation error.

Let $f(x)=\sum_{k=0}^d c_kx^k$ have a nonzero sampled vector, set
$C=\max_k|c_k|$, and choose the nearest integers $a_k$ to $2^bc_k/C$.
Then $|a_k/2^b-c_k/C|\le2^{-b-1}$, with $p$ and $P$ as in
\cref{eq:dyadic-p,eq:integer-P}.

To make the precision dependence explicit, let
$\mathbf f:=(f(i/N))_{i\in[N]}\ne0$ and define the scale-invariant sampling
condition
\begin{equation}
 \chi_N(f):=\frac{\sqrt N\,C}{\|\mathbf f\|_2}.
\label{eq:sample-conditioning}
\end{equation}
For $\mathbf p:=(p(i/N))_{i\in[N]}$, coefficientwise rounding gives
\begin{equation}
 \left\|\frac{\mathbf f}{C}-\mathbf p\right\|_2
 \le \sqrt N(d+1)2^{-b-1}.
\label{eq:dyadic-grid-error}
\end{equation}
For nonzero vectors $x$ and $y$, normalization obeys
\begin{equation}
 \left\|\frac{x}{\|x\|_2}-\frac{y}{\|y\|_2}\right\|_2
 \le\frac{2\|x-y\|_2}{\|x\|_2}.
\label{eq:normalization-stability}
\end{equation}
Since the normalized grid state of $\mathbf p$ is the state $\ket{\psi}$ in
\cref{eq:psi-P}, \cref{eq:dyadic-grid-error} implies
\begin{equation}
 \left\|
 \frac{1}{\|\mathbf f\|_2}\sum_{i\in[N]}f(i/N)\ket{i}
 -\ket{\psi}
 \right\|_2
 \le(d+1)\chi_N(f)2^{-b}.
\label{eq:dyadic-state-error}
\end{equation}
Thus, for $0<\epsilon_{\mathrm{coef}}<1$, it suffices to choose
\begin{equation}
 b=\left\lceil\log_2\frac{(d+1)\chi_N(f)}{\epsilon_{\mathrm{coef}}}\right\rceil,
 \qquad
 \beta\le b+1,
\label{eq:coefficient-approximation-width}
\end{equation}
which also ensures $\mathbf p\ne0$: the unnormalized perturbation is
at most $\epsilon_{\mathrm{coef}}/2$ times $\|\mathbf f/C\|_2$.
The coefficient bit length therefore satisfies
\begin{equation}
 \beta=O\!\left(\log\frac{(d+1)\chi_N(f)}{\epsilon_{\mathrm{coef}}}\right).
\label{eq:beta-epsilon-scaling}
\end{equation}
If the sampled profile is well conditioned in the sense that
$\chi_N(f)=\operatorname{poly}(d)$, then
$\beta=O(\log((d+1)/\epsilon_{\mathrm{coef}}))$.  Hence the dependence of $\beta$ on the
target precision is logarithmic; any further dependence on $N=2^n$ enters
through $\chi_N(f)$.

This coefficient estimate assumes access to the required approximations of
the real coefficients and to a sufficient upper bound on $\chi_N(f)$.
It bounds how many coefficient bits suffice, but does not specify how to
obtain those bits from arbitrary real coefficients.

The integer polynomial is the input to the exact combinatorial construction.
Once its coefficients are fixed, changing the accuracy of the later circuit synthesis
does not change its alias table. Approximating a new real polynomial at a
new coefficient accuracy may, of course, change those coefficients and all
preprocessed data.

\subsection{Bounding all exact intermediate values}
\label{app:arithmetic-width-proof}

Once the integer coefficients are fixed, the implicit alias oracle and
fiber ranking circuits evaluate $P,Q,S,D,E$ using exact integer arithmetic.
This subsection proves the sufficient magnitude width
$B=O(\beta+dn+d\log(d+1))$ stated in \cref{eq:arithmetic-width}
and gives an explicit choice of $B$. We bound the polynomial values and
finite differences, followed by the products and signed partial sums in
the cumulative functions. The bound also covers the capacity $Z$ and the
fiber volumes $v_y$ used in state preparation. These estimates justify
the register sizes and arithmetic gate costs used in
\Cref{thm:implicit-representation,thm:fast-clean-preparation}.

By \cref{eq:coefficient-width}, $|a_k|<2^\beta$ for every
$k\in\{0,\ldots,d\}$. Therefore, for every
$i\in[N]$,
\begin{equation}
 |P(i)|<(d+1)2^\beta N^d.
\label{eq:P-bound}
\end{equation}
The coefficients $\kappa_k$ require only the values $Q(u)$ for
$0\le u\le2d$. For these integers,
\begin{equation}
 |P(u)|<(d+1)2^\beta\bigl[N(2d+1)\bigr]^d,
\label{eq:P-difference-bound}
\end{equation}
and \cref{eq:finite-difference-coefficients} gives
\begin{equation}
 |\kappa_k|\le 2^k\max_{0\le u\le k}Q(u),
 \qquad 0\le k\le2d.
\label{eq:kappa-bound}
\end{equation}
For $0\le t\le N$ and $0\le k\le2d$, we also have
$\binom{t}{k+1}\le N^{k+1}$. These bounds show that there is a sufficient
width $B$ for exact arithmetic with
\begin{equation}
 B=O\bigl(\beta+dn+d\log(d+1)\bigr)
\label{eq:appendix-arithmetic-width}
\end{equation}
such that the absolute value of every signed integer arising in
\cref{eq:S-binomial,eq:D-closed,eq:E-closed}, including intermediate products,
is smaller than $2^{B}$. A signed intermediate uses $B$ qubits to store
its absolute value and one additional qubit to store its sign. Indeed,
$Z\le N\max_{i\in[N]}Q(i)$ and
$0\le D(r),E(s)\le NZ$, so the cumulative values also have this width. We fix
such a width $B$, increasing it if necessary so that
$B\ge n+1$; thus an unsigned $B$-qubit register can also
represent the endpoint $N$. We further increase $B$, if needed, so
that $NZ<2^{B}$. It follows that $Z<2^{B}$ and
$v_y<2^{B}$ for every $y\in[N]$.

For example, the following conservative magnitude width is sufficient for
the polynomial and cumulative calculations above:
\begin{equation}
 \begin{split}
 B={}&2\beta+(4d+2)n+2d+8\\
 &+(2d+1)\lceil\log_2(2d+1)\rceil
   +2\lceil\log_2(d+1)\rceil.
 \end{split}
 \label{eq:explicit-arithmetic-width}
\end{equation}
This is not a minimal register allocation. It follows by bounding each
$\kappa_k\binom{t}{k+1}$ separately, allowing their signed partial sums,
and including the extra factor $N$ in $D$ and $E$.
An extra endpoint bit is allowed when a later circuit for uniform states
represents $2^{B}$.

\begin{remark}[Scale of exact integer samples]
\label[remark]{rem:exact-integer-scale}
Let $g$ have exact degree $r\ge1$ and leading coefficient $g_r\ne0$, with
$N=2^n\ge r+1$. If an integer $b_{\rm int}\ge0$ makes
$2^{b_{\rm int}}g(i/N)$ integral for every $i\in[N]$, then
\[
 b_{\rm int}\ge rn-\log_2(r!\,|g_r|).
\]
Indeed, the $r$th forward difference of these integer samples at zero is
$r!\,2^{b_{\rm int}-rn}g_r$, a nonzero integer whose absolute value is
at least one. Here $b_{\rm int}$ is the exponent of the scale that makes
the samples integral. For $g=p^2$, with $p$ of exact degree $d$ and
$N\ge2d+1$, the bound is
$2dn-\log_2((2d)!\,|g_{2d}|)$, or $2dn-O(d\log(d+1))$ when
$|g_{2d}|=\Theta(1)$. The weights in this construction use
$Q(i)=2^{2b+2dn}p(i/N)^2$.
\end{remark}

\bibliography{references}

\end{document}